\definecolor{blue-violet}{rgb}{0.54, 0.17, 0.89}
\definecolor{depmap}{HTML}{00007B}
\definecolor{dark-cyan}{HTML}{135579}
\definecolor{magenta}{HTML}{a8264f}
\colorlet{aout}{purple}
\newcommand*{\algout}[1]{{\color{aout}{#1}}}
\newcommand{\newcontent}[1]{{\setlength{\fboxsep}{1.7pt}%
  \colorbox{pink}{\ensuremath{#1}}}
}
\newcommand{\note}[1]{{\if@ACM@timestamp\color{red}[#1]\else\fi}}
\newcommand{\bfparagraph}[1]{\paragraph{\textbf{\textsf{#1}}}}
\newcommand{\eg}{\emph{e.g.}}
\newcommand{\ie}{\emph{i.e.}}
\newcommand{\qbot}{\ensuremath{\varnothing}}
\newcommand{\qfresh}[1][-2]{\text{\larger[#1]\ensuremath{\vardiamondsuit}}}
\newcommand{\qhole}[1][]{\ensuremath{\square{#1}}}
\newcommand{\qclean}[1]{\ensuremath{{#1}\!\setminus\!\qfresh}}
\newcommand{\overlap}{\ensuremath{\mathbin{\phantom{l}\mathclap{\qfresh[-5]}\mathclap{\cap}\phantom{l}}}}
\newcommand{\qsat}[1]{\ensuremath{#1\mathord{*}}}
\newcommand{\WF}[1]{\ensuremath{#1\ \mathsf{ok}}}
\newcommand{\reaches}{\ensuremath{\mathrel{\leadsto}}}
\newcommand{\qexp}[1][]{\ensuremath{\mathrel{\Uparrow\ifstrempty{#1}{}{_{#1}}}}}
\newcommand{\qunif}{\ensuremath{\mathrel{\subseteq?}}}
\newcommand{\flt}{\ensuremath{\varphi}}
\newcommand{\ts}{\ensuremath{\vdash}}
\newcommand{\mts}{\ensuremath{\models}}
\newcommand{\ots}[1]{\ensuremath{\dashv\algout{#1}}}
\newcommand{\cx}[2][]{\ensuremath{\ifstrempty{#1}{#2}{{#2}^{\,#1}}}}
\newcommand{\pcx}[2][]{\cx[#1]{\ifstrempty{#1}{#2}{\left(#2\right)}}}
\newcommand{\G}[1][]{\cx[#1]{\Gamma}}
\newcommand{\dom}[1]{\ensuremath{\mathsf{dom}\,#1}}
\newcommand{\subs}{\ensuremath{\subseteq}}
\newcommand{\subq}[1][:]{\ensuremath{<#1}}
\newcommand{\subt}[1][:]{\ensuremath{%
  \ifstrequal{#1}{:}{\leqslant#1}{\leqslant_{#1}}}}
\newcommand{\subapp}[2][:]{\ensuremath{\prec#1^{\,#2}}}
\newcommand{\synth}{\ensuremath{\mathrel{\Rightarrow}}}
\newcommand{\chek}{\ensuremath{\mathrel{\Leftarrow}}}
\newcommand{\growth}[1]{\ensuremath{\nearrow^{\,\algout{#1}}}}
\newcommand{\maybelang}{\ensuremath{\mathsf{G}_{<:}^{\qfresh}}\xspace}
\newcommand{\polylang}{\ensuremath{\mathsf{F}_{<:}^{\qfresh}}\xspace}
\newcommand{\algolang}{\ensuremath{\mathsf{G}_{\Leftrightarrow}^{\qfresh}}\xspace}
\newcommand{\judgement}[2]{{\textsf{\textbf{#1}}} \hfill #2}
\newcommand{\BOX}[1]{\fbox{\ensuremath{\strut #1}}}
\newcommand{\ty}[2][]{\ensuremath{\ifstrempty{#1}{#2}{{#2}^{\,#1}}}}
\newcommand{\pty}[2][]{\ty[#1]{\ifstrempty{#1}{#2}{(#2)}}}
\newcommand{\ctxvar}[2]{\ensuremath{#1\!:\!#2}}
\newcommand{\tqvar}[1]{\ty[\!\lowercase{#1}]{\uppercase{#1}}}
\newcommand{\ctxtyp}[2]{\ensuremath{\tqvar{#1}\!\!<:\!#2}}
\newcommand{\TBase}[1][]{\ty[#1]{\mathsf{B}}}
\newcommand{\TRef}[3][]{\pty[#1]{\mathsf{Ref}~#2\ifstrempty{#3}{}{..\;\!#3}}}
\newcommand{\TTop}[1][]{\ty[#1]{\mathsf{Top}}}
\newcommand{\TFun}[5][]{\pty[#1]{#2(\ctxvar{#3}{#4})\!\to#5}}
\newcommand{\TAll}[5][]{\pty[#1]{\forall#2[\ctxtyp{#3}{#4}].~#5}}
\newcommand{\TSelf}[3][]{\pty[#1]{\mu#2.\,#3}}
\newcommand{\TSRef}[4][]{\TSelf[#1]{#2}{\TRef{#3}{#4}}}
\newcommand{\tref}[1]{\ensuremath{\mathsf{ref}~#1}}
\newcommand{\tget}[1]{\ensuremath{\mathsf{!}~#1}}
\newcommand{\tput}[2]{\ensuremath{#1~\mathsf{:=}~#2}}
\newcommand{\tlam}[4][]{\ensuremath{\lambda#2(%
  \ifstrempty{#1}{#3}{\ctxvar{#3}{#1}}).~#4}}
\newcommand{\ttlam}[4][]{\ensuremath{\Lambda#2[%
  \ifstrempty{#1}{\tqvar{#3}}{\ctxtyp{#3}{#1}}].~#4}}
\newcommand{\tapp}[2]{\ensuremath{#1~#2}}
\newcommand{\ttapp}[2]{\ensuremath{#1~[#2]}}
\newcommand{\tast}[2]{(\ctxvar{#1}{#2})}
\lstdefinelanguage{PolyRT}{
  morekeywords={%
    abstract,fun,%
    case,char,class,%
    def,else,extends,final,finally,%
    if,import,implicit,%
    match,module,%
    new,null,ref,%
    object,override,%
    package,private,protected,public,%
    public,return,super,%
    this,trait,%
    val,var,%
    while,%
    yield,%
    let,end,%
    alloc,%
    move,%
    map,%
  },%
  mathescape,%
  sensitive,%
  morecomment=*[l]//,%
  morecomment=*[s]{/*}{*/},%
  morestring=[b]",%
  morestring=[b]',%
  literate=%
    {~*}{{\qfresh}}1%
    {~0}{{\qbot}}1%
    {<=}{{\subt}}1%
    {==>}{{\synth}}3%
    {<==}{{\chek}}3%
    {|-}{{\ts}}1%
    {...}{{$\cdots$}}3%
    {=eq}{{$\equiv$}}1%
    {|->}{{$\mapsto$}}1%
    {->}{{$\to$}}1%
    {<-}{{$\leftarrow$}}1%
    {\\A}{{$\forall$}}1%
    {<@}{{\subapp[]{}}}1%
    {\\mu}{{$\mu$}}1%
    {\\ne}{{$\nearrow$}}2%
    {<<}{{$\ll$}}1%
    {\\OVLP}{{$\overlap$}}1%
    {\\E}{{$\exists$}}1%
    {<->}{{$\iff$}}3%
    {!<}{{$\color{red}\nless$}}1%
}[keywords,comments,strings]%
\lstdefinestyle{PolyRT}{%
  basicstyle=\footnotesize\ttfamily,%
  showstringspaces=false,%
  keywordstyle={\color{dark-cyan}\bf\ttfamily},%
  stringstyle=\color{darkgray},%
  commentstyle=\color{dark-cyan},%
  escapebegin=\color{dark-cyan},%
  moredelim=*[is][\color{dark-cyan}]{`}{'},%
  moredelim=[is][\color{red}]{!!}{!!},
  moredelim=*[il][\color{aout}]??,
  moredelim=[is][\color{dark-cyan}\textsuperscript]{^}{^},
  moredelim=[is][\color{red}\textsuperscript]{^!}{^},
  moredelim=[is][\color{aout}\textsuperscript]{^?}{^},
  columns=fixed,%
  fontadjust=true,%
  basewidth=0.5em,%
  aboveskip=2pt,
  belowskip=2pt,
}% 
\newenvironment{lstequation}%
  {\noindent\begin{minipage}{\linewidth}%
   \footnotesize\[\tt}%
  {\]\vspace{-6pt}\end{minipage}}
\begin{document}

\title[Escape with Your Self]{Escape with Your Self: Sound and Expressive Bidirectional Typing with Avoidance for Reachability Types}

\author{Songlin Jia}
\orcid{0009-0008-2526-0438}
\affiliation{%
  \institution{Purdue University}
  \city{West Lafayette}
  \country{USA}
}
\email{jia137@purdue.edu}

\author{Guannan Wei}
\orcid{0000-0002-3150-2033}
\affiliation{%
  \institution{Tufts University}
  \city{Medford}
  \country{USA}
}
\email{guannan.wei@tufts.edu}

\author{Siyuan He}
\orcid{0009-0002-7130-5592}
\affiliation{%
  \institution{Purdue University}
  \city{West Lafayette}
  \country{USA}
}
\email{he662@purdue.edu}

\author{Yuyan Bao}
\orcid{0000-0002-3832-3134}
\affiliation{%
  \institution{Augusta University}
  \city{Augusta}
  \country{USA}
}
\email{yubao@augusta.edu}

\author{Tiark Rompf}
\orcid{0000-0002-2068-3238}
\affiliation{%
  \institution{Purdue University}
  \city{West Lafayette}
  \country{USA}
}
\email{tiark@purdue.edu}

\begin{abstract}

Reasoning about programs in the presence of mutation and aliasing is 
notoriously difficult. 
Rust has popularized lifetime-based ownership tracking in systems programming, 
but its ``shared XOR mutable'' model is fundamentally at odds with 
higher-level functional programming.
Reachability types offer an alternative: they enable safe sharing and 
escape of mutable data by tracking which resources each expression's result can reach.

To track internal reachability within complex object graphs,
reachability types adopt \emph{self-references} that let
components refer to enclosing resources from inside,
just like \texttt{this} pointers in OO languages.
While natural for \emph{declaratively} typing escaping data,
self-references complicate subtyping and furthermore type inference:
variance restricts where self-references may appear, yet useful type
conversions must allow them to vary in controlled ways, which in turn imposes
constraints on inference.
As an undesirable result, prior works require programmers to insert term-level
coercions for even just \emph{avoidance}---avoiding ill-scoped names in types.

With all prior works being declarative,
we investigate \emph{algorithmic} reachability types in this work.
We introduce a refined subtyping relation that permits more flexible usages
of self-references.
We further develop a sound and decidable bidirectional typing algorithm,
implemented and verified in Lean.
The algorithm automatically avoids ill-scoped names in types,
and infers qualifiers via a lightweight unification mechanism.
As a step towards practical reachability programming,
we show that the system is capable of tracking diverse reachability patterns
without explicit coercions in complex Church-encoded datatypes.
 \end{abstract}

\begin{CCSXML}
<ccs2012>
   <concept>
       <concept_id>10011007.10011006.10011008.10011009.10011012</concept_id>
       <concept_desc>Software and its engineering~Functional languages</concept_desc>
       <concept_significance>500</concept_significance>
       </concept>
   <concept>
       <concept_id>10011007.10011006.10011039.10011311</concept_id>
       <concept_desc>Software and its engineering~Semantics</concept_desc>
       <concept_significance>500</concept_significance>
       </concept>
   <concept>
       <concept_id>10011007.10011006.10011008</concept_id>
       <concept_desc>Software and its engineering~General programming languages</concept_desc>
       <concept_significance>500</concept_significance>
       </concept>
 </ccs2012>
\end{CCSXML}

\ccsdesc[500]{Software and its engineering~Functional languages}
\ccsdesc[500]{Software and its engineering~Semantics}
\ccsdesc[500]{Software and its engineering~General programming languages}

\keywords{type systems, reachability types, aliasing, bidirectional typing, avoidance}

\maketitle
\lstMakeShortInline[]@

\section{Introduction} \label{sec:intro}

Mutable state with possible aliasing enables expressive programming
patterns, but is also non-trivial to reason about, leading to memory safety
violations and resource leaks.
For this reason, there has been a surge of interest in language
designs that regulate aliasing or mutability through a type system
\cite{DBLP:series/lncs/ClarkeOSW13, DBLP:conf/oopsla/ClarkePN98,
DBLP:conf/oopsla/TschantzE05, DBLP:conf/oopsla/ZibinPLAE10}.
Rust \cite{DBLP:conf/sigada/MatsakisK14}, the most notable example, has shown
that lifetime tracking based on ownership types is an
eminently practical way of ensuring memory safety in a low-level imperative
system language.
To realize a similar flavor of lifetime reasoning in higher-order languages,
however, adapting Rust's approach would be restrictive: its
``\emph{shared XOR mutable}''
model---permitting a resource to be either shared or mutable, but
not both---prohibits many functional programming idioms that require
capturing and sharing mutable values.

Reachability types
\cite{DBLP:journals/pacmpl/BaoWBJHR21,DBLP:journals/pacmpl/WeiBJBR24,DBLP:journals/pacmpl/0001HJBR25,DBLP:journals/pacmpl/BaoJ0BR25}
are a recent proposal to bring the benefits of
lifetime reasoning and principled aliasing to higher-level languages.
The key idea is to augment the type system with a qualifier that tracks
which resources an expression's result can reach.
Reachability types impose restrictions on sharing and separation
at a fine-grained, per-expression level, rather than as global ownership
invariants.
This permits mutable data to be shared or to escape its defining scope
while retaining static safety guarantees, even in the presence of higher-order
functions and polymorphic types.
Below, we exemplify the programming patterns this approach enables,
including those beyond the \emph{shared XOR mutable} model.

\paragraph{Controlling Lifetimes}

Reachability types naturally support resource-management patterns
where a handle must remain scoped to its defining context.
Consider a @withFile@ combinator that opens a file, passes the handle
to a callback, and closes it upon return.
In the type of @withFile@, the freshness marker~$\qfresh$ signifies that
the file handle is new to the callback and not reachable in other ways:
\begin{lstlisting}
// withFile: \A[A^a^].(path: String) -> (body: File^~*^ -> A^a^) -> A^a^
withFile[Unit]("log.txt")(fun f => write(f)("hello") )
\end{lstlisting}
Soundness requires that the file handle not escape its defining scope
\cite{DBLP:conf/ecoop/XhebrajB0R22,DBLP:conf/oopsla/OsvaldEWAR16,DBLP:journals/corr/abs-2207-03402},
and the type of the callback @body@ enforces this:
the argument @f@ is a fresh file handle and
lies outside the domain of the return type~%
@A^a^@, so no instantiation of @A^a^@ for @withFile@ can allow a call leaking @f@:
\begin{lstlisting}
let f = withFile[File^~*^]("log.txt")(fun f => f )   // !!Error:!! f not allowed in result type
\end{lstlisting}

This guarantee is analogous to Rust's lifetime-based scoping
using higher-ranked trait bounds (HRTB).
The two systems enforce the same invariant here,
but diverge when sharing is introduced.

\paragraph{Sharing and Separation}

A useful pattern is to define several writers over the same file
handle.  For example, consider loggers at different severity levels:
\begin{lstlisting}
withFile[Unit]("log.txt")(fun f =>
  let warn  = fun (msg: String) => write(f)("WARN: " ++ msg)   // warn:  (String => Unit)^f^
  let error = fun (msg: String) => write(f)("ERROR: " ++ msg)  // error: (String => Unit)^f^
  warn("low memory"); error("disk full") )
\end{lstlisting}
Reachability types track that both @warn@ and @error@ reach the handle @f@,
allowing them to safely coexist.
This tracking enables static enforcement of separation:
in the type of a parallel combinator @par@~%
\cite{DBLP:journals/pacmpl/WeiBJBR24,DBLP:journals/pacmpl/BaoJ0BR25},
freshness on both parameters requires that each not be
reachable from the other. The type system accepts the call when
qualifiers are disjoint and rejects it when they overlap, for example:
\begin{lstlisting}
// par: (f: (Unit -> Unit)^~*^) -> (g: (Unit -> Unit)^~*^) -> Unit, requires f and g separate
par(fun _ => warn("starting") )(fun _ => error("failed") )     // !!Error:!! both reach f
\end{lstlisting}

In Rust, a mutable resource cannot be borrowed twice,
so the pattern above requires falling back to dynamic
reference-counting, \eg, using @Rc<RefCell<T>>@ or @Arc<Mutex<T>>@, erasing the
aliasing relationship from the types of @warn@ and @error@.
Guarding against conflicting borrows then requires runtime mechanisms.

\paragraph{Into Stateful Objects}

Sharing and separation reasoning extends to data structures that capture
resources.
In a functional setting, objects can be naturally modeled as products of
closures with shared state, making it
essential to track their common, hidden resources.
The example below encodes a @Counter@ class
\cite{DBLP:journals/pacmpl/WeiBJBR24,DBLP:journals/pacmpl/XuBO24} with a
hidden state @c@ and two mutation methods.
Reachability types track that both methods access~@c@ without referring
to @c@ outside its defining scope.  Instead, a \emph{self-reference} $p$ is
introduced on the result pair using the $\mu$-notation, replacing @c@ in the
qualifiers of the two closures:
\begin{lstlisting}
let Counter = fun (n: Int) =>                //: Int -> \mup.Pair[(Unit -> Unit)^p^,(Unit -> Unit)^p^]^~*^
  let c = ref n
  Pair(fun () => c := !c + 1, fun () => c := !c - 1)  //: Pair[(Unit -> Unit)^c^,(Unit -> Unit)^c^]^c^
\end{lstlisting}
When the returned pair is bound to the variable @ctr@, its projected
components are then understood to reach @ctr@ as well, thus preserving the
sharing invariant:
\begin{lstlisting}
let ctr = Counter(0)                         // ctr: Pair[(Unit -> Unit)^ctr^,(Unit -> Unit)^ctr^]^ctr^
let incr = fst(ctr); let decr = snd(ctr)
par(fun _ => incr() )(fun _ => decr() )      // !!Error:!! both methods reach ctr
\end{lstlisting}

Rust would require dynamic reference-counting for such a case
(cf. Figure 1b, \citet{DBLP:journals/pacmpl/BaoWBJHR21}).

\bfparagraph{From Self-References to Algorithms}

While reachability types have shown great promise for expressiveness,
practical type checking and inference algorithms have not been addressed:
prior works focused on declarative formulations and proofs
of type soundness, with no algorithmic counterpart.
The main source of difficulty is \emph{self-references}---qualifiers that
let components refer to their enclosing resources (like @this@ pointers),
adopted from path-dependent types in DOT \cite{DBLP:conf/oopsla/RompfA16,
DBLP:conf/birthday/AminGORS16}.
As seen in the @Counter@ example, self-references are essential for
tracking escaping data, but they complicate both the subtyping needed
for \emph{avoidance}\footnote{
Avoidance converts types to remove variables about to go out of scope.
In the \lstinline@Counter@ example, this means replacing \lstinline@c@ with the
self-reference $p$ when the pair is returned.
} and the inference of qualifiers (see \Cref{sec2:depfun}).

Prior work
\cite{DBLP:journals/pacmpl/WeiBJBR24,DBLP:journals/pacmpl/0001HJBR25,DBLP:journals/pacmpl/BaoJ0BR25}
treats types and qualifiers separately in subtyping, leaving reachability opaque
and self-references inert; avoidance thus requires term-level coercions.
Challenging their assumption that \emph{the locations a value may reach are fixed}
motivates a refined semantic model in which they depend on the type assigned to
it, a combined type-and-qualifier subtyping that enables avoidance conversions
involving self-references directly in subtyping, and algorithmic procedures for
automated qualifier inference and avoidance built atop this new foundation.

\bfparagraph{Contributions and Organization}

We address the open challenges of type checking/inference with avoidance
for reachability types
\cite{DBLP:journals/pacmpl/WeiBJBR24,DBLP:journals/pacmpl/0001HJBR25,DBLP:journals/pacmpl/BaoJ0BR25}
by (1) proposing the $\maybelang$-calculus
with combined type-and-qualifier subtyping, and (2) developing a sound and
decidable bidirectional typing algorithm \algolang
to the specification of \maybelang,
both fully mechanized in Lean.
Together, these systems enable end-to-end support for expressive
high-level constructs with only modest annotation overhead.

\begin{itemize}[leftmargin=1.75em]
  \item We review the core concept of reachability types, and
    identify key challenges in algorithm design on avoidance and
    qualifier inference (\Cref{sec:motiv}), motivating our design of \maybelang
    and \algolang.

  \item We introduce the $\maybelang$-calculus (\Cref{sec:lang}), featuring
    type-and-qualifier subtyping and qualifier holes in algorithmic contexts.
    These mechanisms enable flexible conversions involving
    self-references and allow type assignment under partially specified function
    qualifiers.

  \item We develop the bidirectional typing algorithm \algolang (\Cref{sec:algo}),
    with mechanized proofs of soundness and decidability relative to \maybelang.
    The algorithm infers qualifiers for expressions and avoids
    ill-scoped qualifiers in typing while tracking escaping data in higher-order settings.

  \item We evaluate \algolang on programs using higher-order
    functions and data structures (\Cref{sec:eval}), demonstrating
    expressiveness for resource and lifetime reasoning, with moderate
    annotation and performance overhead.
    We further discuss alternative designs and possible extensions.
\end{itemize}
We discuss related work in \Cref{sec:related} and conclude the paper
with \Cref{sec:conc}.
In our artifact \cite{artifact}, we provide
the mechanization of \maybelang and \algolang,
and examples for empirical evaluation.%
\footnote{Also available at
\url{https://github.com/TiarkRompf/reachability/tree/main/checking/lean_v2}.} %
\section{Motivation} \label{sec:motiv}

Reachability types
\cite{DBLP:journals/pacmpl/BaoWBJHR21,DBLP:journals/pacmpl/WeiBJBR24,DBLP:journals/pacmpl/0001HJBR25}
concern the use of resources in impure functional languages.
They track resources by \emph{reachability qualifiers} and
enable functions to constrain their arguments by sharing and separation.
While prior works \emph{declaratively} characterize the valid type
assignments of terms,
they do not spell out \emph{algorithmic} steps for checking such
assignments,
nor do they support inferring types and qualifiers for ergonomic programming.
In this section, we present an informal overview of our reachability
type system \algolang with bidirectional typing and avoidance support.

\subsection{Elements of Reachability Qualifiers} \label{sec2:elements}

While exact resource identities---often memory locations---are unavailable
at compile time, reachability types \cite{DBLP:journals/pacmpl/WeiBJBR24,DBLP:journals/pacmpl/BaoJ0BR25}
approximate them statically using
(1)~\emph{variables} for resources with externally visible names,
(2)~\emph{freshness} for unnamed resources, \eg, new allocations,
and (3)~\emph{self-references} for components of unnamed resources 
to refer to their enclosing resources or themselves.

\vspace{-1pt}
\paragraph{Variables}

When the new allocation @ref 42@ is assigned to the variable @b@,
we infer (@`==>'@) all later uses of~@b@ to be @??Ref[Int]^?b^@,%
\footnote{As a convention, we use {\color{aout}purple} to emphasize
results inferred by the algorithm.}
where the qualifier @??b@ signifies that
values resulting from the expression @b@
reach no more resources than the established variable~@b@:

\begin{lstlisting}
let b = ref 42; b            // [b: Ref[Int]^~*^] |- b     ==> ??Ref[Int]^?b^
\end{lstlisting}

\looseness=-1
Qualifiers of functions include their free variables, like @b@ in the
example below, reflecting the fact that at runtime,
the assignments of such free variables are recorded (thus reached) in the closure:

\begin{lstlisting}
fun () => b := !b + 1        // [b: Ref[Int]^~*^] |- <fun> ==> ??(`Unit' -> Unit)^?b^
\end{lstlisting}

\vspace{-1pt}
\paragraph{Freshness}

Expressions marked fresh~@~*@ represent the resources
\emph{unreachable} from existing variables,
so that operating on fresh values causes no interference.
New allocations by @ref@ are always fresh.
Below, both @ref 42@ (later reached by @b@)
and @ref 43@ yield fresh references, guaranteed to be separate:

\begin{lstlisting}
let b = ref 42; ref 43       // [b: Ref[Int]^~*^] |- ref 43 ==> ??Ref[Int]^?~*^     `<- separate from b'
\end{lstlisting}

Freshness markers in argument qualifiers represent \emph{contextual} freshness:
they are unreachable from variables reached in the function,
but may overlap with variables not observed by the function.
Below, @accumulate@ takes a fresh argument @x@.
Thus, parameters to @accumulate@ should be separate from its captured values, 
specifically @acc@, but may still reach other variables:

\begin{lstlisting}
let b = ref 42; let acc = ref 0
let accumulate =
  fun (x: Ref[Int]^~*^) =>      // [..., acc: Ref[Int]^~*^] |- <fun> ==> ??(`(x: Ref[Int]^~*^)' -> Unit)^?acc^
    acc := !acc + !x
accumulate(b)                // Okay:  'b' not observed by 'accumulate'
accumulate(acc)              // !!Error:!! 'acc' not separate from 'accumulate'
\end{lstlisting}

\vspace{-1pt}
\paragraph{Self-References}

Just like @this@ pointers in object-oriented languages allowing
fields to be reached by methods,
self-references allow resources to be reached from inside,
which is crucial to representing escaping data structures.
Below, we model an object @obj@ with a mutable state and a @Pair@ of
@getter@/@setter@ methods.
While the name @b@ is not visible outside, we still need to track that
the methods reach shared resources.
As the @Pair@ becomes the new logical owner of the reference @b@,
we use its self-reference @??p@ introduced by the $\mu$-notation
to qualify the pair components:

\begin{lstlisting}
let obj =                    // [..., obj: ...^~*^] |- obj ==> ??\mup.Pair[(Unit -> Int)^?p^, (Int -> Unit)^?p^]^?obj^
  { let b = ref 42
    Pair(fun () => !b, fun (n: Int) => b := n) }
let getter = fst(obj)        // [..., obj: ...^~*^] |- fst(obj) ==> ??(Unit -> Int)^?obj^  // p mapped to obj
let setter = snd(obj)        // [..., obj: ...^~*^] |- snd(obj) ==> ??(Int -> Unit)^?obj^
\end{lstlisting}

\subsection{Challenges for Algorithmic Reachability Types}
\label{sec2:depfun}

Reachability types refer to term variables in types, making them dependent.
However, without term evaluation in types,
they are faced with the \emph{avoidance problem}
known in bounded existential types~\cite{DBLP:journals/tcs/GhelliP98},
DOT \cite{DBLP:conf/oopsla/RompfA16,DBLP:conf/birthday/AminGORS16},
and module systems
\cite{DBLP:journals/pacmpl/BlaudeauRR25,DBLP:conf/popl/DreyerCH03,lillibridgeTranslucentSumsFoundation,DBLP:journals/jfp/RossbergRD14,DBLP:journals/jfp/Leroy00}:
they need mechanisms to remove names that are about to go out of scope,
\eg, avoiding @b@ in the @obj@ example above.

To deal with avoidance, reachability types allow
(1)~substituting variables with qualifiers
and (2)~converting types using self-references.
Nevertheless, both mechanisms are nontrivial for inference algorithms.
We analyze these mechanisms and their challenges in the rest of this section.

\vspace{-2pt}
\paragraph{Substituting Variables}

As a minimal example of dependent functions,
@identity@ returns its argument @x@ and captures no free variable.
Its result type reaches the bound variable~@??x@:

\begin{lstlisting}
let identity = fun (x: Ref[Int]^~*^) => x  // [...   ] |- <fun> ==> ??(`(x: Ref[Int]^~*^)' -> Ref[Int]^?x^)^?~0^
\end{lstlisting}
After function application, @x@ is not defined in the scope and thus can no
longer occur. To preserve reachability tracking,
we can substitute @x@ in the type with parameter qualifiers,
@b@ or @~*@, respectively:

\begin{lstlisting}
let b = ref 42; identity(b)  // [..., b: Ref[Int]^~*^] |- <app> ==> ??Ref[Int]^?b^      // Ref[Int]^x^ [b/x]
identity(ref 42)             // [...              ] |- <app> ==> ??Ref[Int]^?~*^      // Ref[Int]^x^ [~*/x]
\end{lstlisting}
Such \emph{dependent application} achieves lightweight reachability
polymorphism~\cite{DBLP:journals/pacmpl/WeiBJBR24} ergonomically via function
parameters~\cite{Rytz:175546,DBLP:journals/toplas/BoruchGruszeckiOLLB23},
in addition to explicit polymorphism via separate quantification.

Variables bound by @let@ can be substituted similarly.
Below, we replace @b@ by @~*@ in the result type:

\begin{lstlisting}
let b = ref 42               // [...              ] |- <let> ==> ??Ref[Int]^?~*^      // Ref[Int]^b^ [~*/b]
b                            // [..., b: Ref[Int]^~*^] |- b     ==> ??Ref[Int]^?b^
\end{lstlisting}

\vspace{-2pt}
\paragraph{Avoidance Conversion by Self-References}

Substitution with freshness is restricted within types.
Take \emph{the escaping closure} below for example:
the function expression returns the captured reference~@b@.
When typing the overall let binding, the bound variable @b@ can no longer occur
in the resulting type, but substituting both~@b@'s with @~*@ would change the
type to mean returning new references:

\begin{lstlisting}
let b = ref 42               // [...              ] |- <let> !!cannot be!! (Unit -> Ref[Int]^?~*^)^?~*^
fun () => b                  // [..., b: Ref[Int]^~*^] |- <fun> ==>       ??(Unit -> Ref[Int]^?b^)^?b^
\end{lstlisting}
To prevent such an unintended change, we note that substitution with @~*@ cannot
take place inside types: it is the substitution in the function
return qualifier that causes the change.

To avoid the internal occurrence of @b@ without resorting to substitution,
we approximate it using self-references.
We convert~(@`<<'@) the type of the @let@-body to use the self-references @f@,%
\footnote{In the function type \lstinline@f(Unit) -> ...@,
we define \lstinline@f@ as its self-reference.
We omit $\mu$-notations for functions.}
so that substituting @b@ is only required for the top-level qualifier:

\begin{lstlisting}
let b = ref 42               // [...              ] |- <let> ==>       (f(Unit) -> Ref[Int]^f^)^?~*^
fun () => b                  // [..., b: Ref[Int]^~*^] |- <fun> ==> ... << ??(f(Unit) -> Ref[Int]^?f^)^?b^
\end{lstlisting}

To summarize, substitutions on bound variables are subject to the restriction
that they must be \emph{either {non-fresh}, or {non-deep}}.
This restriction is already seen in prior work
\cite{DBLP:journals/pacmpl/WeiBJBR24,DBLP:journals/pacmpl/0001HJBR25}
from their declarative systems.  Algorithmically,
when a deep occurrence needs substitution involving freshness, it has to be
first removed by avoidance conversions using self-references.

\bfparagraph{Algorithmic Challenges}

Both mechanisms are nontrivial to implement.
Substitution requires \emph{inferring precise parameter qualifiers}
for precise substitution results.
Avoidance conversion requires devising a \emph{type conversion scheme} that 
removes undesired variables and is sound with respect to a subtyping relation.

\subsubsection{Inferring Qualifiers for Parameters}

In bidirectional typing systems \cite{DBLP:journals/csur/DunfieldK21} without
reachability, when applying a function like @identity@, the parameter is
checked (@`<=='@) against the argument type:

\begin{lstlisting}
let b = ref 42; identity(b)  // [..., identity: Ref[Int] -> Ref[Int], ...] |- b <== Ref[Int]
\end{lstlisting}

For reachability types,
we need to adopt a hybrid checking/inference mode (@`<=='^?==>^@) similar to
that seen in refinement types \cite{DBLP:conf/pldi/PolikarpovaKS16},
checking the type but inferring the qualifiers:

\begin{lstlisting}
let b = ref 42; identity(b)  //[...,identity:(x:Ref[Int]^~*^)-> Ref[Int]^x^,...] |- b <== Ref[Int]^==>^^?b^
\end{lstlisting}
Although inferring @??b@ in @identity(b)@ is straightforward,
it gets more complicated when self-references are involved in argument types,
which introduces constraints from typing the parameter.

\paragraph{Constraints from Self-References}

Below, @callGet@ calls its argument @get@ to retrieve what it
captures:

\begin{lstlisting}
let callGet =  fun (get: ( `g'(Unit)  ->  Ref[Int]^g^ )^~*^) => get() //: ??Ref[Int]^?get^
\end{lstlisting}
With the result of @get@ qualified by its self-reference @g@ and the result of 
@callGet@ qualified by @get@, both results may reach the same resource as @get@.
Such constraints among reachability can be alternatively understood using
an explicit qualifier quantification, illustrated below as @callGet2@:

\begin{lstlisting}
let callGet2 = fun [`q <: ~*'](get: (Unit -> Ref[Int]^q^)^q^) => get() //: ??Ref[Int]^?q^
\end{lstlisting}

Invoking @callGet2@ requires instantiating the qualifier variable
@`q'@ to satisfy the typing:

\begin{lstlisting}
let b = ref 42; callGet2`[b]'(fun _ => b)  //[...,b:Ref[Int]^~*^] |- <fun> <== (Unit -> Ref[Int]^q^)^q^ [b/q]
\end{lstlisting}
This is analogous to instantiating type variables for type polymorphism,
whose inference is known to be nontrivial
\cite{DBLP:journals/toplas/PierceT00,DBLP:conf/popl/OderskyZZ01,DBLP:conf/icfp/DunfieldK13,DBLP:journals/pacmpl/ZhaoOS19,DBLP:conf/ecoop/ZhaoO22,DBLP:journals/pacmpl/CuiJO23}.
While \algolang does not deal with type instantiations,
it infers parameter qualifiers for invoking @callGet@,
achieving similar expressiveness ergonomically:

\begin{lstlisting}
let b = ref 42; callGet(fun _ => b)      //[...,b:Ref[Int]^~*^] |- <fun> <== (g(Unit) -> Ref[Int]^g^)^==>^^?b^
\end{lstlisting}

We elaborate on our approach to qualifier inference in \Cref{sec2:inferqual}.

\subsubsection{Self-References Conversions for Avoidance}
\label{sec2:avoidintro}

As analyzed earlier, typing the \emph{escaping closure example} requires the
following type conversion to remove the variable @b@ from inside the type:

\begin{lstequation}
  [\cdots,\, b: Ref[Int]^{\qfresh}] \quad \ts \quad
  {\color{dark-cyan}(Unit \to Ref[Int]^{\color{red}b})^b}
  \quad \ll \quad
  {\color{dark-cyan}({\color{red}f}(Unit) \to Ref[Int]^{\color{red}f})^b}
\end{lstequation}
With @b@ included in the function qualifier, the self-reference @f@
then replaces~@b@ in the result qualifier.
For such conversions, we need to ensure that they are sound,
and that they can be applied generally.

\paragraph{Soundness}

Prior work
\cite{DBLP:journals/pacmpl/WeiBJBR24,DBLP:journals/pacmpl/0001HJBR25}
could not justify the conversion resulting in a supertype,
largely due to the fact that their subtyping relations ($\subq$) are
designed without top-level qualifiers:

\begin{lstequation}
  [\cdots,\, b: Ref[Int]^{\qfresh}] \quad \ts \quad
  {\color{dark-cyan}Unit \to Ref[Int]^{b}}
  \quad {\color{red}\not{\color{black}<:}} \quad
  {\color{dark-cyan}{f}(Unit) \to Ref[Int]^{f}}
\end{lstequation}
Unable to see that \emph{the function qualifier includes} @b@
in subtyping, prior works require $\eta$-expanding escaping closures
for \emph{retyping} them.
Manifested in programming with data structures like pairs,
this necessitates term-level coercions like @conv@ below,%
\footnote{Adapted from Section 8.1, \citet{DBLP:journals/pacmpl/WeiBJBR24}.
We elaborate further in \Cref{sec:vsprior}.}
making it unideal for both theory and practice:

\begin{lstlisting}
def conv[A^a^, B^b^](p: Pair[A^a^, B^b^]^{a,b,~*}^): \mup.OPair[A^p^, B^p^]^{a,b}^ =
  OPair(fst(p), snd(p))  // reconstruct Pair to use self-references in types (OPair)
\end{lstlisting}
To justify seamless type conversions, we thus need to improve the
notion of subtyping.
More than just reinstating missing qualifiers,
we need to deal with the variance of self-references.

\paragraph{Generalizing for Variance}

The \emph{escaping closure example} requires avoidance in function result qualifiers,
which is the only position where prior works
\cite{DBLP:journals/pacmpl/WeiBJBR24,DBLP:journals/pacmpl/0001HJBR25}
allow a self-reference.
In contrast, variables about to go out of scope may occur in places with
different variances and depths:

\begin{lstlisting}
let b = ref 42                    // fresh 'b' about to go out of scope
fun () => b                       // covariant:      Unit -> Ref[Int]^b^
fun (x: Ref[Int]^b^) => !x          // contravariant:  Ref[Int]^b^ -> Int
ref b                             // invariant:      Ref[Ref[Int]^b^]
fun (f: Ref[Int]^b^ -> Int) => f(b)  // deep covariant: (Ref[Int]^b^ -> Int) -> Int
\end{lstlisting}
Generally avoiding them requires a systematic scheme
to use self-references.
We discuss our avoidance algorithm in \Cref{sec2:avoid}, and
its soundness foundation---our new subtyping---in
\Cref{sec2:subtyping}.

\subsection{Self-Reference Conversions for Avoidance}
\label{sec2:avoid}

At a high-level, our algorithmic avoidance involves
replacing variables in covariant positions by self-references
and simply removing the contravariant ones.
We detail our solution as follows.

\paragraph{Covariant Occurrences}

For the base case from \Cref{sec2:avoidintro},
we replace @`b'@ in the covariant result qualifier with the
self-reference @??f@, and
we add @??b@ in the function qualifier if it is not yet included:

\begin{lstlisting}
let fn = { let b = ref 42    // fn    ==> ??(f(Unit) -> Ref[Int]^?f^)^?fn^
           fun () => b    }  // <fun> ==> (Unit -> Ref[Int]^b^)^b^  <<  (f(Unit) -> Ref[Int]^?f^)^b,^^?b^
\end{lstlisting}
When the escaping closure is later named @fn@ and applied, the self-reference
@f@ becomes another bound variable needing substitution. We replace it with the
qualifier of @fn@, \ie, @fn@ itself:

\begin{lstlisting}
fn()                         // <app> ==> ??Ref[Int]^?fn^                      // Ref[Int]^f^ [fn/f]
\end{lstlisting}

\paragraph{Contravariant Occurrences}

When the unwanted variable occurs in contravariant positions, we
remove it, analogous to replacing ill-scoped type
variables with the bottom type in algorithmic System $F_{<:}$
\cite{DBLP:journals/toplas/PierceT00}.
Illustrated below, we change the argument type from
@`Ref[Int]'^b^@ to @`Ref[Int]'^?~0^@:

\begin{lstlisting}
let fn = { let b = ref 42
           fun (x: Ref[Int]^b^) => !x }  // <fun> ==> (Ref[Int]^b^ -> Int)^b^  <<  (Ref[Int]^?~0^ -> Int)^b^
\end{lstlisting}

Such a conversion renders @fn@ non-callable, but it is necessary in this
specific example:
the function is originally defined to \emph{receive no more references than}
@b@, and there is no qualifier other than @b@ itself that can keep this
invariant.
In practice, a dummy function type like that of @fn@ should be a signal for
an overly conservative type annotation somewhere.

\paragraph{Invariant Occurrences}

Aligned with recent development
\cite{DBLP:journals/pacmpl/0001HJBR25,DBLP:journals/pacmpl/GaoP25},
we adopt the notion of \emph{dual-component} references, with
a contravariant \emph{put} type and a covariant \emph{get} type;
types with a single invariant referent are then seen as shorthands for two
components being the same.
To avoid @b@ in the example below, we consider the two components
separately: we remove @b@ in the \emph{put} qualifier,
replace @b@ with the self-reference @??h@ in the \emph{get} qualifier,
and add @b@ to the top-level reference qualifier:

\begin{lstlisting}
let r = { let b = ref 42     // r     ==> ??\muh.Ref[Ref[Int]^?~0^..Ref[Int]^?h^]^?r^
          ref b          }   // ref b ==> Ref[Ref[Int]^b^]^~*^  <<  \muh.Ref[Ref[Int]^?~0^..Ref[Int]^?h^]^~*,^^?b^
\end{lstlisting}
To read the reference @r@ resulting from escaping, similar to when calling functions,
we need to replace @h@ with the actual qualifier @r@, signifying that the
extracted result is internal to @r@:

\begin{lstlisting}
!r                           // !r    ==> ??Ref[Int]^?r^                          // Ref[Int]^h^ [r/h]
\end{lstlisting}
By necessity, @r@ is made \emph{read-only} with the put qualifier @??~0@
to accept \emph{no more assignment than} @b@.

\paragraph{Deep Occurrences}

Deep, covariant uses of the unwanted variable are replaced with
the self-reference of the outermost function/reference.
Exemplified below, we replace @b@ inside the argument type with @??f@.
To apply this escaped @fn@,
@f@ in the argument type should first be replaced with @fn@:

\begin{lstlisting}
let fn = { let b = ref 42   // <fun> ==> ((Ref[Int]^b^ -> Int) -> Int)^b^ << (f(Ref[Int]^?f^ -> Int) -> Int)^b,^^?b^
           fun (get: Ref[Int]^b^ -> Int) => get(b) }
fn(fun b => !b)             // <fun> <== (Ref[Int]^?fn^ -> Int)^ ==>^^?fn^     // (Ref[Int]^f^ -> Int) [fn/f]
\end{lstlisting}

\subsection{Type-and-Qualifier Subtyping for Avoidance}
\label{sec2:subtyping}

To justify avoidance conversions,
we propose a combined form of type-and-qualifier subtyping,
with self-references enabled in covariant positions
for both expressiveness and soundness.

\paragraph{Combined Subtyping}

In our declarative specification \maybelang,
we use the symbol $\subt$ for the extended subtyping on qualified types.
The base covariant conversion case can be formalized as:

\begin{lstequation}
  [\cdots,\, b: Ref[Int]^{\qfresh}] \quad \ts \quad
  {\color{dark-cyan}(Unit \to Ref[Int]^{b})^b}
  \quad \subt \quad
  {\color{dark-cyan}({f}(Unit) \to Ref[Int]^{\color{red}f})^b}
\end{lstequation}

With both sides agreeing on the outermost qualifier @b@, this fact enables
proving the subtyping relation, \emph{packing} variable names into 
self-references.
\maybelang can also justify \emph{unpacking} that returns self-references into
variables, making post-avoidance types compatible with operations before:

\begin{lstequation}
  [\cdots,\, fn: (f(Unit) \to Ref[Int]^f)^{\qfresh}] \quad \ts \quad
  {\color{dark-cyan}({f}(Unit) \to Ref[Int]^{f})^{fn}}
  \quad \subt \quad
  {\color{dark-cyan}(Unit \to Ref[Int]^{\color{red}fn})^{fn}}
\end{lstequation}

\paragraph{Supporting Growing Qualifiers}

Qualifiers in subtyping may not fully agree, for example:

\begin{lstequation}
  [\cdots,\, a: Ref[Int]^{\qfresh},\, b: Ref[Int]^{\qfresh}] \quad \ts \quad
  {\color{dark-cyan}(Unit \to Ref[Int]^{b})^{a}}
  \quad \subt \quad
  {\color{dark-cyan}({f}(Unit) \to Ref[Int]^{f})^{a,b}}
\end{lstequation}
In \maybelang, this is still a valid subtyping relation, but requires a
transitivity step changing the qualifier
of the subtype (left) from @`a'@ to @`a,b'@,
via the type @`(Unit-> Ref[Int]^b^)^a,b^'@;
it is then upon the typing algorithm to find such necessary intermediate steps
to justify subtyping relations (see \Cref{sec2:inferqual}).

Moreover, with qualifiers growing in subtyping relations,
self-references in the two sides may carry differing interpretations.
Illustrated below, dictated by function qualifiers,
the self-reference @f@ in the subtype (left) can reach
at most @`a'@, but in the supertype (right) it may additionally reach~@`b'@:

\begin{lstequation}
  [\cdots,\, a: Ref[Int]^{\qfresh},\, b: Ref[Int]^{\qfresh}] \quad \ts \quad
  {\color{dark-cyan}(f(Unit) \to Ref[Int]^{f})^{a}}
  \quad \subt \quad
  {\color{dark-cyan}(f(Unit) \to Ref[Int]^{f})^{a,b}}
\end{lstequation}
Thus, in subtyping, self-references are \emph{covariant}.
In \maybelang, we simply require that self-references occur only in
covariant positions in types.
This aligns with our algorithmic avoidance scheme, which never involves
self-references contravariantly.

\subsection{Inferring Function Qualifiers} \label{sec2:inferqual}

Inference of qualifiers is most challenging for functions.
A function's qualifier is determined primarily by
(1) \emph{observations} collected from the function body,
and (2) constraints induced by subtyping relations that involve self-references.
Collecting observations is necessary for new function definitions, whereas
resolving constraints is required when we use existing functions in different
types.
We detail the two sources as follows.

\paragraph{Collecting Observations}

When evaluating expressions, resources may be transiently involved without
necessarily being reached in the resulting value.
Such \emph{observations} naturally include free variables of expressions.
In the example below, the expression @!b@ observes @??b@,
as marked beside the context on the right:

\begin{lstlisting}
let b = ref 42; !b     // [..., b: Ref[Int]^~*^]^?b^ |- !b ==> ??Int^?~0^
\end{lstlisting}
Expressions may also observe resources indirectly without referring to free
variables. In the example below, the nested reference @c@ allows the expression
to observe not only @??c@, but also @??b@:

\begin{lstlisting}
let c = ref b; !(!c)   // [..., c: Ref[Ref[Int]^b^]^~*^]^?b,c^ |- !(!c) ==> ??Int^?~0^
\end{lstlisting}

In \algolang, while typing expressions, we additionally collect their
observations.
Function qualifiers then include their body observations,
governing the resources they may access when invoked.

\begin{lstlisting}
let b = ref 42; let c = ref b
fun () => !(!c)        // [...]^?b,c^ |- <fun> ==> ??(Unit -> Int)^?b,c^
\end{lstlisting}

\paragraph{Constraints from Subtyping}

Given an existing function, subtype checking (\subt[]) is required when
we need to convert it into a different type, as exemplified below:

\begin{lstequation}
  [\cdots,\, a: Ref[Int]^{\qfresh},\, b: Ref[Int]^{\qfresh}] \quad \ts \quad
  {\color{dark-cyan}(Unit \to Ref[Int]^{b})^{a}}
  \quad \subt[] \quad
  {\color{dark-cyan}({f}(Unit) \to Ref[Int]^{f})^{\color{aout}???}}
\end{lstequation}
While the qualifier after conversion ($\scriptstyle\tt\color{aout}???$) needs
inference,
we know it should be at least @`a'@ and satisfy typing constraints from
the self-reference @f@ used in the type.

In \algolang, we take inspiration from the \emph{eager instantiation} approach
\cite{DBLP:conf/icfp/DunfieldK13} for higher-ranked polymorphic type inference.
We make the initial guess for $\scriptstyle\tt\color{aout}???$ to be
$\scriptstyle\tt\color{aout}a,\qhole$, where the \emph{qualifier hole} $\qhole$
stands for the existential reachability to be inferred,
analogous to unification variables for type inference
but occurring only in typing contexts.
In the typing context for checking function subtyping,
we use this initial guess to qualify the self-reference @f@:

\begin{lstequation}
  [\cdots,\, a: Ref[Int]^{\qfresh},\, b: Ref[Int]^{\qfresh},
    \, f: (\cdots)^{\color{aout}a,\qhole}] \quad \ts \quad
  {\color{dark-cyan}\phantom{Unit -> Ref} b }
  \quad \subq[] \quad
  {\color{dark-cyan}\phantom{Unit -> Ref} f}
\end{lstequation}

During the checking procedure,
the qualifier subsumption obligation @`b' < `f'@ poses a constraint to the
reachability of @f@.
To eagerly satisfy this constraint, we insert @??b@ into the hole
attached to @f@:

\begin{lstequation}
  [\cdots,\, a: Ref[Int]^{\qfresh},\, b: Ref[Int]^{\qfresh},
    \, f: (\cdots)^{\color{aout}a,b,\qhole}] \quad \ts \quad
  {\color{dark-cyan}\phantom{Unit -> Ref} b }
  \quad \subq[] \quad
  {\color{dark-cyan}\phantom{Unit -> Ref} f}
\end{lstequation}
When completing the function subtype checking,
the guess has been updated to
$\scriptstyle\tt\color{aout}a,b,\qhole$.
We then \emph{seal} the hole and use @??a,b@ as the final answer to the
qualifier inferred in subtype checking:

\begin{lstequation}
  [\cdots,\, a: Ref[Int]^{\qfresh},\, b: Ref[Int]^{\qfresh}] \quad \ts \quad
  {\color{dark-cyan}(Unit \to Ref[Int]^{b})^{a}}
  \quad \subt[] \quad
  {\color{dark-cyan}({f}(Unit) \to Ref[Int]^{f})^{\color{aout}a,b}}
\end{lstequation}

\subsection{Summary}

Self-references in reachability types are crucial for encoding data types,
whereas their presence in prior work hinders devising an avoidance conversion
scheme and complicates qualifier inference (\Cref{sec2:depfun}).  In this work,
we first present a refined declarative calculus \maybelang (\Cref{sec:lang})
that combines type-and-qualifier subtyping and enables self-references in all
covariant positions for expressiveness and soundness
(\Cref{sec2:subtyping,sec:subtyping}).  Building on this theoretical foundation, we
develop the algorithmic solution \algolang (\Cref{sec:algo}) that automatically
avoids ill-scoped variables by self-references in a polarity-guided fashion
(\Cref{sec2:avoid,sec4:avoidance}) and infers qualifiers for all expressions,
including new function definitions and their type conversions
(\Cref{sec2:inferqual,sec4:subtype,sec4:bidirectional}). %
\section{\maybelang: Declarative Typing Specification} \label{sec:lang}

We present the formal theory and metatheory of $\maybelang$, a refined variant
of the polymorphic reachability type system $\polylang$
\cite{DBLP:journals/pacmpl/WeiBJBR24}.
To serve as a specification for our algorithmic development, $\maybelang$
adopts shallow, dual-component reference types \cite{DBLP:journals/pacmpl/0001HJBR25},
introduces combined type-and-qualifier subtyping (\Cref{sec:subtyping}),
and uses algorithmic contexts with qualifier holes.
The type system and its soundness proofs are fully mechanized in Lean 4.
In the following sections, we detail the design of $\maybelang$, and briefly
discuss its metatheory in \Cref{sec3:meta}.

\subsection{Syntax Definitions and Well-Formedness}

\begin{figure}[t]
\footnotesize \makeatother
\judgement{Syntax}{\BOX{\maybelang}}\vspace{-7pt}
  \[\begin{array}{l@{\quad}l@{\quad}l@{\hspace{4em}}l}
    x,y,z     &\in & \mathsf{Var}                                                                            & \text{Variables} \\
    f,g,h     &\in & \mathsf{Var}                                                                            & \text{Self Variables} \\
    X,Y,Z     &\in & \mathsf{TVar}                                                                           & \text{Type Variables} \\
    \qhole{},\qhole{?} &\in & \newcontent{\mathsf{QHole}}                                                    & \text{(Optional) Qualifier Holes} \\
    t         &::= & c \mid x \mid \tref{t} \mid\ \tget{t} \mid \tput{t}{t} \mid \tapp{t}{t} \mid \ttapp{t}{Q} \mid \newcontent{\tast{t}{Q}} \mid & \text{Terms} \\
              &    & \tlam[Q^?]{f}{x}{t} \mid \ttlam[Q^?]{f}{X}{t}                                           & \text{(with optional domain types)} \\[1ex]
    p,q,r,s   &\in & \mathcal{P}_{\mathsf{fin}}(\mathsf{Var} \uplus \{ \qfresh \})                           & \text{Reachability Qualifiers} \\
    P,Q,R,S   &::= & \ty[q]{T}                                                                               & \text{Qualified Types} \\
    T,U,V,W   &::= & \TBase \mid \newcontent{\TSRef{h}{Q}{Q}} \mid \TFun{f}{x}{Q}{Q} \mid                    & \text{Types} \\
              &    & X \mid \TTop \mid \TAll{f}{X}{Q}{Q} \\[1ex]
    \Gamma    &::= & \varnothing \mid \G, \ctxvar{x}{Q} \mid \G, \ctxtyp{X}{Q} \mid \newcontent{\G, \ctxvar{f}{\ty[q,\qhole{?}]{T}}} & \text{Typing Environments} \\
    \flt      &\in & \mathcal{P}_{\mathsf{fin}}(\mathsf{Var})                                                & \text{Observations} \\
    \end{array}\] \\
\textbf{\textsf{Qualifier Shorthands}} \hfill
  $\begin{array}{l@{\hspace{7em}}}
    p,q := p \cup q \qquad
    x := \{x\} \qquad
    \qfresh :=\{\qfresh\} \qquad
    \qfresh{q} := \{\qfresh\}\cup q
  \end{array}$ \\[1ex]
\judgement{Substitution, Reachability and Overlap}{
    \BOX{q[p/x]}
  \ \BOX{{\color{gray}\G\vdash}\,x \reaches x}
  \ \BOX{{\color{gray}\G\vdash}\, \qsat{q}}
  \ \BOX{{\color{gray}\G\vdash}\,p \overlap q}}
  \[\begin{array}{l@{\quad}l@{\qquad}l@{\quad}l}
    \text{Qualifier Substitution} & q[p/x] := q{\setminus\{x\}}\cup\;p,\,\text{if~} x\in q; &
    q[p/x] := q, & \text{otherwise}. \\[1.2ex]
    \text{Reachability Relation} & {\color{gray}\G\vdash}\, x \reaches y \Leftrightarrow  x : T^{q,y} \in \Gamma &
    \text{Variable Saturation} & {\color{gray}\G\vdash}\, \qsat{x} := \left\{\, y \mid x \reaches^* y\, \right\} \\[1.1ex]
    \text{Qualifier Saturation} & {\color{gray}\G\vdash}\, \qsat{q} :=  \bigcup_{x\in q} \qsat{x} &
    \text{Qualifier Overlap} & {\color{gray}\G\vdash}\,p \overlap q  := \qfresh{(\qsat{p} \cap \qsat{q})}
  \end{array}\]
\vspace{-1em}
\caption{Syntax definitions of $\maybelang$ with qualifier shorthands
and qualifier operations. Sometimes the context $\G$ is implicit (in {\color{gray}gray}).
Adapted from \citet{DBLP:journals/pacmpl/WeiBJBR24}, \maybelang introduces 
qualifier holes in typing contexts and a type ascription term, and
adopts dual-component reference types \cite{DBLP:journals/pacmpl/0001HJBR25}.
We emphasize them as \newcontent{\text{shaded}}.
} \label{fig:decdefs}
\end{figure} 
\Cref{fig:decdefs} presents the syntax of $\maybelang$, which is based on
System~$F_{<:}$ with higher-order references.
In \Cref{fig:closed}, we provide the well-formedness
definitions for the syntax of $\maybelang$.

\paragraph{Terms.}
Terms include constants, variables, references and operations, function
abstractions and applications, and type abstractions and applications. For
clarity, we use distinct metavariables for ordinary variables (\ie, $x, y, z$),
functions (\ie, $f, g, h$), and type variables (\ie, $X, Y, Z$). Conventionally,
ordinary variables such as $x$ may represent functions, but not the other
way around. For function abstraction $\tlam[Q]{f}{x}{t}$, we read $f$ as the
self-reference of the function, and $x$ as the name of the argument;
the argument type $Q$ is optional.
Similarly for $\ttlam[{\ty[q]{T}}]{f}{X}{t}$, we
read $f$ as the self-reference, $X\!<: T$ as the type variable quantification 
and $x\!<: q$ as the qualifier quantification.
We also add type
ascription $(t: Q)$ to support the bidirectional typing algorithm (cf. \Cref{sec:algo}).

\paragraph{Qualifiers and Types.}
Qualified types ($Q$) consist of a type ($T$) paired with a qualifier ($q$).
Qualifiers are sets of variables that may include the freshness marker
$\qfresh{}$, denoting fresh values without names.
Types include the base type \TBase, references, functions, type variables, the
\TTop, and universal types.
The reference types are dual-component \cite{DBLP:journals/pacmpl/0001HJBR25}:
the first $Q$ describes the type for putting, and the second describes
getting; we simply write $\TRef{Q}{}$ if the two components are the same.

References, functions, and universal types also include self-references $h$ or
$f$ in their types, and we constrain the occurrences of such self-references.
As motivated in \Cref{sec2:subtyping} and illustrated by \rulename{cfun} in
\Cref{fig:closed}, duplicated here:

{ \footnotesize \setlength{\afterruleskip}{\smallskipamount}
  \infrule[\ruledef{cfun}{c-fun}]{
    \G,\ctxvar{f}{\TTop[\qfresh]}\ts \ty[p]{T} \qquad
    \G,\ctxvar{f}{\TTop[\qfresh]},\ctxvar{x}{\ty[p]{T}} \ts\ty[q]{U} \qquad
    \newcontent{f \notin^+ T} \qquad
    \newcontent{f \notin p} \qquad
    \newcontent{f \notin^- U}
  }{
    \G\ts \TFun{f}{x}{\ty[p]{T}}{\ty[q]{U}}
  }
}
The self-reference $f$ must
not appear in covariant positions of the domain $T$ (\ie, $f \notin^+ T$),
contravariant positions of the codomain $U$ (\ie, $f \notin^- U$), and
the domain qualifier $p$.
These polarity constraints are crucial to the soundness of our subtyping
extension.

\paragraph{Algorithmic Contexts.}
We use the metavariable $\G$ to denote typing contexts. In $\maybelang$, these
contexts are algorithmic \cite{DBLP:conf/icfp/DunfieldK13}: their entries are ordered,
allowing insertion and deletion in the back.

Self-reference entries in contexts may contain qualifier holes, \eg,
$\ctxvar{f}{\ty[q,\qhole]{T}} \in \G$.
For other entries, qualifiers must be fully specified, free of holes.
Definitions of well-formed contexts are given in \Cref{app:typing}, where
all entries must be closed under their preceding contexts.

\paragraph{Qualifier Shorthands and Operations.}
For presentation purposes, we define qualifier shorthands in \Cref{fig:decdefs},
allowing notation such as $x$ for singleton sets $\{x\}$ and $p, q$ for their
union $p \cup q$.
We also define qualifier substitution and overlap, both used in application
rules. Substitution on types is standard and thus omitted.
Qualifier overlap is defined as the intersection of transitive reachability closures,
and is used in the separation judgment \rulename{ffresh}.

\subsection{Typing} \label{sec:typing}

\begin{figure}[t]
\footnotesize\makeatother
\typicallabel{t} \setlength{\afterruleskip}{\smallskipamount}

\judgement{Term Typing (Selection)}{\BOX{\G[\flt] \ts t : Q}} \\[-1em]
\begin{tabular}{@{}p{.29\linewidth}@{}p{.32\linewidth}@{}p{.39\linewidth}@{}}
  \infrule[\ruledef{tcst}{t-cst}]{
    c \in \TBase
  }{
    \G[\flt] \ts c : \TBase[\qbot]
  }
  &
  \infrule[\ruledef{tvar}{t-var}]{
    \ctxvar{x}{\ty[q]{T}} \in \G \andalso
    x \in \flt
  }{
    \G[\flt] \ts x : \ty[x]{T}
  }
  &
  \infrule[\ruledef{tsub}{t-sub}]{
    \G[\flt] \ts t : P \phantom{mn}
    \G\ts P \subt \ty[q]{T} \phantom{mn}
    q\subs\qfresh\flt
  }{
    \G[\flt]\ts t : \ty[q]{T}
  }
  \\
  \infrule[\ruledef{tref}{t-ref}]{
    \phantom{text} \\
    \G[\flt]\ts t : \ty[q]{T} \andalso
    \qfresh \notin q
  }{
    \G[\flt]\ts \tref{t} : \TRef[\!\qfresh]{\ty[q]{T}}{}
  }
  &
  \infrule[\ruledef{tderef}{t-get}]{
    \G[\flt]\ts t : \TSRef[p]{h}{P}{\ty[q]{T}} \\
    q\subs h,\flt \andalso
    \qfresh\notin p \lor h \notin T
  }{
    \G[\flt]\ts \tget{t} : \ty[q]{T} [p/h]
  }
  &
  \infrule[\ruledef{tassign}{t-put}]{
    \G[\flt]\ts t_1 : \TSRef[p]{h}{P}{Q} \\
    \G[\flt]\ts t_2 : P
  }{
    \G[\flt]\ts \tput{t_1}{t_2} : \TBase[\qbot]
  }
\end{tabular} \\
\begin{tabular}{@{}p{.5\linewidth}@{}p{.5\linewidth}@{}}
  \infrule[\ruledef{tabs}{t-abs}]{
    \phantom{text} \\
    \pcx[q,f,x]{\G,\ \ctxvar{f}{\TTop[q]},\ \ctxvar{x}{\ty[p]{T}}}
      \ts t : Q \\
    p \subs \qfresh{q} \andalso
    q \subs \flt
  }{
    \G[\flt] \ts \tlam{f}{x}{t} : \TFun[q]{f}{x}{\ty[p]{T}}{Q}
  }
  &
  \infrule[\ruledef{tapp}{t-app}]{
    \G[\flt]\ts t_1 : \TFun[q]{f}{x}{\ty[p]{T}}{\ty[r]{U}} \\
    \G[\flt] \ts t_2 : \ty[s]{T} \andalso
    \G[\flt] \ts s \subapp{q} p \\
    r \subs \qfresh \flt,f,x \phantom{mn}
    \qfresh \notin s \lor x \notin U \phantom{mn}
    \qfresh \notin q \lor f \notin U
  }{
    \G[\flt]\ts \tapp{t_1}{t_2} : \ty[r]{U}[s/x, q/f]
  }
  \\
  \infrule[\ruledef{ttabs}{t-tabs}]{
    \phantom{text} \\
    \pcx[q,f,x]{\G,\ \ctxvar{f}{\TTop[q]},\ \ctxtyp{X}{\ty[p]{T}}}
      \ts t : \ty{Q} \\
    p \subs \qfresh{q} \andalso
    q \subs \flt
  }{
    \G[\flt] \ts \ttlam{f}{X}{t} : \TAll[q]{f}{X}{\ty[p]{T}}{Q}
  }
  &
  \infrule[\ruledef{ttapp}{t-tapp}]{
    \G[\flt]\ts t : \TAll[q]{f}{X}{\ty[p]{T}}{\ty[r]{U}} \\
    \G \ts V \subq T \phantom{mn}
    s \subs \qfresh\flt \andalso
    \G[\flt] \ts s \subapp{q} p \\
    r \subs \qfresh \flt,f,x \phantom{mn}
    \qfresh \notin s \lor x \notin U \phantom{mn}
    \qfresh \notin q \lor f \notin U
  }{
    \G[\flt]\ts \ttapp{t}{\ty[s]{V}} : \ty[r]{U}[\ty[s]{V}/\tqvar{X}, q/f]
  }
\end{tabular} \\[1em]

\judgement{Application Conformance}{\BOX{\G[\flt] \ts q \subapp{q} q}} \\[-1em]
\begin{tabular}{@{}p{.29\linewidth}@{}p{.71\linewidth}@{}}
  \infrule[\ruledef{fsub}{f-sub}]{
    \G\ts s <: p
  }{
    \G[\flt] \ts s \subapp{q} p
  }
  &
  \infrule[\ruledef{ffresh}{f-fresh}]{
    \G \ts s \overlap q \subq \qfresh{p} \andalso
    s \overlap q \subs \qfresh{\flt} \andalso
    {\qhole{} \notin \qsat{s},\qsat{q}}
  }{
    \G[\flt] \ts s \subapp{q} \qfresh{p}
  }
\end{tabular} \\[-1ex]

\caption{Select typing rules of \maybelang.
Adapted from \citet{DBLP:journals/pacmpl/WeiBJBR24},
the rules are presented with explicit conformance to merge their
two separate application rules, and with shallow, dual-component reference types \cite{DBLP:journals/pacmpl/0001HJBR25}.
Rules for type annotations are available
in \Cref{fig:decpoly}. } \label{fig:typing}
\end{figure} 
\Cref{fig:typing} presents the typing rules of
$\maybelang$, written in the form $\G[\flt] \ts t : Q$, where $\flt$ is the
\emph{observation filter}, summarizing all free variables required to type the
term $t$.
The typing rules mainly follow the design of prior works
\cite{DBLP:journals/pacmpl/WeiBJBR24,DBLP:journals/pacmpl/0001HJBR25}, but
build on the subtyping rules in \Cref{sec:subtyping}.
We use $\subt$ to denote the new combined type-and-qualifier subtyping relations,
and $\subq$ for the separate relations of qualifiers and types,
as appeared in prior work \cite{DBLP:journals/pacmpl/WeiBJBR24}.

Basic typing rules handle constants and variables. For constants
\rulename{tcst}, the empty qualifier is assigned, as primitive values do not
track resources. For the variable $x$ \rulename{tvar}, the qualifier is $x$,
regardless of the qualifier $q$ recorded in the context. This $q$ can later be
revealed via subsumption; see \rulename{qvar} in \Cref{fig:subtype}. Additionally, $x$ must appear in the
observation $\flt$.

Allocation \rulename{tref} yields a reference shallowly qualified by only
$\qfresh$.
Aligned with prior work~\cite{DBLP:journals/pacmpl/WeiBJBR24,DBLP:journals/pacmpl/BaoJ0BR25},
the referent must be non-fresh.
Both components in the resulting type are the same, thus abbreviated.
Dereferencing \rulename{tderef} replaces the self-reference $h$ in the
\emph{get} component with the reference qualifier itself.
Such substitution of bound variables observes the same restriction seen in
\Cref{sec2:depfun} and has to be
either non-fresh ($\qfresh\notin p$) or non-deep ($h \notin T$).
Assigning the reference \rulename{tassign} concerns the \emph{put}
component and is otherwise standard.

Subsumption \rulename{tsub} and ascription are the
foundation for mode switching in bidirectional typing. In \rulename{tsub}, we
apply extended subtyping ($\subt$) to enable expressive conversions involving
self-references. The qualifier $q$ of the supertype must be bounded
by the filter $\flt$.
The ascription rule and the rules for abstractions with
domain annotations
trivially delegate to unannotated terms and are given in \Cref{fig:decpoly}.

The abstraction rule \rulename{tabs} introduces both the self-reference $f$
and the argument variable~$x$. Representing only reachability,
the self-reference $f$ is given the top type with the function qualifier~$q$.
This $q$ extended with $f$ and $x$ then restricts the observation for the body $t$.

The application rule \rulename{tapp} requires the parameter and the argument
to have the same type $T$, but allows different qualifiers.
Reflecting the two separate application rules in prior works \cite{DBLP:journals/pacmpl/WeiBJBR24,DBLP:journals/pacmpl/0001HJBR25},
the parameter qualifier $s$ must either be
bounded by $p$ \rulename{fsub}, or
overlap with the function qualifier $q$ by no more than $p$ \rulename{ffresh}.
Holes are rejected in saturation computation, ensuring that future
instantiations do not affect the result.
Besides the two cases, we mechanize an extension (\Cref{sec:arbitraryarg}, \cite{artifact})
that may accept parameters with arbitrary reachability.
To complete the application, we substitute $f$ and $x$ in the result type with
$q$ and $s$, respectively. Similar to the case of \rulename{tderef},
such substitutions have to be either non-fresh or non-deep, according to
\Cref{sec2:depfun}.

Type abstractions \rulename{ttabs} and applications \rulename{ttapp}
support bounded quantification. A bound of the form
$\ctxtyp{X}{\ty[p]{T}}$ can be read as a combined type bound $X <: T$ and
qualifier bound $x <: p$, where $X$ and $x$ may be used independently.
These rules parallel their function counterparts.
In \rulename{ttapp}, the type argument $V$ may differ from the bound $T$;
their subtyping is checked using the basic relation ($\subq$), independent of qualifiers.

\subsection{Subtyping and Subqualifying} \label{sec:subtyping}

\begin{figure}[t]
\footnotesize \makeatother
\typicallabel{sq} \setlength{\afterruleskip}{\smallskipamount}

\judgement{Subqualifying}{\BOX{\G\ts q \subq q}} \\[-1em]
\begin{tabular}{@{}p{.28\linewidth}@{}p{.36\linewidth}@{}p{.36\linewidth}@{}}
  \infrule[\ruledef{qsub}{q-sub}]{
    p \subs q
  }{
    \G\ts p \subq q
  }
  &
  \infrule[\ruledef{qtrans}{q-trans}]{
    \G\ts p \subq q \andalso
    \G\ts q \subq r
  }{
    \G\ts p \subq r
  }
  &
  \infrule[\ruledef{qcong}{q-cong}]{
    \G\ts p \subq r \andalso
    \G\ts q \subq s
  }{
    \G\ts p,q \subq r,s
  }
  \\
  \infrule[\ruledef{qvar}{q-var}]{
    \ctxvar{x}{\ty[q]{T}} \in \G \andalso
    \qfresh \notin q
  }{
    \G\ts x \subq q
  }
  &
  \infrule[\ruledef{qtvar}{q-tvar}]{
    \ctxtyp{X}{\ty[q]{T}} \in \G \andalso
    \qfresh \notin q
  }{
    \G\ts x \subq q
  }
  &
  \infrule[\ruledef{qself}{q-self}]{
    \ctxvar{f}{\ty[q,\qhole{?}]{T}} \in \G
  }{
    \G\ts \qclean{q} \subq f
  }
\end{tabular} \\[1em]

\judgement{Type-and-Qualifier Subtyping}{\BOX{\G\ts Q \subt Q}} \\[-1em]
\begin{tabular}{@{}p{.43\linewidth}@{}p{.57\linewidth}@{}}
  \infrule[\ruledef{srefl}{s-grow}]{
    \G\ts p \subq q
  }{
    \G\ts \ty[p]{T} \subt \ty[q]{T}
  }
  &
  \infrule[\ruledef{strans}{s-trans}]{
    \G\ts P \subt Q \andalso
    \G\ts Q \subt R
  }{
    \G\ts P \subt R
  }
  \\
  \infrule[\ruledef{sref}{s-ref}]{
    \begin{array}{@{}c@{\phantom{mn}}c@{}}
      \pcx[\flt_1]{\G, \ctxvar{h}{\TTop[q]}} \ts Q \subt P &
      \flt_1 \subs q,h \\
      \pcx[\flt_2]{\G, \ctxvar{h}{\TTop[q]}} \ts R \subt S &
      \flt_2 \subs q,h
    \end{array}
  }{
    \G\ts\TSRef[q]{h}{P}{R} \subt \TSRef[q]{h}{Q}{S}
  }
  &
  \infrule[\ruledef{sfun}{s-fun}]{
    \begin{array}{@{}l@{\ }l@{\ }c@{\ }l@{\phantom{mm}}l@{}}
      \pcx[\flt_1]{\G, \ctxvar{f}{\TTop[q]}} &
        \ts & Q & \subt P &
        \flt_1 \subs q,f \\
      \pcx[\flt_2]{\G, \ctxvar{f}{\TTop[q]}, \ctxvar{x}{Q}} &
        \ts & R [\!(x,\flt_1)\!/x] & \subt S &
        \flt_2 \subs q,f,x
    \end{array}
  }{
    \G\ts\TFun[q]{f}{x}{P}{R} \subt \TFun[q]{f}{x}{Q}{S}
  }
\end{tabular} \\
\begin{tabular}{@{}p{.23\linewidth}@{}p{.23\linewidth}@{}p{.54\linewidth}@{}}
  \infrule[\ruledef{stop}{s-top}]{
    \phantom{text} \\ \phantom{text}
  }{
    \G\ts \ty[q]{T} \subt \TTop[q]
  }
  &
  \infrule[\ruledef{stvar}{s-tvar}]{
    \phantom{text} \\
    \ctxtyp{X}{\ty[p]{T}} \in \G
  }{
    \G\ts \ty[q]{X} \subt \ty[q]{T}
  }
  &
  \infrule[\ruledef{sall}{s-all}]{
    \begin{array}{@{}l@{\ }l@{\ }c@{\ }l@{\phantom{mn}}r@{}}
      \pcx[\qbot]{\G, \ctxvar{f}{\TTop[q]}} &
        \ts & Q & \subt P \\
      \pcx[\flt_2]{\G, \ctxvar{f}{\TTop[q]}, \ctxtyp{X}{Q}} &
        \ts & R & \subt S &
        \flt_2 \subs q,f,x
    \end{array}
  }{
    \G\ts\TAll[q]{f}{X}{P}{R} \subt \TAll[q]{f}{X}{Q}{S}
  }
\end{tabular} \\[1em]

\judgement{Subtyping Shorthands}{
  \BOX{\G\ts T \subq T}\ \BOX{\G[\flt]\ts Q \subt Q}} \\[-1em]
\begin{tabular}{@{}p{.43\linewidth}@{}p{.57\linewidth}@{}}
  \infrule[\ruledef{suncond}{s-uncond}]{
    \G\ts \ty[\qfresh]{T} \subt \ty[\qfresh]{U}
  }{
    \G\ts T \subq U
  }
  &
  \infrule[\ruledef{sintn}{s-proxy}]{
    \G\ts \ty[\qfresh]{T} \subt \ty[\qfresh\flt]{U} \andalso
    \G\ts p \subq q \andalso
    \G\ts \flt \subq q
  }{
    \G[\flt] \ts \ty[p]{T} \subt \ty[q]{U}
  }
\end{tabular} \\[-1ex]

\caption{Subsumption rules of \maybelang. Subqualifying rules mainly follow
the prior work \cite{DBLP:journals/pacmpl/WeiBJBR24}, with \rulefmt{q-self}
generalized upon freshness and qualifier holes.
The combined type-and-qualifier subtyping is newly proposed.
} \label{fig:subtype}
\end{figure} 
We present subsumption rules of $\maybelang$ in \Cref{fig:subtype},
including subqualifying rules that mainly follow prior design
and type-and-qualifier subtyping rules introduced in \Cref{sec2:subtyping}.

\paragraph{Subqualifying.}

Qualifiers are sets and thus the subset relation is carried over as
\rulename{qsub}. In addition, \rulename{qtrans} and \rulename{qcong} are defined
in a similar way to the transitivity and congruence rules of subset.
Other rules leverage reachability information from the typing context~$\G$.
Rules \rulename{qvar} and \rulename{qtvar} expand variables by replacing them
with their recorded qualifiers in $\G$, provided these qualifiers
are fully specified, \ie, contain no holes or freshness.
Rule \rulename{qself} introduces self-references to upper-bound the variables
contained in the qualifier of the corresponding self-reference.
Conversely, self-references can be expanded using \rulename{qvar}, as long as
their qualifiers are fully established and non-fresh---conditions not required
in \rulename{qself}.

\paragraph{Combined Type-and-Qualifier Subtyping.}

Individual subtyping judgments may change types and/or qualifiers, and
the transitivity rule \rulename{strans} allows composing such changes.
Rule \rulename{srefl} allows changing qualifiers when types are the same
between the subtype and the supertype; a standard reflexivity rule can be
derived, requiring both types and qualifiers to be the same.
All other rules change only the types, requiring the same
qualifiers on both sides.

As discussed earlier in \Cref{sec2:subtyping}, combined type-and-qualifier
subtyping allows justifying avoidance coercions. For this purpose,
rules \rulename{sref}, \rulename{sfun} and \rulename{sall} add the qualifier $q$
agreed between the subtype and the supertype into the typing context, as the
qualifier of their corresponding self-references.
Together with the subqualifying rule for self-references \rulename{qself},
this enables expressing the basic avoidance subtyping 
examples in \Cref{sec2:subtyping}.

When deriving subtyping for the component types, rules
\rulename{sref}, \rulename{sfun} and \rulename{sall} use the auxiliary form
\rulename{sintn}:
they understand the qualifier $p$ of $T$ as \emph{some reachability},
and $q$ of $U$ as \emph{some reachability including} $\flt$, where
the freshness markers do not mean separation.
This way, in \rulename{sref} and \rulename{sfun}, we require 
only $\flt_1,\flt_2$ to be within $q$ modulo bound variables, but
do not restrict the qualifiers of $P,Q,R,S$;
this is crucial to precise, shallow reference types \cite{DBLP:journals/pacmpl/0001HJBR25}, where referent qualifiers are not
necessarily smaller than outer ones.
In \rulename{sfun}, when checking subtyping between the domain types and
qualifiers $R$ and $S$, it uses the substitution
$R[(x,\flt_1)/x]$ to account for the fact that $x$ in $R$ and $S$ may
refer to arguments with reachability differing by $\flt_1$.

We further include \rulename{stop}, \rulename{stvar},
and \rulename{sall} for type polymorphism.
The former two rules are standard.
Rule \rulename{sall} behaves similarly to \rulename{sfun}, except that the
subtyping between type-and-qualifier bounds via \rulename{sintn} requires no
observable $\flt_1$, so that type bounds and qualifier bounds can be used
orthogonally.
This aligns with type applications \rulename{ttapp}, where we use the type-only
subtyping (by~$\subq$) that applies regardless of qualifiers.
In \maybelang, such type-only subtyping is not a separate set of rules, but
can be derived from type-and-qualifier subtyping via \rulename{suncond}, where
both sides agree on the opaque qualifier $\qfresh$.
This is also how subtyping rules from prior work~\cite{DBLP:journals/pacmpl/WeiBJBR24,DBLP:journals/pacmpl/0001HJBR25}
can be understood in the context of this work.

\subsection{Metatheory} \label{sec3:meta}

\subsubsection{Semantic Soundness.}
Unlike prior work~%
\cite{DBLP:journals/pacmpl/BaoWBJHR21,DBLP:journals/pacmpl/WeiBJBR24,DBLP:journals/pacmpl/0001HJBR25}
that establishes syntactic soundness,
we prove the type soundness of $\maybelang$ using \emph{logical relations}~%
\cite{DBLP:journals/jacm/TimanyKDB24}. The dynamic semantics is formulated as a
big-step interpreter~\cite{DBLP:conf/popl/AminR17}.
Our semantic interpretation is adapted from~\citet{DBLP:journals/pacmpl/BaoJ0BR25},
and we extend their model by adding interpretations for type polymorphism,
shallowly qualified, dual-component references~\cite{DBLP:journals/pacmpl/0001HJBR25},
and deep occurrences of bound variables inside types.
Besides efforts to reflect recent advances in reachability types,
our new subtyping design requires our semantic model to interpret the
reachable locations of values in a type-dependent manner,
and this shift necessitated nontrivial changes.
Details of our logical relations are provided in \Cref{sec:lr},
and we excerpt the key results as follows.

\begin{theorem}[Fundamental]
If a term $t$ is syntactically well-typed, \ie, \:$\G[\flt]\ts t: \ty[q]{T}$, and both the context $\G$ and the store $\sigma$ are well-formed,
then $t$ is also semantically well-typed (\:$\G \models t : \ty[q]{T} $).
Specifically, $t$ evaluates to a value $v$ of the type $\ty{T}$ in finite steps,
such that $v$ may only reach locations described by the qualifier $q$, 
and all store write effects are limited to the locations described by $\flt$.
\end{theorem}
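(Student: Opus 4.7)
The plan is to prove the Fundamental Theorem by induction on the syntactic typing derivation $\G[\flt]\ts t : \ty[q]{T}$, establishing a compatibility lemma for each typing rule. First I would set up the semantic machinery following \citet{baoModelingReachabilityTypes2023}: a value interpretation $\mathcal{V}(T)$, step-indexed to accommodate impredicative polymorphism and the recursive flavor induced by self-references; a qualifier interpretation that maps each qualifier to a concrete set of store locations relative to an environment; a computation interpretation $\mathcal{E}(\ty[q]{T})$ enforcing both the reachability bound $q$ on the resulting value and the observation bound $\flt$ on store write effects; and a context interpretation $\mathcal{G}(\G)$ providing consistent valuations for term variables, type variables, and qualifier holes.

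Before the main induction I would prove several supporting lemmas. The central auxiliary result is \emph{semantic soundness of self-aware subtyping}: whenever $\G \ts T \subtyp{p}{\delta} U$, any value that semantically inhabits $T$ with reachability bounded by $p$ also inhabits $U$ with reachability bounded by $p \cup \delta$. The standard corollary for $\!\subtyp{}{}\!$ and the unconditional $<:$ fragment via \rulename{suncond} then fall out. I would additionally need substitution and weakening lemmas for terms, types, and qualifiers; monotonicity of $\mathcal{V}$ under store extension, essential for \rulename{tref} and shallowly qualified references; and invariance of semantic typing under instantiation of qualifier holes, so that holes in $\G$ may be filled later without invalidating the relation.

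With this in place, I would proceed case by case through the typing rules. The base cases (\rulename{tcst}, \rulename{tvar}, \rulename{tanno}, \rulename{tabsa}) are direct. For references (\rulename{tref}, \rulename{tderef}, \rulename{tassign}), fresh allocation yields a disjoint location that justifies the $\starred{}$ qualifier, and the observation filter bounds reads and writes. For \rulename{tabs} and \rulename{ttabs} I would construct a semantic closure whose self-reference $f$ is interpreted by the closure's own captured-location set, requiring a fixpoint in the value interpretation. For \rulename{tapp} and \rulename{ttapp} I would combine the substitution lemma with a case analysis on the conformance rules \rulename{fwild}, \rulename{fsub}, and \rulename{ffresh}: each gives a distinct bound on the actual argument qualifier $s$ relative to the formal $p$ and the function qualifier $q$, from which the interpretation of the substituted codomain follows. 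For \rulename{tsub} I would invoke semantic soundness of $\!\subtyp{}{}\!$ and verify that $q \subq \starred{\flt}$ preserves the filter bound.

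The main obstacle will be the interaction between \rulename{tabs} (and \rulename{ttabs}) and \rulename{sfun} (and \rulename{sall}) involving self-references. The rule \rulename{sfun} records $f : \ty[o,\delta]{\TTop}$ in the context, so its side derivations, including $R[\{x,\delta_1\}/x] \subtyp{}{\delta_2} S$, are interpreted relative to that reachability. Showing that a value at $\ty[p]{(f(x{:}\ty{P}) \to \ty{R})}$ can be re-typed at $\ty[p,\delta]{(f(x{:}\ty{Q}) \to \ty{S})}$ demands establishing that $f$'s semantic reachability in the supertype is exactly the reachability exposed by $p$ together with $\delta$ in the subtype, and that the domain substitution $[\{x,\delta_1\}/x]$ correctly reconciles the two interpretations of $x$ in $R$ and $S$. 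This is where the polarity side conditions from \rulename{cfun} ($f\notin^+ T$ and $f\notin^- U$) become essential invariants in the induction. A secondary subtlety is that \rulename{tapp} must remain sound when the escape clauses $x \notin U$ or $f \notin U$ are triggered by $\starred{} \in s$ or $\starred{} \in q$: I would show these syntactic escape conditions suffice to ensure the substitution does not disturb the interpretation of the result type. Apart from these, the remaining cases reduce to careful but routine bookkeeping on qualifiers and filters.
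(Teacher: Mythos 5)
Your overall strategy --- a logical-relations proof by induction on the typing derivation with compatibility lemmas, a semantic soundness lemma for self-aware subtyping, and interpretations for contexts including qualifier holes --- is the same route the paper takes (the paper only sketches this, deferring the details to its Lean artifact, but it explicitly describes the model as a big-step definitional interpreter with an interpretation adapted from \citet{baoModelingReachabilityTypes2023}). However, two of your technical choices conflict with what the theorem actually claims and with how \maybelang is designed. First, you propose a \emph{step-indexed} value interpretation, but the theorem asserts termination: $t$ \emph{evaluates to a value in finite steps}. A step-indexed relation in the usual Appel--Ahmed sense yields only safety, not termination, so it cannot deliver the statement as written. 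The paper's model is instead a termination-implying big-step relation (the appendix is explicit that the logical relation ``requires the language to be terminating and cannot support general recursion''); impredicative polymorphism is handled by quantifying over semantic types rather than by step indices.

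Second, your claim that \rulename{tabs} ``requir[es] a fixpoint in the value interpretation'' for self-references misreads the design. In \maybelang the self-reference is bound as $f : \ty[q]{\TTop}$ --- at type $\TTop$, not at the function's own type --- both in \rulename{tabs} and in \rulename{sfun}. The appendix stresses that $f$ is therefore not callable and cannot drive recursion; semantically $f$ contributes only a reachability set (the closure's captured locations), never a recursive occurrence of the function type in the interpretation. This is precisely what keeps the relation well-founded without any fixpoint or step-indexing. Your identification of the \rulename{tabs}/\rulename{sfun} interaction and the polarity conditions of \rulename{cfun} as the crux is right, but the machinery you reach for to resolve it is both unnecessary and, in the case of step-indexing, would leave the finite-steps half of the theorem unproved.
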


In particular, for terms closed under the empty context, we obtain a formulation
of type safety that requires no assumption about contexts or stores:
\emph{well-typed terms do not get stuck}.
\begin{corollary}[Type Safety]
If $\:\varnothing \ts t : \ty[\varnothing]{T}$, then $t$ evaluates to a value $v$
of type $T$ in finite steps.
\end{corollary}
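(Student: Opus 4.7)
The plan is to obtain the corollary as an essentially immediate specialization of the Fundamental Theorem, with the only real work being to dispatch its two premises in the empty-world instance. Concretely, I would instantiate the Fundamental Theorem with $\G = \varnothing$ and $\flt = \varnothing$, the latter being consistent with the hypothesis, since a closed term has no free variables for the observation filter to mention. The well-formedness premises are then discharged trivially: the empty context is well-formed by the base rule for context ok-ness, and the empty store is well-formed since it contains no locations on which any invariant could fail.

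Applying the Fundamental Theorem then yields $\varnothing \models t : \ty[\varnothing]{T}$, and I would conclude by unfolding this semantic judgement according to the logical-relation interpretation adapted from \citet{baoModelingReachabilityTypes2023}. Semantic well-typedness asserts precisely that the big-step interpreter produces a value $v$ of type $T$ from $t$ in finitely many steps, subject to the side conditions that $v$ reaches only locations described by its qualifier and that all write effects are confined to $\flt$. In the present instance both the result qualifier and $\flt$ are $\varnothing$, so the reachability and effect clauses collapse to vacuous constraints, leaving exactly the statement of the corollary.

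The only mildly subtle point — and the closest thing to an obstacle — is bookkeeping around the implicit filter in $\varnothing \ts t : \ty[\varnothing]{T}$. One must verify that every filter-sensitive side condition appearing in the typing derivation (for example $x \in \flt$ in \rulename{tvar}, $p \subq \flt$ in \rulename{tderef}, or $q \subq \starred{\flt}$ in \rulename{tsub}) remains consistent with the choice $\flt = \varnothing$. Since $t$ is closed and its result qualifier is empty, any such side condition must already have been satisfied under the minimal filter, so no additional argument is required; but this is the one place where a careless reading of the hypothesis could hide a mismatch. Beyond this, the proof is a one-step consequence of the Fundamental Theorem, with no induction or case analysis of its own.
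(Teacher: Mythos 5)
Your proposal matches the paper's (implicit) argument: the corollary is obtained by instantiating the Fundamental Theorem at the empty context and empty store, whose well-formedness is trivial, and observing that the reachability and effect clauses become vacuous. The paper gives no separate proof beyond this specialization, so your route is essentially identical; the extra bookkeeping about filter-sensitive side conditions is harmless but unnecessary, since the hypothesis already supplies a derivation under the empty filter.
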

The \emph{Preservation of Separation} property
\cite{DBLP:journals/pacmpl/WeiBJBR24} also follows from our analysis of store
effects: the evaluation of two well-typed terms with disjoint observations will
observe and update disjoint portions of the store.

\begin{figure}[t]
\footnotesize\makeatother \setlength{\afterruleskip}{\smallskipamount}
\judgement{Context Subsumption}{\BOX{\strut \G\sqsubseteq\G}}\\[-1em]
\begin{tabular}{@{}p{.16\linewidth}@{}p{.2\linewidth}@{}p{.3\linewidth}@{}p{.34\linewidth}@{}}
  \infrule{\quad}{\qbot \sqsubseteq \qbot}
  &
  \infrule{\G\sqsubseteq\G'}
    {\G, \ctxvar{x}{Q}\ \sqsubseteq\ \G', \ctxvar{x}{Q}}
  &
  \infrule{\G\sqsubseteq\G'}
    {\G, \ctxtyp{X}{Q}\ \sqsubseteq\ \G', \ctxtyp{X}{Q}}
  &
  \infrule{\G\sqsubseteq\G' \hspace{3em}
    \G\ts q \hspace{3em}
    \qfresh\notin q}
    {\G, \ctxvar{f}{\ty[p,\qhole]{T}}\ \sqsubseteq\ \G', \ctxvar{f}{\ty[p,q,\qhole]{T}}}
\end{tabular} \\[-1ex]
\caption{Definitions of context subsumption: context entries are either the 
same, or with their qualifier holes instantiated by well-formed,
non-fresh qualifiers. }
\label{fig:alggrow}
\end{figure} 
\subsubsection{Interacting with Qualifier Holes.}

Although \maybelang includes no rules that introduce
qualifier holes, we show that inserting and instantiating such holes
preserves the soundness results.
We define \emph{context subsumption} in \Cref{fig:alggrow} to characterize
the effect of hole instantiation. The relation
$\G \sqsubseteq \G'$ indicates that $\G'$ results from partially instantiating holes
in $\G$ zero or more times.

\begin{lemma}[Context Subsumption on Typing] \label{thm:ctxgrow_on_hastype}
If $\G[\flt] \ts t : Q$, and $\G \sqsubseteq \G'$, then $\cx[\flt]{\G'} \ts t : Q$.
\end{lemma}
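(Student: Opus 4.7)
The plan is to prove this by induction on the derivation of $\G[\flt] \ts t : \ty{Q}$, mutually with three companion preservation statements: for self-aware subtyping $\G\ts T \subtyp{p}{\delta} U$, for surface subtyping $\subtyp{}{}$ and its variants, for subqualifying $\G\ts p <: q$, and for application conformance $\G[\flt] \mid f \sim q \ts s :: p$. All of these judgments have the same kind of context dependency, and none of the typing rules of $\maybelang$ inspects the context in a way that is fundamentally threatened by further hole instantiation. The structural rules for typing (\rulename{tcst}, \rulename{tref}, \rulename{tderef}, \rulename{tassign}, \rulename{tsub}, \rulename{tanno}, \rulename{tabsa}) follow immediately from the induction hypotheses applied to their premises, since the context in their premises is either unchanged or extended uniformly on both sides.

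The delicate work concentrates in the subqualifying rules, which are the only rules that directly read context entries. For \rulename{qvar} and \rulename{qtvar}, the side conditions $\starred{}\notin q$ and $\qhole{}\notin q$ guarantee the looked-up entry is a non-self-reference binding, which by the definition of context subsumption (rules cs-var and cs-tvar) is copied unchanged into $\G'$, so the very same derivation applies. The only truly interesting case is \rulename{qself}: from $f:\ty[q]{F}\in\G$ the conclusion is $\G\ts q\setminus\{\starred{},\qhole{}\} <: f$, but in $\G'$ the entry becomes $f:\ty[q']{F}$ with $q' = q[\{\qhole{},\delta\}/\qhole{}]\supseteq q$. Applying \rulename{qself} directly in $\G'$ yields $\G'\ts q'\setminus\{\starred{},\qhole{}\}<:f$, and since $q\setminus\{\starred{},\qhole{}\}\subseteq q'\setminus\{\starred{},\qhole{}\}$, \rulename{qsub} and \rulename{qtrans} recover the desired conclusion. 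This single observation is what makes the whole argument go through.

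For the binding rules (\rulename{tabs}, \rulename{ttabs}, \rulename{sfun}, \rulename{sall}) and for transitivity in self-aware subtyping, I need the auxiliary fact that context subsumption is stable under extension: if $\G\sqsubseteq\G'$ and $\Delta$ is a context suffix well-formed under $\G$, then $\G,\Delta\sqsubseteq\G',\Delta$. This is a routine structural induction on $\Delta$ using cs-var, cs-tvar, and cs-self with $\delta=\qbot$ (which forces $q'=q$ for unchanged self-reference entries). With this in hand, each such rule is discharged by applying the induction hypothesis to its premises in the extended contexts. The application rule \rulename{tapp} and the conformance rule \rulename{ffresh} additionally require that the side condition $\qhole{}\notin\qsat{s},\qsat{q}$ is preserved and that the saturations themselves are unaffected: because cs-self only modifies entries whose qualifier still contains $\qhole{}$, and because no qualifier reached by a hole-free saturation chain contains $\qhole{}$, none of the qualifiers traversed during saturation are modified by the instantiation.

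The main obstacle I anticipate is maintaining bookkeeping around \rulename{qself} in interaction with \rulename{sfun}/\rulename{sall}, where the context is extended with a fresh self-reference entry $f:\TTop^{o,\delta}$ that itself may carry increments introduced by preceding subtyping derivations. Care must be taken that the chosen growth of the context on the supertype side agrees with the growth used to instantiate the induction hypothesis, so that the inner subqualifying derivations on the self-reference $f$ remain derivable under the new context. Once the extension lemma and the \rulename{qself} upcast argument are factored out as reusable sublemmas, the remaining cases should be essentially mechanical rule-by-rule translations.
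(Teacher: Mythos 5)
Your proposal matches the paper's own proof, which is given only as an induction on the typing derivation together with analogous preservation lemmas for subqualifying and subtyping; your elaboration of the \rulename{qself} case via \rulename{qsub} and \rulename{qtrans}, the context-extension sublemma, and the saturation side condition for \rulename{ffresh} is exactly the detail that induction requires. One minor imprecision: in the \rulename{qvar}/\rulename{qtvar} case the looked-up entry may still be a self-reference binding rather than an ordinary one, but since its qualifier is hole-free the \textsc{cs-self} substitution acts as the identity on it, so your conclusion that the entry is copied unchanged into $\G'$ still holds.
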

\begin{lemma}[Hole Sealing on Typing] \label{thm:gstighten_hastype}
If $\pcx[\flt]{\G_1,\ctxvar{f}{\ty[q,\qhole]{T}},\G_2} \ts t : Q$, then $\pcx[\flt]{\G_1,\ctxvar{f}{\ty[q]{T}},\G_2} \ts t : Q$.
\end{lemma}

These lemmas show that proving a type judgment for a specific context~$\G$
can be reduced to proving it under a \emph{weaker} context--one containing
more holes or instantiations with smaller qualifiers. This property is
crucial for the soundness of qualifier inference in the algorithm. %
\section{$\algolang$: Bidirectional Typing with Qualifier Inference and Avoidance} \label{sec:algo}

The prior presentation of $\maybelang$ is \textit{declarative},
not specifying which qualifiers can be inferred and by what means.
In this section, we introduce its typing algorithm, \algolang.
Following bidirectional typing~\cite{DBLP:journals/csur/DunfieldK21},
\algolang infers both types and qualifiers for terms, given annotations
on function arguments and explicit instantiations of type abstractions.
When a fresh value escapes within other resources, the algorithm
automatically applies avoidance conversions
to track the freshness via self-references.
We have implemented \algolang in Lean and proven
its soundness with respect to \maybelang.
We also prove its termination, at the cost of being incomplete and rejecting 
some valid \maybelang terms.

We organize the algorithm presentation bottom-up,
covering qualifiers (\Cref{sec4:qualifier}),
subtyping (\Cref{sec4:subtype}),
avoidance (\Cref{sec4:avoidance}),
and finally bidirectional typing (\Cref{sec4:bidirectional}).
We briefly discuss the metatheoretical properties of \algolang in \Cref{sec:algometa}.

\subsection{Qualifier Checking and Inference} \label{sec4:qualifier}

\begin{figure}[t]
\footnotesize\makeatother \typicallabel{qd}
\setlength{\afterruleskip}{\smallskipamount}

\judgement{Qualifier Exposure}{
  \BOX{\strut\G\ts q \qexp \algout{q}}
  \ \BOX{\strut\G\ts q \qexp[1] \algout{q}}
  \ \BOX{\strut\G\ts q \qexp[2] \algout{q}}}\\[-1em]
\begin{tabular}{@{}p{\linewidth}@{}}
  \infrule[\ruledef{qejoin}{qe-join}]{
    \G \ts q  \qexp[1]^* \algout{q_1} \hspace{4em}
    \G \ts q_1 \qexp[2]^* \algout{q_2}
  }{
    \G \ts q \Uparrow \algout{q_2}
  }
\end{tabular} \\
\begin{tabular}{@{}p{.4\linewidth}@{}p{.6\linewidth}@{}}
  \infrule[\ruledef{qeself}{qe-self}]{
    \ctxvar{f}{\ty[p,\qhole{?}]{T}} \in \G \andalso
    \qclean{p}\, \nsubseteq q
  }{
    \G \ts f, q\ \qexp[1]\ \algout{f, q, (\qclean{p})}
  }
  &
  \infrule[\ruledef{qevar}{qe-var}]{
    \ctxvar{x}{\ty[p]{T}} \in \G \lor \ctxtyp{X}{\ty[p]{T}} \in \G \andalso
    \qfresh \notin p \andalso
    x \notin q
  }{
    \G \ts p, q\ \qexp[2]\ \algout{p, q, x}
  }
\end{tabular} \\[1em]

\judgement{Qualifier Unification}{\BOX{\strut\G\ts q\qunif q \ots{\G}}}\\[-1em]
\begin{tabular}{@{}p{\linewidth}@{}}
  \infrule[\ruledef{quself}{qu$_1$-unify}]{
    x \notin p,q \andalso
    \begin{array}{l@{\ }l@{\,}l}
    & \G = .., \ctxvar{x}{Q}, & .., \ctxvar{f}{\ty[r,\qhole]{T}}, .. \\
    \lor & \G = .., \ctxtyp{X}{Q}, & .., \ctxvar{f}{\ty[r,\qhole]{T}}, .. \\
    \end{array} \andalso
    \G\ts p \qunif q,f \ots{\G_1,\ctxvar{f}{\ty[s,\qhole]{T}},\G_2}
  }{
    \G \ts p,x \qunif q,f \ots{\G_1,\ctxvar{f}{\ty[s,x,\qhole]{T}},\G_2}
  }
\end{tabular} \\
\begin{tabular}{@{}p{.75\linewidth}@{}p{.25\linewidth}@{}}  
  \infrule[\ruledef{quvar}{qu$_2$-upcast}]{
    x \notin p,q \andalso
    \ctxvar{x}{\ty[r]{T}} \in \G \lor \ctxtyp{X}{\ty[r]{T}} \in \G \andalso
    \qfresh{} \notin r \andalso
    \G\ts p,r \qunif q \ots{\G'}
  }{
    \G\ts p,x \qunif q \ots{\G'}
  }
  &
  \infrule[\ruledef{qubot}{qu$_3$-bot}]{
    p \subs q
  }{
    \G\ts p \qunif q \ots{\G}
  }
\end{tabular} \\[1em]

\judgement{Qualifier Checking and Inference}{\BOX{\strut\G \ts q \subq[] q}
                                           \ \BOX{\strut\G \ts q \subq[] q \ots{\G}}}\\[-1em]
\begin{tabular}{@{}p{.5\linewidth}@{}p{.5\linewidth}@{}}
  \infrule[\ruledef{qcheck}{qcheck}]{
    \G\ts q \qexp \algout{q'} \hspace{4em}
    p \subs q'
  }{
    \G \ts p \subq[] q
  }
  &
  \infrule[\ruledef{qinfer}{qinfer}]{
    \G\ts q \qexp \algout{q'} \hspace{4em}
    \G \ts p \qunif q' \ots{\G'}
  }{
    \G \ts p \subq[] q \ots{\G'}
  }
\end{tabular} \\[-1ex]
\caption{Qualifier checking and inference in \algolang.
In the order of the indices, rules of the same form are tried sequentially,
and the first succeeding one is applied.
Outputs in the rules are marked in \algout{purple}.} \label{fig:algqual}
\end{figure} 
\Cref{fig:algqual} presents the rules for qualifier checking and inference
in \algolang, corresponding to the subqualifying rules in \Cref{fig:subtype}.
These rules can be divided into two main components:
\emph{qualifier exposure} without considering qualifier holes,
and \emph{unification} that instantiates holes.

\paragraph{Qualifier Exposure and Checking}

Qualifier exposure ($\qexp$) for $q$ aims to find a large enough
qualifier $q'$, so that checking $p \subq q$ can be reduced to simply checking
$p \subs q'$, as seen in \rulename{qcheck}.

The exposure procedure \rulename{qejoin} proceeds in two stages.  In both
stages, we extend the input qualifier by a subqualifier.
The first stage
$\qexp[1]$ \rulename{qeself} enumerates self-references in the input qualifier,
and extends the input qualifier with the qualifiers of the found
self-references, reflecting the declarative rule \rulename{qself}.  The second
stage $\qexp[2]$ \rulename{qevar} then adds all variables whose reachability is
already included in the input qualifier, reflecting the declarative rule
\rulename{qvar}.

Although applied iteratively, neither step diverges.
For well-formed contexts,
the first stage can finish by scanning the context once in the reverse order,
while the second stage can finish by scanning once in the forward order.

\paragraph{Qualifier Unification and Inference}

Qualifier checking and inference differ in their approach to constraint satisfaction.
While qualifier checking ignores all qualifier holes, qualifier inference \rulename{qinfer} replaces
subset checking~$\subs$ with qualifier unification~$\qunif$, eagerly satisfying constraints by
instantiating qualifier holes in the context.

Qualifier unification concludes by \rulename{qubot} when $p$ is simply a subset
of $q$. Whenever this is not the case, there must be an outstanding variable
$x$ that does not appear in $q$.
Primarily, the unification rule \rulename{quself} tries to insert $x$ into the
qualifier hole of a self-reference $f$.
This unification step is the key to satisfying @b < f@ in the example from
\Cref{sec2:inferqual}.

Crucially, the choice of $f$ is not arbitrary:
(1) $f$ must be defined after $x$,
so that the instantiation yields a well-formed context,
and (2) among all candidates, we select the one defined earliest to
avoid cascading updates on others.
If no such $f$ exists, we apply \rulename{quvar} to replace $x$ with its
reachability $r$.
This is allowed only if $r$ contains no holes or freshness; otherwise, unification fails.

Unification processes all relevant context entries in reverse order,
each exactly once.
To infer qualifiers in subtyping and typing as illustrated in \Cref{sec2:inferqual},
we use \rulename{qinfer} but not \rulename{qcheck}.

\subsection{Subtype Checking} \label{sec4:subtype}

\begin{figure}[t]
\footnotesize\makeatother \typicallabel{sa}
\setlength{\afterruleskip}{\smallskipamount}

\judgement{Toplevel Subtype Checking}{
  \BOX{\strut\G \ts Q \subt[] \ty[\algout{q}]{T} \ots{\G}}}\\[-2em]
\begin{tabular}{@{}p{\linewidth}@{}}
  \infrule[\ruledef{sajoin}{sa-join}]{
    T_1 \subt[1]^{p} \algout{T_1'} \andalso \andalso
    \G\ts T_1' \subt[2]^{p} T_2 \growth{q} \ots{\G'}
  }{
    \G\ts \ty[p]{T_1} \subt[] \ty[\algout{p,q}]{T_2} \ots{\G'}
  }
\end{tabular} \\[1ex]

\judgement{Self Unpacking (Selection)}{
    \BOX{\strut \ty{T} \subt[1]^{q} \algout{\ty{T}}}}\\[-1em]
\begin{tabular}{@{}p{.41\linewidth}@{}p{.41\linewidth}@{}p{.18\linewidth}@{}}
  \infrule[\ruledef{suref}{su$_1$-ref}]{
    \qfresh{} \notin q \andalso \theta = [q/h]
  }{
    \TSRef{h}{\ty[p]{T}}{Q} \,\subt[1]^{q}\,
    \algout{\TSRef{h}{\ty[p]{T\!\theta}}{Q\!\theta}}
  }
  &
  \infrule[\ruledef{sufun}{su$_2$-fun}]{
    \qfresh{} \notin q \andalso \theta = [q/f]
  }{
    \TFun{f}{x}{\ty[p]{T}}{Q} \,\subt[1]^{q}\,
    \algout{\TFun{f}{x}{\ty[p]{T\!\theta}}{Q\!\theta}}
  }
  &
  \infrule[\ruledef{subot}{su$_4$-bot}]{\quad}{
    T \subt[1]^{q} \algout{T}
  }
\end{tabular} \\[1em]

\judgement{Recursive Subtype Checking}{
    \BOX{\strut\G \ts \ty{T} \subt[2]^{q} \ty{T} \growth{q} \ots{\G}}}\\[-1ex]
\begin{tabular}{@{}p{.28\linewidth}@{}p{.72\linewidth}@{}}  
  \infrule[\ruledef{sabase}{sa-base}]{\quad\\\quad\\\quad}{
    \G \ts B \subt[2]^{q} B \growth{\qbot} \ots{\G}
  }
  &
  \infrule[\ruledef{saref}{sa-ref}]{
    \G, \ctxvar{h}{\TTop[q,\qhole{}]} \ts T_2
      \subt[2]^{\qfresh} T_1 \growth{\flt_1} \ots{\G_1} \andalso
    \G_1 \ts p_2,\flt_1 \subq[] p_1 \ots{\G_2} \hspace{4em} \\ \hspace{4em}
    \G_2 \ts U_1
      \subt[2]^{\qfresh} U_2 \growth{\flt_2} \ots{\G_3} \andalso
    \G_3 \ts r_1, \flt_2 \subq[] r_2
      \ots{\G', \ctxvar{h}{\TTop[q',\qhole]}} \\
    q'' = (q'\!\setminus\!q),\, (\flt_1\!\setminus\!h),\, (\flt_2\!\setminus\!h)
  }{
    \G\ts
    \TSRef{h}{\ty[p_1]{T_1}}{\ty[r_1]{U_1}}
    \:\subt[2]^{q}\:
    \TSRef{h}{\ty[p_2]{T_2}}{\ty[r_2]{U_2}} \growth{q''} \ots{\G'}
  }
\end{tabular} \\
\begin{tabular}{@{}p{.28\linewidth}@{}p{.28\linewidth}@{}p{.44\linewidth}@{}}  
  \infrule[\ruledef{satop}{sa-top}]{\quad}{
    \G \ts T \subt[2]^{q} \TTop \growth{\qbot} \ots{\G}
  }
  &
  \infrule[\ruledef{satrefl}{sa-tvar$_1$}]{\quad}{
    \G \ts X \subt[2]^{q} X \growth{\qbot} \ots{\G}
  }
  &
  \infrule[\ruledef{sattrans}{sa-tvar$_2$}]{
    \ctxtyp{X}{\ty[p]{U}} \in \G \andalso
    \G \ts U \subt[2]^{q} T \growth{q'} \ots{\G'}
  }{
    \G \ts X \subt[2]^{q} T \growth{q'} \ots{\G'}
  }
\end{tabular} \\
\begin{tabular}{@{}p{\linewidth}@{}}
  \infrule[\ruledef{safun}{sa-fun}]{
    \G, \ctxvar{f}{\TTop[q,\qhole{}]} \ts T_2
      \subt[2]^{\qfresh} T_1 \growth{\flt_1} \ots{\G_1} \andalso
    \G_1 \ts p_2,\flt_1 \subq[] p_1 \ots{\G_2} \andalso
    \theta = [(x,\flt_1)/x] \\
    \G_2, \ctxvar{x}{\ty[p_2]{T_2}} \ts U_1\!\theta
      \subt[2]^{\qfresh} U_2 \growth{\flt_2} \ots{\G_3} \andalso
    \G_3 \ts r_1\!\theta, \flt_2 \subq[] r_2
      \ots{\G', \ctxvar{f}{\TTop[q',\qhole]},..} \hspace{3em} \\
    q'' = (q'\!\setminus\!q),\, (\flt_1\!\setminus\!f),\, (\flt_2\!\setminus\!\{f,x\})
  }{
    \G\ts
    \TFun{f}{x}{\ty[p_1]{T_1}}{\ty[r_1]{U_1}}
    \:\subt[2]^{q}\:
    \TFun{f}{x}{\ty[p_2]{T_2}}{\ty[r_2]{U_2}} \growth{q''} \ots{\G'}
  }
  \\
  \infrule[\ruledef{saall}{sa-all}]{
    \G, \ctxvar{f}{\TTop[q,\qhole{}]} \ts p_2 \subq[] p_1 \ots{\G_2} \\
    \G_2, \ctxtyp{X}{\ty[p_2]{T}} \ts U_1
      \subt[2]^{\qfresh} U_2 \growth{\flt_2} \ots{\G_3} \andalso
    \G_3 \ts r_1, \flt_2 \subq[] r_2
      \ots{\G', \ctxvar{f}{\TTop[q',\qhole]}, \cdots} \\
    q'' = (q'\!\setminus\!q),\, (\flt_2\!\setminus\!\{f,x\})
  }{
    \G\ts
    \TAll{f}{X}{\ty[p_1]{T}}{\ty[r_1]{U_1}}
    \:\subt[2]^{q}\:
    \TAll{f}{X}{\ty[p_2]{T}}{\ty[r_2]{U_2}} \growth{q''} \ots{\G'}
  }
\end{tabular} \\[-1ex]

\caption{Select subtype checking in \algolang.
Additional rules are available in \Cref{fig:subcheckmore}.
Rules are syntax-directed or ordered by indices,
with outputs marked in \algout{purple}.
} \label{fig:subcheck}
\end{figure} 
\Cref{fig:subcheck} presents the rules for subtype checking,
which also infers the qualifier for the supertype.
The top-level procedure \rulename{sajoin} operates in two phases:
\emph{self unpacking} and \emph{recursive checking}.
We write $\subt[]$ for the algorithm,
in contrast to the declarative~$\subt$.

\paragraph{Recursive Subtype Checking}

The second phase $\subt[2]$ adapts the declarative subtyping rules.
Here, \rulename{srefl} is specialized into \rulename{sabase} and
\rulename{satrefl}, while \rulename{strans} is internalized in
\rulename{sattrans}.
Rule \rulename{saall} implements a kernel variant of $F_{<:}$ for decidability,
unlike the declarative \rulename{sall} based on the full variant~\cite{DBLP:journals/tcs/Ghelli95}.
Rules \rulename{saref}, \rulename{safun} and \rulename{saall}
initialize the qualifier of their self-references using the input qualifier
$q$ with a hole $\qhole$.
With \rulename{qinfer}, they eagerly satisfy constraints on the hole
(\Cref{sec2:inferqual}) and thus infer the additional reachability
$\algout{q''}$ to appear in the supertype qualifier.
This phase alone enables the \emph{packing} conversion illustrated
in \Cref{sec2:subtyping}.

\paragraph{Self Unpacking}

With qualifier holes in self-reference qualifiers, the recursive phase does not
support the \emph{unpacking} conversion from \Cref{sec2:subtyping}, as
\rulename{qvar} requires the fully specified qualifiers.
To address this, we add a dedicated unpacking step
prior to recursive checking, denoted $\subt[1]$,
to replace $f$ with the input qualifier if it is not fresh.
The two steps connect by transitivity \rulename{strans}.

\subsection{Avoidance Conversion} \label{sec4:avoidance}

\begin{figure}[t]
\footnotesize\makeatother \typicallabel{sa}
\setlength{\afterruleskip}{\smallskipamount}

\judgement{Avoidance Core}{%
    \BOX{\strut Q \ll_{x} \algout{Q}}} \\[-1em]
\begin{tabular}{@{}p{.33\linewidth}@{}p{.33\linewidth}@{}p{.34\linewidth}@{}}  
  \infrule[\ruledef{avref}{av-ref}]{
    \algout{T_1} := T [h/^-z] \quad
    {p_1} = p \setminus z \\
    \algout{U_1} := U [h/^+z] \quad
    {r_1} = r [h/z] \\
    Q = \TSRef[q,z]{h}{\ty[p_1]{T_1}}{\ty[r_1]{U_1}}
  }{
    \TSRef[q]{h}{\ty[p]{T}}{\ty[r]{U}} \ll_z \algout{Q}
  }
  &
  \infrule[\ruledef{avfun}{av-fun}]{
    \algout{T_1} := T [f/^-z] \quad
    {p_1} = p \setminus z \\
    \algout{U_1} := U [f/^+z] \quad
    {r_1} = r [f/z] \\
    Q = \TFun[q,z]{f}{x}{\ty[p_1]{T_1}}{\ty[r_1]{U_1}}
  }{
    \TFun[q]{f}{x}{\ty[p]{T}}{\ty[r]{U}} \ll_z \algout{Q}
  }
  &
  \infrule[\ruledef{avall}{av-all}]{
    \algout{T_1} := T [f/^-z] \quad
    {p_1} = p \setminus z \\
    \algout{U_1} := U [f/^+z] \quad
    {r_1} = r [f/z] \\
    Q\! =\!\TAll[q,z]{f}{X}{\ty[p_1]{T_1}}{\ty[r_1]{U_1}}
  }{
    \TAll[q]{f}{X}{\ty[p]{T}}{\ty[r]{U}} \ll_z \algout{Q}
  }
\end{tabular} \\[1em]

\judgement{Conditional Avoidance}{%
    \BOX{Q \ll_{q/x} \algout{Q}}
  \ \BOX{Q \ll_{q/x}\ll_{q/x} \algout{Q}}} \\[-1em]
\begin{tabular}{@{}p{.3\linewidth}@{}p{.3\linewidth}@{}p{.4\linewidth}@{}}
  \infrule[\ruledef{acskip}{ac$_1$-skip}]{
    \qfresh \notin q \lor x \notin T
  }{
    \ty[p]{T} \ll_{q/x} \algout{\ty[p]{T}}
  }
  &
  \infrule[\ruledef{acavoid}{ac$_2$-avoid}]{
    Q \ll_x \algout{Q'}
  }{
    Q \ll_{q/x} \algout{Q'}
  }
  &
  \infrule[ac-double]{
    Q \ll_{p/x} \algout{Q'} \andalso Q' \ll_{q/y} \algout{Q''}
  }{
    Q \ll_{p/x}\ll_{q/y} \algout{Q''}
  }
\end{tabular} \\[-1ex]

\caption{Avoidance rules in \algolang.
Rules are syntax-directed or ordered by indices,
with outputs marked in \algout{purple}.
} \label{fig:avoidalg}
\end{figure} 
In \algolang, we implement avoidance conversions as described in
\Cref{sec2:avoid}.
Shown in \Cref{fig:avoidalg}, the core
form \rulefmt{av-$*$} is written as $P \ll_z \algout{Q}$, meaning that avoiding
variable $z$ deep inside the type $P$ yields type $\algout{Q}$.
It is defined by \emph{polarized substitution} (\Cref{fig:avoidmore}):
using the outermost self-reference $f$ in $P$
to replace $z$ in covariant positions, and removing it from contravariant ones.
Since the outermost function subsumes the scope of inner ones,
this choice eliminates the need to handle inner self-references
and results in smaller types by subtyping.

\Cref{fig:avoidalg} also defines \emph{conditional avoidance} \rulefmt{ac-$*$},
invoking avoidance core only on restricted substitutions.
Note that we do not define avoidance core
for simple types: there, the occurrence requirement $x \notin T$
in \rulename{acskip} is trivially true, and thus they are never required for
\rulename{acavoid}.

\subsection{Bidirectional Typing} \label{sec4:bidirectional}

\begin{figure}[t]
\footnotesize\makeatother \typicallabel{tiapp}
\setlength{\afterruleskip}{\smallskipamount}
\judgement{Bidirectional Typing (Selection)}{
    \BOX{\strut\G[\algout{\flt}] \ts t \synth\![\qexp]\, \algout{\ty{Q}} \ots{\G}}
  \ \BOX{\strut\G[\algout{\flt}] \ts t \chek \ty[\synth \algout{q}]{T} \ots{\G}}
  \ \BOX{\strut\G[\algout{\flt}] \ts t \chek \ty{Q} \ots{\G}}}\\[-1ex]
\begin{tabular}{@{}p{.3\linewidth}@{}p{.3\linewidth}@{}p{.4\linewidth}@{}}
  \infrule[\ruledef{ticst}{ti$_1$-cst}]{
    c \in \TBase
  }{
    \G[\algout{\qbot}] \ts c \synth \algout{\TBase[\qbot]} \ots{\G}
  }
  &
  \infrule[\ruledef{tivar}{ti$_2$-var}]{
    \ctxvar{x}{\ty[q]{T}} \in \G
  }{
    \G[\algout{x}] \ts x \synth \algout{\ty[x]{T}} \ots{\G}
  }
  &
  \infrule[\ruledef{tiref}{ti$_3$-ref}]{
    \G[\algout{\flt}] \ts t \synth \algout{\ty[q]{T}} \ots{\G'} \andalso
    \qfresh \notin q
  }{
    \G[\algout{\flt}] \ts \tref{t} \synth \algout{\TRef[\qfresh]{\ty[q]{T}}{}} \ots{\G'}
  }
\end{tabular}
\begin{tabular}{@{}p{.5\linewidth}@{}p{.5\linewidth}@{}}
  \infrule[\ruledef{tideref}{ti$_4$-get}]{
    \G[\algout{\flt}] \ts t \synth\qexp \algout{\TSRef[r]{h}{P}{Q}} \ots{\G'} \\
    Q \ll_{r/h} \algout{\ty[q]{U}} \andalso
    \flt' = \flt, (q\!\setminus\!h)
  }{
    \G[\algout{\flt'}] \ts \tget{t} \synth \algout{\ty[q]{U} [r/h]} \ots{\G'}
  }
  &
  \infrule[\ruledef{tiassign}{ti$_5$-put}]{
    \G[\algout{\flt_1}] \ts t \synth\qexp \algout{\TSRef[r]{h}{\ty[p]{T}}{Q}} \ots{\G_1} \\
    \qfresh\notin r \lor h\notin T \andalso
    \G[\algout{\flt_2}]_1 \ts t_2 \chek \ty[p]{T} [r/h] \ots{\G_2}
  }{
    \G[\algout{\flt_1,\flt_2}] \ts \tput{t_1}{t_2} \synth \algout{\TBase[\qbot]} \ots{\G_2}
  }
  \\
  \infrule[\ruledef{tcabs}{tc$_1$-abs}]{
    \quad\\
    \pcx[\algout{\flt}]{\G, \ctxvar{f}{\TTop[\qhole]}, \ctxvar{x}{\ty[p]{T}}}
      \ts t \chek \ty[q]{U} \ots{\G',\ctxvar{f}{\TTop[q,\qhole]},..} \\
    r = (p,q,\flt)\setminus\{\qfresh{f,x}\}
  }{
    \G[\algout{r}] \ts \tlam{f}{x}{t} \chek \TFun[\synth\algout{r}]{f}{x}{\ty[p]{T}}{\ty[q]{U}} \ots{\G'}
  }
  &
  \infrule[\ruledef{tiabsa}{ti$_6$-abs}]{
    \G,\ctxvar{f}{\TTop[\qfresh]} \ts \ty[p]{T}
    \andalso f \notin^+ T \andalso f \notin p \\
    \pcx[\algout{\flt}]{\G, \ctxvar{f}{\TTop[\qhole]}, \ctxvar{x}{\ty[p]{T}}}
      \ts t \synth \algout{\ty[q]{U}} \ots{\G', \ctxvar{f}{\TTop[q,\qhole]},..} \\
    \algout{V}\!:=\!U[f/^+f] \andalso
    {r} = (p,q,\flt)\setminus\{\qfresh{f,x}\} %
  }{
    \G[\algout{r}] \ts \tlam[{\ty[p]{T}}]{f}{x}{t} \synth \algout{\TFun[r]{f}{x}{\ty[p]{T}}{\ty[q]{V}}} \ots{\G'}
  }
  \\
  \infrule[\ruledef{tilet}{ti$_8$-let}]{
    \quad\\
    \G[\algout{\flt_2}] \ts t_2 \synth \algout{\ty[p]{T}} \ots{\G_1} \\
    \G[\algout{\flt_1}]_1 \ts \tlam[{\ty[p]{T}}]{f}{x}{t_1}
      \synth \algout{\TFun[q]{f}{x}{\ty[p]{T}}{Q}} \ots{\G_2} \\
    Q \ll_{p/x}\ll_{q/f} \algout{\ty[r]{U}} \andalso
    {\flt} = \flt_1,\flt_2,(r\!\setminus\!\{\qfresh{f,x}\})
  }{
    \G[\algout{\flt}] \ts \tapp{(\tlam{f}{x}{t_1})}{t_2} \synth \algout{\ty[r]{U} [p/x,q/f]} \ots{\G_2}
  }
  &
  \infrule[\ruledef{tiapp}{ti$_9$-app}]{
    \G[\algout{\flt_1}] \ts t_1 \synth\qexp
      \algout{\TFun[q]{f}{x}{\ty[p]{T}}{Q}} \ots{\G_1} \\
    \qfresh \notin q \lor f \notin T \andalso
    \G[\algout{\flt_2}]_1 \ts t_2 \chek \ty[\synth\algout{s}]{T[q/f]} \ots{\G_2} \\
    \G[\algout{\flt_3}]_2 \ts s \subapp[]{q} p \ots{\G_3} \\
    Q \ll_{s/x}\ll_{q/f} \algout{\ty[r]{U}} \phantom{mn}
    \flt = \flt_1,\flt_2,\flt_3,(r\!\setminus\!\{\qfresh{f,x}\})
  }{
    \G[\algout{\flt}] \ts \tapp{t_1}{t_2} \synth \algout{\ty[r]{U} [s/x,q/f]} \ots{\G_3}
  }
  \\
  \infrule[\ruledef{tianno}{ti$_{11}$-as}]{
    \G\ts Q \andalso
    \G[\algout{\flt}]\ts t \chek Q \ots{\G'}
  }{
    \G[\algout{\flt}]\ts \tast{t}{Q} \synth \algout{Q} \ots{\G'}
  }
  &
  \infrule[\ruledef{tiexpose}{ti-exp}]{
    \G[\algout{\flt}] \ts t \synth \algout{\ty[q]{T}} \ots{\G'} \andalso
    \G' \ts T \qexp \algout{U}
  }{
    \G[\algout{\flt}] \ts t \synth\qexp \algout{\ty[q]{U}} \ots{\G'}
  }
  \\
  \infrule[\ruledef{tcsub}{tc$_3$-sub}]{
    \G[\algout{\flt}]\ts t \synth \algout{Q} \ots{\G_1} \andalso
    \G_1\ts Q \subt[] \ty[\algout{q}]{T} \ots{\G_2}
  }{
    \G[\algout{\flt,\qclean{q}}] \ts t \chek \ty[\synth\algout{q}]{T} \ots{\G_2}
  }
  &
  \infrule[\ruledef{tqsub}{tq-sub}]{
    \G[\algout{\flt}] \ts t \chek \ty[\synth \algout{q'}]{T} \ots{\G_1} \andalso
    \G_1 \ts q' < q \ots{\G_2}
  }{
    \G[\algout{\flt,\qclean{q}}] \ts t \chek \ty[q]{T} \ots{\G_2}
  }
\end{tabular} \\[1em]

\judgement{Type Exposure}{\BOX{\G\ts T \qexp \algout{T}}} \\[-3ex]
\begin{tabular}{@{}p{.5\linewidth}@{}p{.5\linewidth}@{}}
  \infrule[\ruledef{tutvar}{tu$_1$-tvar}]{
    \ctxtyp{X}{\algout{\ty[q]{T}}} \in \G \andalso
    \G \ts T \qexp \algout{U}
  }{
    \G \ts X \qexp \algout{U}
  }
  &
  \infrule[\ruledef{tubot}{tu$_2$-bot}]{\quad}{
    \G \ts T \qexp \algout{T}
  }
\end{tabular} \\[-1ex]
\caption{Select bidirectional typing rules in \algolang.  Rules in the same form
are ordered by indices, and outputs are marked in \algout{purple}.  Additional
rules are available in \Cref{fig:bidir-suppl}.}
\label{fig:bidir}
\end{figure}
 
We present typing rules for \algolang in \Cref{fig:bidir},
bidirectionalized \cite{DBLP:journals/csur/DunfieldK21} from \Cref{fig:typing}.
It involves inferring~($\synth$) both type and qualifier,
checking the type while inferring a qualifier~($\ty[\synth]{\chek}$),
and checking~($\chek$) both type and qualifier.
As described in \Cref{sec2:inferqual},
all three modes synthesize filters~$\flt$ and
produce output contexts to propagate partially inferred qualifiers.
Mode switching is handled by ascription \rulename{tianno}
and subsumption \rulename{tcsub} \rulename{tqsub}.

The declarative rules \rulename{tcst} \rulename{tvar} \rulename{tref}
become inference rules \rulename{ticst} \rulename{tivar} \rulename{tiref}.
Reference operations are adapted as rules \rulename{tideref}
\rulename{tiassign}, relying on \emph{type exposure} (\qexp) to upcast type
variables when necessary, also defined in \Cref{fig:bidir}.
Both rules require removing self-references in their referent types by
substitution, and \rulename{tideref} thus further involves avoidance.

For unannotated functions, their types must be checked
\rulename{tcabs}, while inference requires annotations \rulename{tiabsa}.
In both cases, function qualifiers are inferred by
collecting body observations and satisfying subtyping constraints via qualifier
holes (\Cref{sec2:inferqual}).
\rulename{tiabsa} further removes contravariant self-references
in the inferred type according to \rulename{cfun}
by polarized substitution $[f/^+f]$.
The rules for type abstractions %
are similar and given in \Cref{fig:bidir-suppl}.

Function application results are always inferred.
The let-binding rule \rulename{tilet} infers the argument type first,
then infers the function type accordingly.
In contrast, the standard application rule \rulename{tiapp} infers the
function type first, then checks the argument type while inferring a qualifier.
It additionally requires type exposure, unpacking self-references
in the codomain, and checking qualifier conformance.
Both rules finish inference with avoidance and qualifier substitution.
The rule for type application %
mirrors \rulename{tiapp} and is given in \Cref{fig:bidir-suppl}.

\subsection{Metatheory} \label{sec:algometa}

For brevity, we leave the detailed metatheory of \algolang to \Cref{app:algo}.
Here, we summarize the key results:
all procedures of \algolang terminate, and they are sound with
respect to the declarative~\maybelang:

\begin{theorem}[Decidability and Soundness of Bidirectional Typing]
  If $\ \WF{\G}$, then it can be decided in finitely many steps whether there exist such $\G'$, $\flt$, $T$, $q$ that $\G[\algout{\flt}] \ts t \Rightarrow \algout{\ty[q]{T}} \dashv \algout{\G'}$, and if so, $\ \G \sqsubseteq \G'$, $\ \G\ts \flt$, $\ \qfresh\notin\flt$, and $\ \cx[\flt]{\G'}\ts t : \ty[q]{T}$.
\end{theorem}

By necessity, \algolang is incomplete.
Since we implement the kernel variant of $F_{<:}$, we cannot accept terms valid
only in the full variant.
Besides, we show that qualifier
unification and thus inference have no principal solution (\Cref{thm:nonprincipal}).
Still, \algolang
produces reasonable results: qualifier exposure and thus \rulename{qcheck} are
proved complete (\Cref{thm:qtp_complete}), self unpacking ($\subt[1]$) results
are proved minimal (\Cref{lemma:self-unpack-equiv}), and the
avoidance strategy using the outermost self-reference is better than other
choices (\Cref{thm:minavoid}).
Given the nontrivial nature of sharing and separation reasoning, as
a type-based approach, we prefer termination over completeness.
In \Cref{sec:eval},
we show that \algolang is expressive enough to check meaningful programs. %
\section{Evaluation and Discussion} \label{sec:eval}

In this section, we evaluate \algolang using programming
examples involving data structures.
Specifically, we show that \algolang is more ergonomic than prior work.
Additionally, we discuss how to extend \algolang with implicit type
instantiation, and compare it with alternative avoidance solutions.

\subsection{Encoding and Natively Supporting Data Structures} \label{sec:vsprior}

Using the essential constructs of lambda calculus,
\algolang can support data structures via Church-encodings.
In addition, we have mechanized support for pairs and lists as examples
of native data types.
Church encodings let us experiment with different reachability
patterns and validate the expressiveness of our avoidance and inference 
mechanisms.
Once validated, we implement the same typing behavior as primitive rules, 
enabling ergonomic interfaces without extensive type annotations.
In this section, we discuss the data structures we support natively or via
encodings.

\paragraph{Escaping Pairs}

Prior work \cite{DBLP:journals/pacmpl/WeiBJBR24} proposed two encoding styles for pairs:
\emph{transparent}, tracking fields as separate components, or \emph{opaque},
without differentiating them.
Illustrated below on the left, the pair is kept transparent within the 
scope of @a,b@, but has to be made opaque when escaping.
Prior work requires a term-level coercion to reconstruct the pair as @OPair@.
After escaping, opaque pairs have to use different eliminators
(@fstO@ instead of @fstT@) to access the fields.

\noindent
\begin{minipage}{.5\linewidth}
\begin{lstlisting}
// prior work: encode Pair,OPair,fstT,fstO,...
val opaque = {
  val a = new Ref(1); val b = new Ref(2)
  val transparent = Pair[...](a, b)
  fstT[...](transparent)  // specialized fstT
  OPair[...](      // require $\eta$-expansion
    fstT[...](transparent),
    sndT[...](transparent))
}                  //: \mup.OPair[..^p^,..^p^]^~*^
fstO[...](opaque)  // specialized fstO
\end{lstlisting}
\end{minipage}
\begin{minipage}{.5\linewidth}
\begin{lstlisting}
// this work: (natively supported)
let opaque = {
  let a = ref 1; let b = ref 2
  let transparent = Pair(a, b) //: Pair[..^a^,..^b^]
  fst(transparent)
  transparent  // automated, seamless avoidance


}              //: \mup.Pair[..^p^,..^p^]^~*^
fst(opaque)    // same eliminator after avoidance
\end{lstlisting}
\end{minipage}
In contrast, %
\algolang (on the right) enables seamless avoidance, both via encodings and natively, that \emph{packs} the
reachability of @a,b@ into the pair.
After avoidance, %
the @opaque@ pair can use the same eliminators as
@transparent@ pairs, without requiring programmers to duplicate eliminators and
differentiate their usages.
Moreover, with native support, instantiation annotations (@[...]@) are not
required.

\paragraph{Lists with Abstract Reachability}

As an extension beyond the formal development of prior work,
we include lists as an example of inductive data types.
Our native list type, @List[T]@, is polymorphic over the element type @T@;
each element is understood to reach the same resources as the list itself.
This interface mirrors Church encodings where element
reachability is tracked using the list's self-reference @h@.
We provide a generic @fold@ operator for implementing other common list operations:

\begin{lstlisting}
// alternative interface: let sum = fun [q](l: List[Ref[Int]^q^]) => ...
let sum = fun (l: \muh.List[Ref[Int]^h^]) = l.fold(0) { i, a => !i + a }
let lst = {
  let a = ref 42; let b = ref 42; let c = ref 42
  sum(a :: nil)                        // List[Ref[Int]^a^]^a^        <= \muh.List[Ref[Int]^h^]^a^
  sum(a :: b :: nil)                   // List[Ref[Int]^a,b^]^a,b^     <= \muh.List[Ref[Int]^h^]^a,b^
  sum(a :: b :: c :: nil)              // List[Ref[Int]^a,b,c^]^a,b,c^  <= \muh.List[Ref[Int]^h^]^a,b,c^
}; sum(lst)                            // List[Ref[Int]^lst^]^lst^      <= \muh.List[Ref[Int]^h^]^lst^
\end{lstlisting}

\paragraph{Lists with Distinct Elements}

As a showcase of the flexibility of our approach,
in \algolang, we can also encode @MList@, where all
elements are guaranteed to be separate
\cite{DBLP:conf/lics/Reynolds02,DBLP:conf/csl/OHearnRY01}.
The cons constructor (@::@, omitting instantiations) enforces head-tail separation.
This separation enables optimizations like iteration reordering in functions such as @miter@.
\begin{lstlisting}
let a = ref 42; let b = ref 42; let c = ref 42
let lst = a :: b :: c :: mnil          // context: [..., lst: MList[Ref[Int]]^a,b,c^]
miter(lst)( fun i => i := !i + 1 )     // can be safely reordered
let lstErr = a :: b :: a :: mnil       // !!Error:!! a is not separate from b::a::mnil
\end{lstlisting}

\subsection{Ergonomics and Performance}

\begin{figure}[b]
  \centering
  \begin{subcaptionblock}{.36\linewidth}
    \makeatother
    \tiny %
\begin{tabular}{l@{\quad}r@{\ }r@{\quad}r@{\ }r@{\ }r} \toprule
& \multicolumn{2}{c}{\textsc{encoded}}	& \multicolumn{3}{c}{\textsc{native}}	\\ \cmidrule(lr){2-3} \cmidrule(lr){4-6}
\textsc{testcase}	&	\textsc{size}	&	\textsc{time (ms)}	&	\textsc{size}	&	\textsc{qual(\%)}	&	\textsc{time (ms)}	\\ \midrule
\textsc{pair-trans}	&	663	&	50.50	&	30	&	0	&	0.25	\\
\textsc{pair-opaque}	&	666	&	45.10	&	33	&	0	&	0.35	\\
\textsc{par-var1}	&	437	&	33.00	&	45	&	10	&	0.90	\\
\textsc{par-var2}	&	437	&	35.40	&	45	&	10	&	0.85	\\
\textsc{par-fun1}	&	322	&	20.65	&	49	&	11	&	1.30	\\
\textsc{par-fun2}	&	363	&	23.35	&	53	&	10	&	1.25	\\
\textsc{par-ref1}	&	446	&	34.95	&	54	&	8	&	1.05	\\
\textsc{par-ref2}	&	446	&	34.70	&	54	&	8	&	1.05	\\
\textsc{seq-ctr}	&	586	&	37.95	&	66	&	3	&	1.50	\\
\textsc{par-ctr1}	&	651	&	42.90	&	65	&	7	&	1.75	\\
\textsc{par-ctr2}	&	606	&	38.00	&	57	&	8	&	1.25	\\
\textsc{par-shr1}	&	409	&	30.00	&	56	&	17	&	1.35	\\
\textsc{par-shr2}	&	409	&	30.20	&	56	&	17	&	1.10	\\
\textsc{withf-esc}	&	132	&	4.00	&	35	&	59	&	0.45	\\
\textsc{withf-seq}	&	599	&	46.55	&	76	&	21	&	1.65	\\
\textsc{withf-par}	&	704	&	61.30	&	103	&	20	&	2.85	\\
\textsc{list-sum}	&	836	&	95.95	&	80	&	3	&	2.15	\\
\textsc{list-map}	&	797	&	66.50	&	94	&	22	&	2.40	\\
\textsc{mlist-sep}	&	606	&	50.15	&	---	&	---	&	---	\\
\textsc{mlist-shr}	&	495	&	38.65	&	---	&	---	&	---	\\
\bottomrule \end{tabular}
     \caption{Example statistics.} \label{tbl:stats}
  \end{subcaptionblock}
  \begin{subcaptionblock}{0.3\linewidth}
    \centering
    \includegraphics[width=\linewidth]{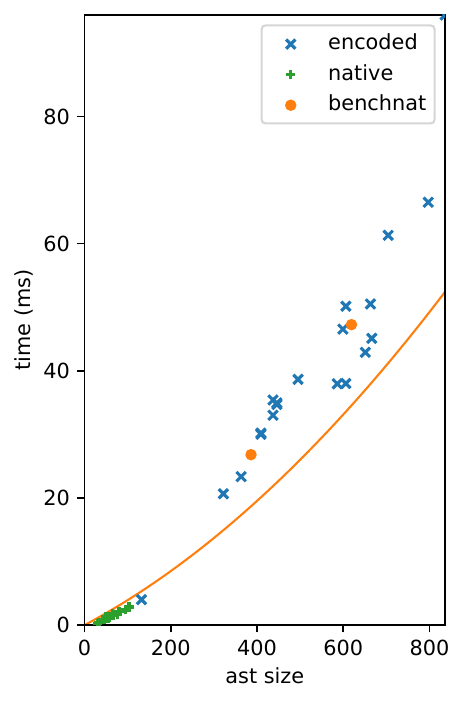}
    \caption{Scaling on examples.} \label{fig:perfmini}
  \end{subcaptionblock}%
  \begin{subcaptionblock}{0.3\linewidth}
    \centering
    \includegraphics[width=\linewidth]{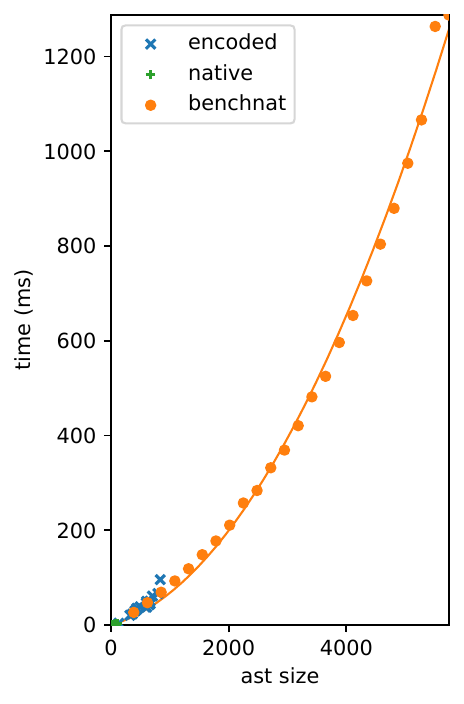}
    \caption{Scaling on \textsc{ChurchNat}.} \label{fig:perfnats}
  \end{subcaptionblock}
  \vspace{-1em}
  \caption{Example statistics and performance scaling results.}
\end{figure}

To validate the ergonomics and performance of \algolang, we implement the
motivating examples 
from \Cref{sec:intro} and variations,
using both Church encodings and native data types.
We refer readers to our artifact \cite{artifact} for full details.

\Cref{tbl:stats} summarizes all examples used for evaluation.
Sizes of examples are measured by the number of AST nodes.
For those implemented using native data types, we further count
\textsc{qual(\%)} as
the percentage of qualifier elements relative to the unqualified program sizes.
For most examples, this is in the range 0--20\%, showing a moderate annotation
burden; the \textsc{withf-esc} example is an outlier, mainly because the
example has simple behavior but carries a nontrivial argument signature.
We note that annotations are concentrated at function definitions; in larger 
programs where each function is called multiple times, the annotation 
percentage would be further amortized.

Time checking each example \rulefmt{time} is measured by the average of
20 consecutive runs.%
\footnote{Measured using Linux 6.6.87.2 on WSL 2.6.3.0 with Intel Ultra 7 258V.}
We plot the checking time of individual examples in \Cref{fig:perfmini}.
To evaluate how \algolang scales on larger inputs, we use synthetic examples
of Church-encoded natural numbers, shown in \Cref{fig:perfnats};
the runtime fits a quadratic trend as both term sizes and context depths increase.
The same quadratic curve, overlaid in \Cref{fig:perfmini}, shows an
imperfect but close match.

\subsection{Implicit Type-and-Qualifier Instantiation}

While \algolang does not infer type-and-qualifier instantiations,
its qualifier inference for functions already adopts an eager instantiation
approach inspired by \citet{DBLP:conf/icfp/DunfieldK13},
resolving lower bound constraints on qualifier holes.
Supporting full instantiation inference would require a more general,
constraint-based approach such as local type inference
\cite{DBLP:journals/toplas/PierceT00},
and we identify two specific challenges.
First, \algolang enjoys \emph{impredicative} polymorphism,
which is beyond the capability of the \emph{complete and easy} approach
\cite{DBLP:conf/icfp/DunfieldK13} and requires more sophisticated solutions
\cite{DBLP:conf/ecoop/ZhaoO22,DBLP:journals/pacmpl/CuiJO23}.
Second, the interleaving of types and qualifiers introduces
upper bound constraints on qualifier variables, requires
eliminating ill-scoped qualifier variables from type constraints,
and demands greatest lower bound computation for qualifiers.
We believe the procedures developed in this work---specifically
avoidance (\Cref{sec4:avoidance}) and qualifier exposure
(\Cref{fig:algqual})---provide reusable foundations towards
generalizing these into a constraint-solving framework.

\subsection{Alternative Avoidance Design} \label{sec:alternative}

Related works that track resources using type qualifiers also observe the
avoidance problem, and we compare them as follows.
Simplistically, some works \cite{DBLP:journals/toplas/BoruchGruszeckiOLLB23,DBLP:journals/pacmpl/GaoP25}
replace variables using \emph{top} qualifiers:

\begin{lstlisting}
{ let b = ref 42; fun () => b }   //: (Unit -> Ref[Int]^?{*}^)
\end{lstlisting}
Nevertheless, without names, they stop tracking the sharing of escaping
resources, but only that \emph{there are some resources}.
This suffices for tracking resources with scoped lifetimes.

Recent work \cite{DBLP:journals/pacmpl/XuBPO25} proposes refining top qualifiers using
existential types. This choice works for their setup, but appears unideal for tracking
reachability in more general settings. Illustrated below:

\begin{lstlisting}
// extRes: \E s <:~*. Container[Resource^s^]
let (a, resA) = extRes  // [..., a <:~*] |- resA ==> ??Container[Resource^?a^]^?resA^
let (b, resB) = extRes  // [..., b <:~*] |- resB ==> ??Container[Resource^?b^]^?resB^
\end{lstlisting}
If @extRes@ is unpacked multiple times, the resulting @a@ and @b@ still
represent the same reachability, but now have individual identities.
This complicates later uses of @resA@ and @resB@, where the type system
needs to reason about their sharing.
In contrast, using self-references, different \emph{unpacked} instances
of the same resource share the identity of the original resource:

\begin{lstlisting}
// extRes: \mus.Container[Resource^s^]^~*^
let resA = extRes  // [...] |- resA ==> ??Container[Resource^?extRes^]^?resA^
let resB = extRes  // [...] |- resB ==> ??Container[Resource^?extRes^]^?resB^
\end{lstlisting} %
\section{Related Work} \label{sec:related}

\paragraph{Tracking Reachability/Capturing in Types}
This work is closely related to the original development of reachability types
\cite{DBLP:journals/pacmpl/BaoWBJHR21} and its polymorphic variant \polylang
\cite{DBLP:journals/pacmpl/WeiBJBR24},
both of which have borrowed ideas from separation
logic~\cite{DBLP:conf/lics/Reynolds02,DBLP:conf/csl/OHearnRY01}.
\citet{DBLP:journals/pacmpl/BaoJ0BR25} then established a logical relation 
model for a monomorphic substrate of \polylang, and
\citet{DBLP:journals/pacmpl/0001HJBR25} extended \polylang's store model for
added expressiveness. Our development builds on the logical relation model \cite{DBLP:journals/pacmpl/BaoJ0BR25}, adapted to match the terminating substrate of \citet{DBLP:journals/pacmpl/0001HJBR25}.
Nevertheless, to address the shared limitation from prior work that their 
subtyping does not allow conversions involving self-reference,
our semantic model deviates from \citet{DBLP:journals/pacmpl/BaoJ0BR25} in 
nontrivial ways, by interpreting locations reachable from values
in a type-dependent manner instead of leaving them invariant.
Moreover, this work presents a mechanized algorithmic development, which has
no prior counterpart.

Closely related, capturing types \cite{DBLP:journals/toplas/BoruchGruszeckiOLLB23}
integrate escape checking and capability tracking into Scala 3,
drawing on modal type theory \cite{DBLP:journals/tocl/NanevskiPP08}
and using boxing to achieve capture tunneling with polymorphism.
Both capturing types and reachability types track named resources via variable names,
but differ in how they handle unnamed resources,
as discussed in \Cref{sec:alternative}.
Early capturing types use a top qualifier for all unnamed resources,
which suffices for escape detection but carries no concrete information
for separation reasoning.
\emph{Degrees of separation} \cite{DBLP:journals/pacmpl/XuBO24}
refines this with read/write distinctions at the cost of extra annotations,
while freshness markers in reachability types
\cite{DBLP:journals/pacmpl/WeiBJBR24}---and in this work---represent
unique resources directly but flag any sharing regardless of effect.
\emph{Reach capabilities} \cite{DBLP:journals/pacmpl/XuBPO25}
further refine top qualifiers with existentials to track internal resources,
sharing motivation with our self-references, but, as discussed in
\Cref{sec:alternative}, may complicate identity and separation reasoning.
To our knowledge,
no formal presentation has combined reach capabilities with separation checks;
our self-references continue to support them.

\paragraph{Ownership Types}

Originally developed for object-oriented programming, ownership type
systems~\cite{DBLP:conf/ecoop/NobleVP98,DBLP:conf/oopsla/ClarkePN98,DBLP:series/lncs/ClarkeOSW13,DBLP:conf/oopsla/PotaninNCB06}
control the access paths to objects and enforce heap invariants.
Many extensions have been developed on top of ownership types,
such as disjointness domains \cite{DBLP:conf/oopsla/BrandauerCW15} to express
local alias invariants and external uniqueness \cite{DBLP:conf/ecoop/ClarkeW03}
to relax the uniqueness restriction, among others.
While ownership types track sophisticated properties, they usually require
considerable annotation effort. %
As a remedy, ownership inference tools~\cite{DBLP:conf/ecoop/DietlEM11,
DBLP:conf/ecoop/HuangDME12} have been developed, combining points-to analysis,
constraint solving, and human interaction to achieve practical results.
As a prominent example, Rust~\cite{DBLP:conf/sigada/MatsakisK14,DBLP:journals/cacm/JungJKD21} implements a strong ownership model following the ``shared XOR mutable'' principle
and has seen wide adoption in systems programming.

Similarly, reachability types aim at regulating aliases in higher-order
languages.
Without a primitive notion of \emph{ownership}, they track sharing and separation
via reachability qualifiers.
While this suffices to achieve some access control,
patterns like uniqueness can only be realized by layering an additional effect system \cite{DBLP:journals/corr/abs-2510-08939}.
By separating effects and requiring an acyclic heap structure,
inference of reachability types is relatively straightforward,
as demonstrated by \algolang in this work.

\paragraph{The Avoidance Problem}
Systems with dependent types often encounter \emph{the avoidance problem}:
a term's type may mention a variable that has gone out of scope,
and no minimal supertype avoiding that variable exists.
The problem does not arise in systems with full dependent types,
where substitution is unrestricted; it appears in systems where variables
can only be substituted by other variables, such as
DOT \cite{DBLP:conf/oopsla/RompfA16},
bounded existentials \cite{DBLP:journals/tcs/GhelliP98},
and ML-style modules.
Consequently, some designs of ML-style module systems lack complete type
checking \cite{DBLP:journals/jfp/Leroy00};
refined designs avoid the problem by elaborating escaped variables
into implicit existential types
\cite{DBLP:conf/popl/DreyerCH03,DBLP:journals/jfp/RossbergRD14}.
In this work, variable escaping is an instance of the avoidance problem,
and our solution, the self-reference, plays a similar role to
implicit existential types in ML-style module systems.
However, self-references are not existential types, as they do not require
term-level packing/unpacking and can be smoothly introduced via subtyping.
Their subtyping also differs from that of bounded existential types
\cite{DBLP:journals/ngc/WehrT11}.

\paragraph{Self-References are not Recursive Types}
While we borrow the $\mu$-notation, self-references differ from recursive types
\cite{DBLP:journals/toplas/AmadioC93} in several respects.
Both let a construct refer to itself, but a self-reference refers to a
\emph{term}---specifically, a dependent variable naming the value being
typed---whereas recursive types refer to themselves at the type level.
In the @Counter@ example of \Cref{sec:intro}, the self-reference stands for
the counter value itself, and cannot be recursively expanded the way
a recursive type unfolds into a copy of itself.
Qualifiers are also finite sets:
self-references \emph{can} be unfolded by subtyping,
but such unfolding saturates, so we never face the infinite equivalent
or isomorphic qualifiers that make subtyping of recursive types subtle
\cite{DBLP:journals/pacmpl/ZhouZO23}.
Moreover, when a self-reference is unfolded, a function value of the desired
qualifier is guaranteed to be inhabited, avoiding the bad-bounds unsoundness
familiar from DOT-like systems \cite{DBLP:conf/oopsla/AminT16}.
We leave extending reachability types with genuine recursive types as future work.

\paragraph{Type Inference with Subtyping} %
The Hindley-Milner (HM) typing algorithm \cite{DBLP:conf/popl/DamasM82}
is able to infer principal types in polymorphic languages without subtyping.
HM(X) \cite{DBLP:journals/tapos/OderskySW99} extends HM to constrained types
including subtyping while preserving principal types.
More recent approaches aim for compactness, inferring more compact types
in the presence of subtyping \cite{DBLP:journals/pacmpl/Parreaux20,DBLP:conf/popl/DolanM17};
\citet{DBLP:journals/pacmpl/ParreauxC22} further integrates a set-algebraic
structure into the type system, which can be extended to track
non-escaping region variables \cite{DBLP:journals/pacmpl/GaoP25}.

As opposed to global inference algorithms, bidirectional typing
\cite{DBLP:journals/csur/DunfieldK21,DBLP:conf/popl/DunfieldP04}
features local reasoning at the cost of some explicit type annotations.
Our adaptation for $\maybelang$ layers qualifiers on top of bidirectional
typing: we infer qualifiers via an eager instantiation algorithm in the style of
\citet{DBLP:conf/icfp/DunfieldK13},
and employ a hybrid checking/synthesizing mode similar to the refinement
strengthening of \citet{DBLP:conf/pldi/PolikarpovaKS16}.
Type and qualifier \emph{instantiations} at polymorphic call sites, however,
remain explicit.
Inferring them is non-trivial: \algolang supports impredicative polymorphism,
for which implicit instantiation can render subtyping undecidable
\cite{DBLP:conf/mfcs/Chrzaszcz98} and lies beyond the \emph{complete-and-easy}
approach \cite{DBLP:conf/icfp/DunfieldK13}, calling for more
sophisticated constraint-based solutions
\cite{DBLP:journals/toplas/PierceT00,DBLP:conf/popl/OderskyZZ01,DBLP:conf/ecoop/ZhaoO22,DBLP:journals/pacmpl/CuiJO23};
moreover, the interleaving of types and qualifiers introduces additional
scoping and bound-computation concerns.
We leave this as future work.

\section{Conclusion} \label{sec:conc}

In this work, we investigated expressive subtyping with self-references and
inference of qualifiers in reachability types. We identified the limitations of
prior work and proposed a new declarative reachability type system $\maybelang$
with the combined notion of type-and-qualifier subtyping, enabling sound specification of
the subtyping behavior of self-references.  We also developed the first bidirectional
typing algorithm $\algolang$ for reachability types and proved that it is decidable and sound
with respect to $\maybelang$.  To evaluate the expressiveness
of our algorithm, we implemented and tested a variety of examples
involving data structures.
Both the declarative and algorithmic systems are mechanized in Lean.
Together, they advance the usability of reachability types, and we believe
they represent major steps towards practical programming languages. 
\section*{Data Availability Statement}

Our artifact is available on Zenodo \cite{artifact} and
\url{https://github.com/TiarkRompf/reachability/tree/main/checking/lean_v2}.

\begin{acks}
We thank Haotian Deng for related development on reachability types.
We thank Zhe Zhou, Patrick LaFontaine, Craig Liu, and Yueyang Tang for their
feedback on early drafts.
We thank anonymous reviewers for their valuable comments.
This work was supported in part by \grantsponsor{NSF}{NSF}{} award
\grantnum{NSF}{2348334} and an Augusta University faculty startup package,
as well as gifts from Meta, Google, Microsoft, and VMware.
\end{acks}

\newpage
\appendix
\renewcommand{\thefigure}{\thesection.\arabic{figure}}
\counterwithin{figure}{section}

\section{Additional Typing Rules} \label{app:typing}

\Cref{fig:closed} defines the well-formedness conditions for \maybelang,
and \Cref{fig:decpoly} presents the typing and subtyping rules related to
type polymorphism and type annotations in \maybelang.

\begin{figure}[ht]
\footnotesize\makeatother
\setlength{\afterruleskip}{\smallskipamount}

\judgement{Occurrence Predicate}{%
    \BOX{\strut x\notin T}
  \ \BOX{\strut x\notin^+ T}
  \ \BOX{\strut x\notin^- T}}\\[-1em]
\begin{tabular}{@{}p{.34\linewidth}@{}p{.33\linewidth}@{}p{.33\linewidth}@{}}
  \infax{
    \overline{ y \notin \TBase } \andalso
    \overline{ y \notin \TTop } \andalso
    \overline{ y \notin X } }
  &
  \infax{
    \overline{ y \notin^+ \TBase } \andalso
    \overline{ y \notin^+ \TTop } \andalso
    \overline{ y \notin^+ X } }
  &
  \infax{
    \overline{ y \notin^- \TBase } \andalso
    \overline{ y \notin^- \TTop } \andalso
    \overline{ y \notin^- X } }
  \\
  \infrule{
    y \notin T \quad y \notin p \qquad y \notin U \quad y \notin q
  }{
    y \notin \TSRef{h}{\ty[p]{T}}{\ty[q]{U}}
  }
  &
  \infrule{
    y \notin^- T \qquad y \notin^+ U \quad y \notin q
  }{
    y \notin^+ \TSRef{h}{\ty[p]{T}}{\ty[q]{U}}
  }
  &
  \infrule{
    y \notin^+ T \quad y \notin p \qquad y \notin^- U
  }{
    y \notin^- \TSRef{h}{\ty[p]{T}}{\ty[q]{U}}
  }
  \\
  \infrule{
    y \notin T \quad y \notin p \qquad y \notin U \quad y \notin q
  }{
    y \notin \TFun{f}{x}{\ty[p]{T}}{\ty[q]{U}}
  }
  &
  \infrule{
    y \notin^- T \qquad y \notin^+ U \quad y \notin q
  }{
    y \notin^+ \TFun{f}{x}{\ty[p]{T}}{\ty[q]{U}}
  }
  &
  \infrule{
    y \notin^+ T \quad y \notin p \qquad y \notin^- U
  }{
    y \notin^- \TFun{f}{x}{\ty[p]{T}}{\ty[q]{U}}
  }
  \\
  \infrule{
    y \notin T \quad y \notin p \qquad y \notin U \quad y \notin q
  }{
    y \notin \TAll{f}{X}{\ty[p]{T}}{\ty[q]{U}}
  }
  &
  \infrule{
    y \notin^- T \qquad y \notin^+ U \quad y \notin q
  }{
    y \notin^+ \TAll{f}{X}{\ty[p]{T}}{\ty[q]{U}}
  }
  &
  \infrule{
    y \notin^+ T \quad y \notin p \qquad y \notin^- U
  }{
    y \notin^- \TAll{f}{X}{\ty[p]{T}}{\ty[q]{U}}
  }
\end{tabular} \\[1em]

\judgement{Well-Formed Qualifiers, Types, and Qualified Types}{%
  \BOX{\strut \G\ts q} \ \BOX{\strut \G\ts T} \ \BOX{\strut \G\ts Q}}\\[-1em]
\typicallabel{c-fun}
\begin{tabular}{@{}p{.2\linewidth}@{}p{.24\linewidth}@{}p{.18\linewidth}@{}p{.18\linewidth}@{}p{.2\linewidth}@{}}
  \infrule[\ruledef{cqual}{c-qual}]{
    q \subs \qfresh{\dom\G}
  }{
    \G \ts q
  }
  &
  \infrule[\ruledef{cqtype}{c-qtype}]{
    \G\ts q \andalso \G\ts \ty{T}
  }{
    \G \ts \ty[q]{T}
  }
  &
  \infrule[c-base]{\quad}{\G \ts \TBase}
  &
  \infrule[c-top]{\quad}{\G \ts \TTop}
  &
  \infrule[c-tvar]{\ctxtyp{X}{Q} \in \G}{\G \ts X}
\end{tabular} \\
\begin{tabular}{@{}p{\linewidth}@{}}
  \infrule[c-ref]{
    \G,\ctxvar{h}{\TTop[\qfresh]}\ts \ty[p]{T} \qquad
    \G,\ctxvar{h}{\TTop[\qfresh]}\ts \ty[q]{U} \qquad
    \newcontent{h \notin^+ T} \qquad
    \newcontent{h \notin p} \qquad
    \newcontent{h \notin^- U}
  }{
    \G\ts \TSRef{h}{\ty[p]{T}}{\ty[q]{U}}
  }
  \\
  \infrule[\ruledef{cfun}{c-fun}]{
    \G,\ctxvar{f}{\TTop[\qfresh]}\ts \ty[p]{T} \qquad
    \G,\ctxvar{f}{\TTop[\qfresh]},\ctxvar{x}{\ty[p]{T}} \ts\ty[q]{U} \qquad
    \newcontent{f \notin^+ T} \qquad
    \newcontent{f \notin p} \qquad
    \newcontent{f \notin^- U}
  }{
    \G\ts \TFun{f}{x}{\ty[p]{T}}{\ty[q]{U}}
  }
  \\
  \infrule[c-all]{
    \G,\ctxvar{f}{\TTop[\qfresh]}\ts \ty[p]{T} \qquad
    \G,\ctxvar{f}{\TTop[\qfresh]},\ctxtyp{X}{\ty[p]{T}} \ts\ty[q]{U} \qquad
    \newcontent{f \notin^+ T} \qquad
    \newcontent{f \notin p} \qquad
    \newcontent{f \notin^- U}
  }{
    \G\ts \TAll{f}{X}{\ty[p]{T}}{\ty[q]{U}}
  }
\end{tabular} \\[1em]

\judgement{Well-Formed Context}{\BOX{\strut \WF{\G}}}\\[-1em]
\begin{tabular}{@{}p{.25\linewidth}@{}p{.3\linewidth}@{}p{.3\linewidth}@{}p{.15\linewidth}@{}}
  \infrule{\WF{\G} \qquad \G\ts\ty{Q}}{\WF{\G,\ \ctxvar{x}{Q}}}
  &
  \infrule{\WF{\G} \qquad \G\ts{\ty[q]{T}}}{\WF{\G,\ \newcontent{\ctxvar{f}{\ty[q,\qhole]{T}}}}}
  &
  \infrule{\WF{\G} \qquad \G\ts\ty{Q}}{\WF{\G,\ \ctxtyp{X}{Q}}}
  &
  \infrule{\quad}{\WF{\qbot}}
\end{tabular}\\[-1ex]
\caption{Selected rules of well-formedness in $\maybelang$.
Compared with \cite{DBLP:journals/pacmpl/WeiBJBR24} (\newcontent{\text{shaded}}\!),
qualifier holes are permitted in the typing context,
and self-references are permitted in deep, covariant positions.} \label{fig:closed}
\vspace{-3ex}
\end{figure} 
\begin{figure}[ht]
\footnotesize \typicallabel{t-get} \makeatother
\setlength{\afterruleskip}{\smallskipamount}
\judgement{Additional Typing}{\BOX{\G[\flt] \ts t : Q}} \\[-1em]
\begin{tabular}{@{}p{.28\linewidth}@{}p{.32\linewidth}@{}p{.4\linewidth}@{}}
  \infrule[\ruledef{tanno}{t-as}]{
    \phantom{text} \\
    \G[\flt]\ts t : Q
  }{
    \G[\flt]\ts \tast{t}{Q} : Q
  }
  &
  \infrule[\ruledef{tabsa}{t-abs'}]{
    F = \TFun{f}{x}{P}{Q} \\
    \G[\flt] \ts \tlam{f}{x}{t} : \ty[q]{F}
  }{
    \G[\flt] \ts \tlam[P]{f}{x}{t} : \ty[q]{F}
  }
  &
  \infrule[\ruledef{ttabsa}{t-tabs'}]{
    F = \TAll{f}{X}{P}{Q} \\
    \G[\flt] \ts \ttlam{f}{X}{t} : \ty[q]{F}
  }{
    \G[\flt] \ts \ttlam[P]{f}{X}{t} : \ty[q]{F}
  }
\end{tabular} \\[-1ex]
\caption{Additional typing rules of \maybelang,
related to type annotations.} \label{fig:decpoly}
\end{figure}
 
\subsection{Mechanized Extension for Arbitrary Argument Reachability}
\label{sec:arbitraryarg}

In addition to the two conformance conditions defined in \Cref{fig:typing},
we mechanize an additional mode that accepts arbitrary parameter reachability,
if the function argument qualifier includes both the freshness marker $\qfresh$
and the self-reference $f$ of the current function:

{
\footnotesize \setlength{\afterruleskip}{\medskipamount}
\infrule[f-wild]{
    f~\text{is the self-reference of the current function}
}{
    \G[\flt] \ts s \subapp{q} \qfresh{f},p
}
}

To enable this wildcard case, we generalize the definition of well-formed
function types (\Cref{fig:closed}) to allow self-references in the argument
qualifier \emph{only if} they are accompanied by freshness markers.
Semantically, since self-references denote the observation of the function,
and freshness denotes resources not observed in the function, their combination
does not cause a variance issue, but means arbitrary resources.
Subtyping rules can also be generalized, so that when the subtype function
includes $\qfresh{f}$ in the argument qualifier,
the supertype argument qualifier can be arbitrary.

Crucially, this interpretation does not yield a general \emph{top} qualifier,
but can only be used in function arguments, and requires referring to the
most recent self-reference $f$.
Precisely describing this behavior in rules is subtle and thus omitted
from the main text, but both \maybelang and \algolang have supported this
extension in typing and subtyping, which also comes in handy in spelling
examples. %
\section{Semantic Soundness of \maybelang} \label{sec:lr}

\newcommand{\VV}{\ensuremath{\mathbb{V}}}
\newcommand{\VVV}{\ensuremath{\mathcal{V}}}
\newcommand{\interp}[2][]{\ensuremath{[\![#2]\!]\ifstrempty{#1}{}{_{#1}}}}
\newcommand{\mkTpl}[1]{\ensuremath{\langle#1\rangle}}
\newcommand{\tforall}[1]{\ensuremath{\forall #1.\ }}
\newcommand{\texists}[1]{\ensuremath{\exists #1.\ }}
\newcommand{\timplies}{\ensuremath{\Rightarrow\ }}
\newcommand{\stsub}[1][]{\ensuremath{\sqsubseteq\ifstrempty{#1}{}{_{#1}}}}
\newcommand{\valintp}[1]{\ensuremath{V\!\interp{#1}}}
\newcommand{\envintp}[2]{\ensuremath{G_{#1}\!\interp{#2}}}
\newcommand{\expintp}[2]{\ensuremath{E\interp[#2]{#1}}}
\newcommand{\sttyp}{\ensuremath{\!:\!}}
\newcommand{\steff}[1]{\ensuremath{\!\to_{#1}\!}}
\newcommand{\tinv}[1]{\ensuremath{\mathsf{Inv}_{#1}}}
\newcommand{\defeq}{\ensuremath{\overset{\mathrm{def}}{=}}}

Unlike prior work~%
\cite{DBLP:journals/pacmpl/BaoWBJHR21,DBLP:journals/pacmpl/WeiBJBR24,DBLP:journals/pacmpl/0001HJBR25}
that establishes syntactic soundness,
we prove the type soundness of $\maybelang$ using \emph{logical relations}~%
\cite{DBLP:journals/jacm/TimanyKDB24}. The dynamic semantics is formulated as a
big-step interpreter~\cite{DBLP:conf/popl/AminR17}.
Our semantic interpretation is adapted from~\citet{DBLP:journals/pacmpl/BaoJ0BR25},
and we extend their model by adding interpretations for type polymorphism,
shallowly qualified, dual-component references~\cite{DBLP:journals/pacmpl/0001HJBR25},
and deep occurrences of bound variables inside types.
Besides efforts to reflect recent advances in reachability types,
our new subtyping design requires our semantic model to interpret the
reachable locations of values in a type-dependent manner,
and this shift necessitated nontrivial changes.
Here, we provide details for our logical relation model.
Keys definitions are presented in \Cref{fig:lrdefs}. We follow the standard
erasure semantics for lambda calculus with references in the big-step style
($\Downarrow$), and we omit their definitions here.

Compared with \citet{DBLP:journals/pacmpl/BaoJ0BR25}, a key difference
lies in how we denote qualifiers to locations.
While they require computing reachable locations from the values of each
involved variable using the value environment ($H$ in their presentation, $\rho$
in ours), in our logical relation setup, we denote qualifiers using a dedicated
\emph{location environment} $H$.
While $H$ and $\rho$ are still related by environment interpretation in our
formalization, locations are no longer strictly determined by values.

\begin{figure}[ht]
\makeatother \footnotesize \renewcommand{\arraystretch}{1.2}
\[\begin{array}{lcll}
  v & ::= & c \mid \ell \mid \mkTpl{\rho, \tlam{f}{x}{t}}
           \mid \mkTpl{\rho, \ttlam{f}{X}{t}} & \text{Values} \\
  L &\in & \mathcal{P}_{\mathsf{fin}}(\mathsf{Locs}) & \text{Locations} \\
  \VV    & ::= & \{\mkTpl{v,L}\} \mid \{\mkTpl{\Sigma,v,L}\} & \text{Value Type} \\[1ex]
  \sigma & ::= & \qbot \mid \sigma,\ctxvar{\ell}{v} & \text{Store} \\
  \Sigma & ::= & \qbot \mid \Sigma,\ctxvar{\ell}{\mkTpl{\VV,L}} & \text{Store Typing} \\
  \rho   & ::= & \qbot \mid \rho,\ctxvar{x}{v} & \text{Value Environment} \\
  H      & ::= & \qbot \mid H,\ctxvar{x}{L} & \text{Location Environment} \\
  \VVV   & ::= & \qbot \mid \VVV,\ctxvar{X}{\VV} & \text{Value Type Environment} \\[1ex]
  \interp[H]{q} & = & \bigcup_{x\in q} H(x) & \text{Qualifier Denotation} \\
\end{array}
\]
\vspace{-1ex}
\caption{Definitions in logical relations.} \label{fig:lrdefs}
\end{figure}

\paragraph{Store Predicates}

In \Cref{fig:storepreds}, we define the store predicates,
including well-typed stores, store typing extension, and store effects.
These definitions relate stores to store typings, and describe the invariants
when they change.

\begin{figure}[t]
\makeatother \footnotesize \renewcommand{\arraystretch}{1.2}
\[\begin{array}{lclll}
  \sigma\sttyp\Sigma & \defeq &
    \dom{\sigma} = \dom{\Sigma}
      \land (\tforall{\ell\in \dom{\sigma}} & \text{Well-Typed Store} \\ & & \quad
    \texists{\VV\,L} \Sigma(\ell) = \mkTpl{\VV, L} \land
      \texists{L' \subs L} \mkTpl{\sigma(\ell), L'} \in \VV) \\

  \interp[\Sigma]{\ell} & \defeq & \{\ell\} \cup \interp[\Sigma]{L} \qquad
    \text{where}~\Sigma(\ell) = \mkTpl{\VV, L} & \text{Location Transitive Closure} \\
  \interp[\Sigma]{L} & \defeq & \bigcup_{\ell\in L} \interp[\Sigma]{\ell} \\

  \Sigma \stsub[L] \Sigma' & \defeq &
    L\subs\dom{\Sigma} \land L\subs\dom{\Sigma'} \land
    (\tforall{\ell \in L} \Sigma(\ell) = \Sigma'(\ell)) & \text{Store Typing Extension} \\
  \Sigma \stsub[L]^* \Sigma' & \defeq &
    \Sigma \stsub[{\interp[\Sigma]{L}}] \Sigma' \\
  \Sigma \stsub \Sigma' & \defeq &
    \Sigma \stsub[\dom{\Sigma}] \Sigma' \\

  \sigma\steff{L}\sigma' & \defeq &
    \tforall{\ell \in \dom{\sigma}} \ell \notin L \timplies
    \sigma(\ell) = \sigma'(\ell) & \text{Store Effects} \\
\end{array}
\]
\vspace{-1ex}
\caption{Store predicates in logical relations.} \label{fig:storepreds}
\end{figure}

\paragraph{Value Interpretation}

In \Cref{fig:vtp}, we present the definitions of our value interpretations, or
value typing. Such interpretation is indexed by types, and summarizes all
possible values of the type.
Compared with \citet{DBLP:journals/pacmpl/BaoJ0BR25}, our value interpretation
involves the location environments $H$ instead of the value environments $\rho$
(written $H$ in their formalization); we add the value type environment $\VVV$
to support type polymorphism; and we include locations $L$ in value
interpretations.
Thus, locations a value may reach can vary.
This is the key to supporting variance of self-reference locations in our
semantic subtyping.

\begin{theorem}[Value Locations May Grow]
If $\mkTpl{\Sigma,H,\VVV,v,L} \in \valintp{T}$,
and $L\subs L'\subs \dom{\Sigma}$,
there is $\mkTpl{\Sigma,H,\VVV,v,L'} \in \valintp{T}$.
\end{theorem}

\begin{figure}[ht]
\footnotesize \renewcommand{\arraystretch}{1.2}
\judgement{Value Interpretation}{}
\[\begin{array}{rll}
  \valintp{\TBase} & = &
    \{\mkTpl{\Sigma, H, \VVV, c, L} \mid L \subs\dom{\Sigma}\} \\

  \valintp{\TSRef{h}{\ty[p]{T}}{\ty[q]{U}}} & = &
    \{\mkTpl{\Sigma, H, \VVV, \ell, L} \mid
      \ell\in L\subs\dom{\Sigma} \land
      \texists{\VV'\, L'}
      \Sigma(\ell) = \mkTpl{\VV', L'}\ \land
      \\ & & \hspace{1em}
      (\tforall{\sigma'\sttyp\Sigma'}
        \Sigma \stsub[L]^* \Sigma' \timplies
        \tforall{H'\! = H,\ctxvar{h}{L}}
        \\ & & \hspace{2em}
        (\tforall{v} \tforall{L_1 \subs \interp[H]{p}}
          \mkTpl{\Sigma', H', \VVV, v, L_1} \in \valintp{T} \timplies
          \texists{L_2\subs L'}
          \mkTpl{v, L_2} \in \VV')\ \land
        \\ & & \hspace{2em}
        (\tforall{v} \tforall{L_1 \subs L'}
          \mkTpl{v, L_1} \in \VV' \timplies
          \texists{L_2\subs \interp[H']{q}}
          \mkTpl{\Sigma', H', \VVV, v, L_2} \in \valintp{U}))\} \\

  \valintp{ \TFun{f}{x}{\ty[p]{T}}{\ty[q]{U}}} & = &
    \{\mkTpl{\Sigma, H, \VVV, \mkTpl{\rho, \tlam{f}{x}{t}}, L} \mid
      L \subs \dom{\Sigma}\ \land
      \\ & & \hspace{1em}
      (\tforall{\sigma_1\sttyp\Sigma_1}
        \Sigma \stsub[L]^* \Sigma_1 \timplies
        \tforall{H_1\! = H,\ctxvar{f}{L}}
        \\ & & \hspace{2em}
        \tforall{v_1} \tforall{L_1 \subs \interp[H_1]{\qclean{p}}
          \cup_{\qfresh\in p} \overline{L}}
          \mkTpl{\Sigma_1, H_1, \VVV, v_1, L_1} \in \valintp{T} \timplies
        \\ & & \hspace{2em}
        \tforall{\rho'\! = \rho,\ctxvar{f}{\mkTpl{\rho,\tlam{f}{x}{t}}},\ctxvar{x}{v_1}}
          \tforall{H_2\! = H_1,\ctxvar{x}{L_1}}
          \\ & & \hspace{3em}
          \texists{v_2} \texists{\sigma_2\sttyp\Sigma_2}
            {\sigma_1,\rho',t}\Downarrow{\sigma_2, v_2} \land
            \Sigma_1 \stsub \Sigma_2 \land
            \sigma_1\steff{L,L_1}\sigma_2 \ \land
          \\ & & \hspace{3em}
          \texists{L_2 \subs \interp[H_2]{\qclean{q}}
              \cup_{\qfresh\in q} \overline{\dom{\Sigma_1}}}
            \mkTpl{\Sigma_2, H_2, \VVV, v_2, L_2} \in \valintp{U})\} \\

  \valintp{\TTop}  & = &
    \{\mkTpl{\Sigma, H, \VVV, v, L} \mid
      L \subs\dom{\Sigma}\} \\

  \valintp{X} & = &
    \{\mkTpl{\Sigma, H, \VVV, v, L} \mid
      L \subs \dom{\Sigma} \land
      (\tforall{\Sigma'} \Sigma \stsub[L]^* \Sigma' \timplies \\ & & \qquad
        \tforall{L' \subs \dom{\Sigma'}} L\subs L' \timplies
        \mkTpl{\Sigma', v, L'} \in \VVV(X)\ )\} \\

  \valintp{ \TAll{f}{x}{\ty[p]{T}}{\ty[q]{U}}} & = &
    \{\mkTpl{\Sigma, H, \VVV, \mkTpl{\rho, \ttlam{f}{x}{t}}, L} \mid
      L \subs \dom{\Sigma}\ \land \\ & & \qquad
      (\tforall{\sigma_1\sttyp\Sigma_1}
        \Sigma \stsub[L]^* \Sigma_1 \timplies
        \tforall{H_1\! = H,\ctxvar{f}{L}}\\ & & \qquad
      \ \tforall{\VV} (\tforall{\sigma'\sttyp\Sigma'} \tforall{v'\,L'}
        \mkTpl{\Sigma', v', L'}\in\VV \timplies
        \mkTpl{\Sigma', H_1,\VVV, v', L'}\in\valintp{T}) \timplies \\ & & \qquad
      \ \tforall{L_1 \subs \dom{\Sigma_1}}
        L_1 \subs \interp[H_1]{\qclean{p}} \cup_{\qfresh\in p} \overline{L}
        \timplies \\ & & \qquad
      \ \tforall{\rho'\! = \rho,\ctxvar{f}{\mkTpl{\rho,\ttlam{f}{x}{t}}},\ctxvar{x}{c}}
        \tforall{H_2\! = H_1,\ctxvar{x}{L_1}}
        \tforall{\VVV_2\! = \VVV,\ctxvar{X}{\VV}}\\ & & \qquad\qquad
        \texists{v_2} \texists{\sigma_2\sttyp\Sigma_2}
          {\sigma_1,\rho',t}\Downarrow{\sigma_2, v_2} \land
          \Sigma_1 \stsub \Sigma_2 \land
          \sigma_1\steff{L,L_1}\sigma_2\ \land \\ & & \qquad\qquad
        \texists{L_2 \subs \interp[H_2]{\qclean{q}}
            \cup_{\qfresh\in q}\overline{\dom{\Sigma_1}}}
          \mkTpl{\Sigma_2, H_2,\VVV_2, v_2, L_2} \in \valintp{U})\} \\
\end{array}\]

\vspace{-1ex}
\caption{Definitions for value interpretations.} \label{fig:vtp}
\end{figure}

\paragraph{Environment Interpretation}

With value interpretations defined, we then define environment interpretations
in \Cref{fig:lrelse}, relating different sorts of typing environments,
specifically values $\rho$, locations $H$, and value types $\VVV$, with respect
to the observation filter $\flt$.
Whereas \citet{DBLP:journals/pacmpl/BaoJ0BR25} involves a single environment
interpretation with both $\tinv{o}$ (for individual entries) and $\tinv{s}$
(for location separation), we separate it into two, as our semantic subtyping
is defined without $\tinv{s}$.
Additionally, we add additional $\tinv{o}$ cases for describing self-references
and type-and-qualifier bound entries.
Of particular interest to this work, while qualifier denotations are upper 
bounds to variable locations, they are lower bounds to self-reference locations
(modulo freshness), justifying the flipped directions between \rulename{qself}
and \rulename{qvar}.

\begin{figure}[ht]
\makeatother \footnotesize \renewcommand{\arraystretch}{1.2}
\judgement{Environment Interpretation}{}
\[\begin{array}{l@{\ }l@{\quad}l}

  \tinv{o}(\Sigma,\rho,H,\VVV,\flt) & := &
    \tforall{\ctxvar{x}{\ty[q]{T}} \in \G}
      (x \in \flt \timplies
        \mkTpl{\Sigma, H, \VVV, \rho(x), H(x)} \in \valintp{T}) \land
      (\qfresh\notin q \timplies
        H(x) \subs \interp[H]{q}) \\

  \tinv{o}^{\mu}(\Sigma,\rho,H,\VVV,\flt) & := &
    \tforall{\ctxvar{f}{\ty[q]{T}} \in \G}
      (f \in \flt \timplies
        \mkTpl{\Sigma, H, \VVV, \rho(f), H(f)} \in \valintp{T}) \land
      (\qfresh\notin q \timplies H(f) \subs \interp[H]{q})\ \land\\ & & \qquad
      \interp[H]{\qclean{q}} \subs H(f) \\

  \tinv{o}^{<:}(\Sigma,\rho,H,\VVV,\flt) & := &
    \tforall{\ctxtyp{X}{\ty[q]{T}} \in \G}
      (x \in \flt \timplies H(x) \subs \dom{\Sigma}) \land
      (\qfresh\notin q \timplies H(x) \subs \interp[H]{q})\ \land\\ & & \qquad
      (\tforall{\sigma'\sttyp\Sigma'}\tforall{v'\,L'}
        \mkTpl{\Sigma',v',L'} \in \VVV(X) \timplies
        \mkTpl{\Sigma',H,\VVV,v',L'} \in \valintp{T}) \\

  \envintp{1}{\G[\flt]} & := & \{\mkTpl{\Sigma,\rho,H,\VVV} \mid
    \flt \subs \dom{\G} = \dom{\rho} = \dom{H}\ \land \\ & & \qquad
    \tinv{o}(\Sigma,\rho,H,\VVV,\flt) \land
    \tinv{o}^{\mu}(\Sigma,\rho,H,\VVV,\flt) \land
    \tinv{o}^{<:}(\Sigma,\rho,H,\VVV,\flt) \}\\

  \tinv{s}(H, \flt) & := &
    \tforall{p\subs \qfresh{\flt}} \tforall{q\subs \qfresh \flt}
      p \overlap q \subs \qfresh\flt \timplies
      \interp[H]{\qclean{p}} \cap \interp[H]{\qclean{q}} \subs
        \interp[H]{\qclean{(p \overlap q)}} \\

  \envintp{2}{\G[\flt]} & := & \{\mkTpl{\Sigma,\rho,H,\VVV} \mid
    \mkTpl{\Sigma,\rho,H,\VVV} \in \envintp{1}{\G[\flt]} \land
    \tinv{s}(H, \flt) \} \\
\end{array}\]

\judgement{Semantic Subtyping}{}
\[\begin{array}{lll}
  \G \mts p \subq q & \defeq &
    \tforall{\flt} \tforall{\mkTpl{\Sigma,\rho,H,\VVV} \in
      \envintp{1}{\G[\flt]}}
    \qfresh \notin p \setminus q \land
    \interp[H]{\qclean{p}} \subs \interp[H]{\qclean{q}} \\
  
  \G \mts \ty[p]{T} \subt \ty[q]{U} & \defeq &
    \tforall{\flt} \tforall{\mkTpl{\Sigma,\rho,H,\VVV} \in
        \envintp{1}{\G[\flt]}}
      \qfresh \notin p \setminus q \land
        \interp[H]{\qclean{p}} \subs \interp[H]{\qclean{q}}\ \land
      \\ & & \hspace{1em}
      (\tforall{\sigma\sttyp\Sigma}
        \interp[H]{\qclean{q}}\subs\interp[H]{\flt}\timplies
        \\ & & \hspace{2em}
        \tforall{v\,L}
          (\qfresh\notin p \timplies L \subs \interp[H]{p}) \timplies
          \mkTpl{\Sigma,H,\VVV,v,L} \in \valintp{T} \timplies
        \\ & & \hspace{2em}
        \texists{L' \subs \interp[H]{\qclean{q}}}
          \mkTpl{\Sigma,H,\VVV,v,L\cup L'} \in \valintp{U})
\end{array}\]

\judgement{Expression Interpretation and Semantic Typing}{}
\[\begin{array}{lcl}
  \expintp{\ty[q]{T}}{\flt} & := & \{ \mkTpl{\sigma,\Sigma,\rho,H,\VVV,t} \mid
    \texists{\sigma'\sttyp\Sigma'} \texists{v}
    {\sigma,\rho,t}\Downarrow{\sigma',v} \land
    \Sigma\stsub\Sigma' \land
    \sigma\steff{\interp[H]{\flt}}\sigma' \ \land \\ & & \qquad
      \texists{L \subs \interp[H]{q\cap\flt}
        \cup_{\qfresh\in q}\overline{\dom{\Sigma}}}
      \mkTpl{\Sigma',H,\VVV,v,L} \in \valintp{T} \} \\
  
  \G[\flt] \mts t: \ty[q]{T} & \defeq &
    \tforall{\mkTpl{\Sigma,\rho,H,\VVV} \in \envintp{2}{\G[\flt]}}
    \tforall{\sigma\sttyp\Sigma}
      \mkTpl{\sigma,\Sigma,\rho,H,\VVV,t} \in \expintp{\ty[q]{T}}{\flt}
\end{array}\]

\vspace{-1ex}
\caption{Environment and expression interpretations, and semantic typing and subtyping.}
\label{fig:lrelse}
\end{figure}

\paragraph{Semantic Subtyping and Typing}

With all definitions in place, we define semantic subtyping and typing
in \Cref{fig:lrelse}. The typing definition is similar to that defined in
\citet{DBLP:journals/pacmpl/BaoJ0BR25}, and we describe the subsumption
relations here.
The subqualifying relation $p <: q$ means that the denotation of $p$ is no
greater than $q$ for all permissible contexts.
The subtyping relation states that all values that can be interpreted as $T$
can also be interpreted as $U$, potentially with additional locations $L'$ no
greater than the denotation of qualifier $q$.
This allows self-references to observe changes in qualifiers during subtyping.

\begin{theorem}[Fundamental]
If a term $t$ is syntactically well-typed, \ie, \:$\G[\flt]\ts t: \ty[q]{T}$, and both the context $\G$ and the store $\sigma$ are well-formed,
then $t$ is also semantically well-typed (\:$\G \models t : \ty[q]{T} $).
Specifically, $t$ evaluates to a value $v$ of the type $\ty{T}$ in finite steps,
such that $v$ may only reach locations described by the qualifier $q$,
and all store write effects are limited to the locations described by $\flt$.
\end{theorem}

In particular, for terms closed under the empty context, we obtain a formulation
of type safety that requires no assumption about contexts or stores:
\emph{well-typed terms do not get stuck}.

\begin{corollary}[Type Safety]
If $\:\varnothing \ts t : \ty[\varnothing]{T}$, then $t$ evaluates to a value $v$
of type $T$ in finitely many steps.
\end{corollary}

The \emph{Preservation of Separation} property
\cite{DBLP:journals/pacmpl/WeiBJBR24} also follows from our analysis of store
effects: the evaluation of two well-typed terms with disjoint observations will
observe and update disjoint portions of the store. %
\section{Additional Algorithm Definitions and Metatheory} \label{app:algo}

We present the additional subtyping rules for type polymorphism in \Cref{fig:subcheckmore},
the definitions of polarized substitution in \Cref{fig:avoidmore},
and additional definitions for bidirectional typing in \Cref{fig:bidir-suppl}.
We detail the metatheory of \algolang as follows.

\begin{figure}[t]
\footnotesize\makeatother \typicallabel{av$_1$-no}
\setlength{\afterruleskip}{\smallskipamount}

\judgement{Subtype Checking (Continue)}{\quad} \\[-1em]
\begin{tabular}{@{}p{\linewidth}@{}}
  \infrule[\ruledef{suall}{su$_3$-all}]{
    \qfresh{} \notin q \andalso \theta = [q/f]
  }{
    \TAll{f}{X}{\ty[p]{T}}{Q} \,\subt[1]^{q}\,
    \algout{\TAll{f}{X}{\ty[p]{T\!\theta}}{Q\!\theta}}
  }
\end{tabular} \\[-1ex]
\caption{Additional subtype checking rules for type polymorphism.
Outputs are marked in \algout{red}.}
\label{fig:subcheckmore}
\end{figure}

\begin{figure}[t]
\footnotesize\makeatother \typicallabel{av$_1$-no}
\setlength{\afterruleskip}{\smallskipamount}

\judgement{Polarized Substitution}{
    \BOX{\strut\algout{T} := T [q/^+x]}
  \ \BOX{\strut\algout{T} := T [q/^-x]}}\\[-1em]
\begin{tabular}{@{}p{.5\linewidth}@{}p{.5\linewidth}@{}}
  \infrule[\ruledef{posref}{ps$^+$-ref}]{
    \algout{T_1} := T [q/^-y] \andalso
    {p_1} = p \setminus y \\
    \algout{U_1} := U [q/^+y] \andalso
    {r_1} = r [q/y] \\
    {V} = \TSRef{h}{\ty[p_1]{T_1}}{\ty[r_1]{U_1}}
  }{
    \algout{V} := \TSRef{h}{\ty[p]{T}}{\ty[r]{U}} [q/^+y]
  }
  &
  \infrule[\ruledef{negref}{ps$^-$-ref}]{
    \algout{T_1} := T [q/^+y] \andalso
    {p_1} = p [q/y] \\
    \algout{U_1} := U [q/^-y] \andalso
    {r_1} = r \setminus y \\
    {V} = \TSRef{h}{\ty[p_1]{T_1}}{\ty[r_1]{U_1}}
  }{
    \algout{V} := \TSRef{h}{\ty[p]{T}}{\ty[r]{U}} [q/^-y]
  }
  \\
  \infrule[\ruledef{posfun}{ps$^+$-fun}]{
    \algout{T_1} := T [q/^-y] \andalso
    {p_1} = p \setminus y \\
    \algout{U_1} := U [q/^+y] \andalso
    {r_1} = r [q/y] \\
    {V} = \TFun{f}{x}{\ty[p_1]{T_1}}{\ty[r_1]{U_1}}
  }{
    \algout{V} := \TFun{f}{x}{\ty[p]{T}}{\ty[r]{U}} [q/^+y]
  }
  &
  \infrule[\ruledef{negfun}{ps$^-$-fun}]{
    \algout{T_1} := T [q/^+y] \andalso
    {p_1} = p [q/y] \\
    \algout{U_1} := U [q/^-y] \andalso
    {r_1} = r \setminus y \\
    {V} = \TFun{f}{x}{\ty[p_1]{T_1}}{\ty[r_1]{U_1}}
  }{
    \algout{V} := \TFun{f}{x}{\ty[p]{T}}{\ty[r]{U}} [q/^-y]
  }
  \\
  \infrule[\ruledef{posall}{ps$^+$-all}]{
    \algout{T_1} := T [q/^-y] \andalso
    {p_1} = p \setminus y \\
    \algout{U_1} := U [q/^+y] \andalso
    {r_1} = r [q/y] \\
    {V} = \TAll{f}{X}{\ty[p_1]{T_1}}{\ty[r_1]{U_1}}
  }{
    \algout{V} := \TAll{f}{X}{\ty[p]{T}}{\ty[r]{U}} [q/^+y]
  }
  &
  \infrule[\ruledef{negall}{ps$^-$-all}]{
    \algout{T_1} := T [q/^+y] \andalso
    {p_1} = p [q/y] \\
    \algout{U_1} := U [q/^-y] \andalso
    {r_1} = r \setminus y \\
    {V} = \TAll{f}{X}{\ty[p_1]{T_1}}{\ty[r_1]{U_1}}
  }{
    \algout{V} := \TAll{f}{X}{\ty[p]{T}}{\ty[r]{U}} [q/^-y]
  }
\end{tabular} \\[-1ex]
\caption{Polarized substitution in \algolang.
Outputs are marked in \algout{red}.}
\label{fig:avoidmore}
\vspace{-3ex}
\end{figure} %
\begin{figure}[tp]
\footnotesize\makeatother \typicallabel{tiapp}
\setlength{\afterruleskip}{\medskipamount}

\judgement{Bidirectional Typing (Continue)}{
    \BOX{\strut\G[\algout{\flt}] \ts t \synth \algout{Q} \ots{\G}}}\\[-1em]
\begin{tabular}{@{}p{\linewidth}@{}}
  \infrule[\ruledef{tctabs}{tc$_2$-tabs}]{
    \pcx[\algout{\flt}]{\G, \ctxvar{f}{\TTop[\qhole]}, \ctxtyp{X}{\ty[p]{T}}}
      \ts t \chek \ty[q]{U} \ots{\G',\ctxvar{f}{\TTop[q,\qhole]},..} \\
    r = (p,q,\flt)\setminus\{\qfresh{f,x}\}
  }{
    \G[\algout{r}] \ts \ttlam{f}{X}{t} \chek \TAll[\synth\algout{r}]{f}{x}{\ty[p]{T}}{\ty[q]{U}} \ots{\G'}
  }
  \\
  \infrule[\ruledef{titabs}{ti$_7$-tabs}]{
    \G,\ctxvar{f}{\TTop[\qfresh]} \ts \ty[p]{T}
    \andalso f \notin^+ T \andalso f \notin p \\
    \pcx[\algout{\flt}]{\G, \ctxvar{f}{\TTop[\qhole]}, \ctxtyp{X}{\ty[p]{T}}}
      \ts t \synth \algout{\ty[q]{U}} \ots{\G', \ctxvar{f}{\TTop[q,\qhole]},..} \\
    \algout{V}\!:=\!U[f/^+f] \andalso
    {r} = (p,q,\flt)\setminus\{\qfresh{f,x}\} %
  }{
    \G[\algout{r}] \ts \ttlam[{\ty[p]{T}}]{f}{X}{t} \synth \algout{\TAll[r]{f}{X}{\ty[p]{T}}{\ty[q]{V}}} \ots{\G'}
  }
  \\
  \infrule[\ruledef{titapp}{ti$_{10}$-tapp}]{
    \G[\algout{\flt_1}] \ts t \synth\qexp
      \algout{\TAll[q]{f}{X}{\ty[p]{T}}{Q}} \ots{\G_1} \\
    \qfresh \notin q \lor f \notin T \andalso
    \G_1 \ts \ty[\qfresh]{V} \subt[] \ty[\algout{\qbot}]{T[q/f]} \ots{\G_2} \\
    \G[\algout{\flt_2}]_2 \ts s \subapp[]{q} p \ots{\G_3} \\
    Q \ll_{s/x}\ll_{q/f} \algout{\ty[r]{U}} \phantom{mn}
    \flt = \flt_1,\flt_2,(\qclean{s}),(r\!\setminus\!\{\qfresh{f,x}\})
  }{
    \G[\algout{\flt}] \ts \tapp{t}{\ty[s]{V}} \synth \algout{\ty[r]{U} [\ty[s]{V}/\ty[x]{X},q/f]} \ots{\G_3}
  }
\end{tabular} \\[1em]

\judgement{Algorithmic Application Conformance}{
  \BOX{\G[\algout{\flt}] \ts q \subapp[]{q} q \ots{\G}}} \\[-1em]
\begin{tabular}{@{}p{.4\linewidth}@{}p{.6\linewidth}@{}}
  \infrule[\ruledef{fasub}{fa-sub}]{
    \G \ts s \subq[] p \ots{\G'}
  }{
    \G[\algout{\qbot}] \ts s \subapp[]{q} p \ots{\G'}
  }
  &
  \infrule[\ruledef{fafresh}{fa-fresh}]{
    \G \ts s \overlap q \subq[] \qfresh{p} \ots{\G'} \andalso
    \qhole{}\notin \qsat{s},\qsat{q} \andalso
    {\flt} = \qclean{(s \overlap q)}
  }{
    \G[\algout{\flt}] \ts s \subapp[]{q} \qfresh{p} \ots{\G'}
  }
\end{tabular} \\[-1ex]
\caption{Rules for type abstraction and application in bidirectional typing in \algolang, along with definitions for type exposure and algorithmic application
conformance.
} \label{fig:bidir-suppl}
\vspace{-1em}
\end{figure} 
At a high level, all procedures of \algolang terminate, and they are sound with
respect to the declarative calculus~\maybelang (\Cref{thm:algtop}).
By necessity, \algolang is incomplete.
Since we implement the kernel variant of $F_{<:}$, we cannot accept terms valid
only in the full variant.
Besides, it is shown that qualifier
unification and thus inference have no principal solution (\Cref{thm:nonprincipal}).
Still, \algolang
produces reasonable results: qualifier exposure and thus \rulename{qcheck} are
proved complete (\Cref{thm:qtp_complete}), self unpacking ($\subt[1]$) results
are proved minimal (\Cref{lemma:self-unpack-equiv}), and the
avoidance strategy using the outermost self-reference is better than other
choices (\Cref{thm:minavoid}).
Given the nontrivial nature of sharing and separation reasoning, as
a type-based approach, we prefer termination over completeness.
In \Cref{sec:eval},
we show that \algolang is expressive enough to check meaningful programs.

\subsection{Qualifier Checking and Inference}
\label{sec:algmeta2}

We present the soundness and decidability of qualifier checking and inference, then show the completeness of qualifier checking, and explain why completeness is impossible for inference.
We start by analyzing qualifier checking \rulename{qcheck}.

\begin{theorem}[Decidability and Soundness of Qualifier Checking]
  If $\ \WF{\G}$, then it can be decided in finite steps whether there is $\G \ts p < q$,
  and if so, there is $\ \G \ts p <: q$.
\end{theorem}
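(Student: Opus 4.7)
The plan is to decompose the theorem into three orthogonal pieces: (i) termination and decidability of qualifier exposure $\Uparrow$, (ii) a soundness invariant for exposure saying that if $\G \ts q \Uparrow q'$ then $\G \ts q' <: q$ declaratively, and (iii) the final assembly into \rulename{qcheck}. The well-formedness checks $\G \ts p,q$ (and, implicitly, $\G~\text{ok}$) are straightforwardly decidable by structural inspection, so the real work lies in analyzing $\Uparrow$.

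For decidability, I would give exposure a termination measure based on the finite multiset $\mathrm{dom}(\G) \setminus q$ together with the partition into ``self-reference phase'' and ``variable phase.'' Rule \rulename{qeself} only applies when $p \nsubseteq \starred{,\qhole{},q}$, so it strictly enlarges $q$ by at least one element from $\mathrm{dom}(\G)$; the prescribed reverse traversal of $\G$ guarantees each self-reference entry is consulted at most once. Dually, \rulename{qevar} requires $x \notin q$ and therefore also strictly enlarges $q$; the forward traversal ensures each variable entry is visited once. Thus both phases terminate after at most $|\G|$ steps, and \rulename{qejoin} sequences them into a terminating procedure. Once $q'$ is computed, the condition $p \subq q'$ is a decidable finite-set check.

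For soundness I would first prove the following \emph{exposure lemma} by induction on the derivation of $\Uparrow$ (with a mutual induction on $\Uparrow_1$ and $\Uparrow_2$): whenever $\G \ts q \Uparrow q'$ under a well-formed $\G$, then $\G \ts q' <: q$. The base cases \rulename{qesbot} and \rulename{qevbot} hold by reflexivity of $<:$ via \rulename{qsub}. For \rulename{qeself}, the hypothesis $f : \ty[p]{F} \in \G$ licenses $\G \ts p \setminus\{\starred{},\qhole{}\} <: f$ by \rulename{qself}; combined with $f \in q$ and $\G \ts q <: q$, an application of \rulename{qcong} and \rulename{qtrans} yields $\G \ts q, p\setminus\{\starred{},\qhole{}\} <: q$, which composes (again by \rulename{qtrans}) with the IH. The \rulename{qevar} case is symmetric, using \rulename{qvar} (or \rulename{qtvar}) on the recorded $x : \ty[p]{T}$ entry, where the side condition $p \subq q\setminus\{\starred{},\qhole{}\}$ provides $\G \ts x <: q$ via \rulename{qsub} and \rulename{qtrans}. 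With the lemma in hand, soundness of \rulename{qcheck} is immediate: from $p \subq q'$ we get $\G \ts p <: q'$ by \rulename{qsub}, and then $\G \ts p <: q$ by \rulename{qtrans} with the exposure lemma.

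The main obstacle I anticipate is not any single inductive case but rather the bookkeeping that justifies the ordering constraints on $\Uparrow_1$ and $\Uparrow_2$: one needs a side lemma that, for well-formed $\G$, the reverse-then-forward traversal exhausts all derivable $\Uparrow$-steps, so that the algorithm's fixed schedule is no weaker than an unconstrained search. A secondary subtlety is that \rulename{qself} declaratively discards $\starred{}$ and $\qhole{}$ from the bound, so each inductive step of the exposure lemma must carefully match the algorithm's $p\setminus\{\starred{},\qhole{}\}$ truncation with the declarative premise; this requires maintaining, as an invariant throughout both induction arguments, that the current $q$ and all reachability qualifiers consulted from $\G$ are well-formed in the sense of \rulename{cqual}.
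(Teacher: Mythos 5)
Your proposal is correct and matches the paper's own argument: the paper likewise derives decidability from the fact that $\Uparrow_1$ and $\Uparrow_2$ visit each context entry at most once, and soundness from the observation that \rulelab{qeself} and \rulelab{qevar} each produce a subqualifier of $q$ (justified by \rulelab{qself}, \rulelab{qvar}, \rulelab{qtvar}), so that $p \subq q'$ yields $\G \ts p <: q$ via \rulelab{qsub} and \rulelab{qtrans}. The one ``obstacle'' you flag --- showing the fixed traversal schedule exhausts all derivable exposure steps --- is only needed for the separate \emph{completeness} theorem, not for decidability and soundness, so it is not actually an obligation here.
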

\begin{proof}
  Follows from the decidability and soundness of $\Uparrow_1$ and $\Uparrow_2$,
  both terminating after visiting each $\G$ entry at most once.
  In $\Uparrow_1$, \rulename{qeself} produces a subqualifier of $q$ via \rulename{qself}.
  In $\Uparrow_2$, \rulename{qevar} produces a subqualifier of $q$ via \rulename{qvar}
  or \rulename{qtvar}.
  Together, they produce a subqualifier of $q$ containing $p$. 
  Thus, by \rulename{qsub} and \rulename{qtrans}, we have $\G \ts p <: q$.
\end{proof}

\begin{theorem}[Completeness of Qualifier Checking] \label{thm:qtp_complete}
  If $\ \G\ts p <: q$, and $\ \WF{\G}$, then $\ \G \ts p < q$.
\end{theorem}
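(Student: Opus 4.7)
The plan is to induct on the declarative derivation of $\G \ts p <: q$, reducing each case to the claim that $p \subseteq q^*$, where $q^*$ denotes the full exposure of $q$ under $\G$ via $\Uparrow$. Once this containment is established, rule \rulename{qcheck} immediately concludes $\G \ts p < q$; the well-formedness side conditions required by \rulename{qcheck} come from the assumed well-formedness of $\G$ and of $p, q$ implicit in the declarative derivation.

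Before the induction, I would establish four structural lemmas about exposure in a well-formed $\G$. First, \emph{Superset}: $q \subseteq q^*$, immediate from the bases \rulename{qesbot} and \rulename{qevbot}. Second, \emph{Soundness}: $\G \ts q^* <: q$, by induction on the $\Uparrow_1$ and $\Uparrow_2$ derivations, where each extension step is justified by \rulename{qself} for $\Uparrow_1$ and by \rulename{qvar} or \rulename{qtvar} for $\Uparrow_2$, with \rulename{qtrans} and \rulename{qcong} composing the steps. Third, \emph{Monotonicity}: $A \subseteq B$ implies $A^* \subseteq B^*$, by showing that every firing of \rulename{qeself} or \rulename{qevar} on $A$ remains applicable on $B$. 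Fourth, \emph{Closure}: $(q^*)^* = q^*$, reflecting that exposure already saturates the qualifier.

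With these in hand, the main induction dispatches the six subqualifying rules. \rulename{qsub} is immediate from Superset. For \rulename{qvar}, the target $q$ equals the context reachability of $x$, and the side conditions guarantee $q$ is free of $\starred{}$ and $\qhole{}$, so \rulename{qevar} fires and places $x$ into $q^*$; \rulename{qtvar} is the type-variable analogue. For \rulename{qself} with target $\{f\}$ and $f : F^{q'} \in \G$, rule \rulename{qeself} fires on $f$ and adds $q' \setminus \{\starred{}, \qhole{}\}$, matching exactly the declaratively derived subqualifier. \rulename{qcong} composes two IH conclusions via Monotonicity, since $q_i^* \subseteq (q_1 \cup q_2)^*$. \rulename{qtrans} chains IH conclusions by $p \subseteq q^* \subseteq (r^*)^* = r^*$, combining Monotonicity and Closure.

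The hard part will be Closure together with the single-pass structure of \rulename{qejoin}, which saturates $\Uparrow_1$ first, then $\Uparrow_2$, but never revisits $\Uparrow_1$. I must argue, using the fact that a well-formed $\G$ orders each entry's reachability by previously introduced variables, that any $x$ admitted by $\Uparrow_2$ has reachability already contained in the $\Uparrow_1$-saturated qualifier, so re-running $\Uparrow_1$ on the expanded set yields no new elements. This load-bearing invariant is what makes the two-phase algorithm complete; mechanizing it in Lean while tracking the exclusion of $\starred{}$ and $\qhole{}$ at each step and aligning the traversal with the context ordering will be the principal technical burden.
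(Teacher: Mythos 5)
Your proposal is correct and follows essentially the same route as the paper's proof: induction on the derivation of $\G\ts p <: q$, with \rulename{qtrans} as the key case and the idempotence of the two-phase exposure (your Closure lemma, justified by the context-ordering invariant via induction on $\G$) as the load-bearing fact. The paper's proof is only a sketch of exactly this argument; your version spells out the auxiliary lemmas (your Soundness lemma is not actually needed for completeness, but that is harmless).
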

\begin{proof}
  By induction on $\G \ts p <: q$.
  The key case \rulename{qtrans} shows that applying $\qexp[1]^*$
  and $\qexp[2]^*$ suffices, and further $\qexp[1]$ applications do
  not extend the result. This follows by induction on $\G$.
\end{proof}

Qualifier inference \rulename{qinfer} involves both qualifier exposure and
unification. Since exposure properties have been established in
qualifier checking, we establish results for qualifier unification.

\begin{lemma}[Decidability and Soundness of Unification]
  If $\ \G\ts q$ and $\ \WF{\G}$, then given $p$, it is decidable in finite
  steps whether there exists $\ \G' $ such that
  $\ \G \ts p \qunif q \ots{\G'}$.
  Moreover, if so, then $\ \G \sqsubseteq \G'$ and $\ \G' \ts p <: q$.
\end{lemma}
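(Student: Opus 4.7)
The plan is to prove the three claims -- termination, output context subsumption $\G \sqsubseteq \algout{\G'}$, and the subqualifying conclusion $\algout{\G'} \ts p <: q$ -- in that order, since the latter two need a well-defined derivation to induct over. For termination, I would define the measure $\mu(p, q, \G)$ to be the multiset of context positions (in $\G$) of variables appearing in $p \setminus q$, ordered by the multiset extension of the natural order. Then I would check that \rulename{quself} strictly shrinks $\mu$ because it outright removes the chosen $x$ from $p$, while \rulename{quvar} replaces $x$ by its recorded reachability $r$, which by well-formedness of $\G$ can only mention variables introduced strictly earlier than $x$. Hence the rightmost position in $p \setminus q$ either disappears or strictly decreases, with \rulename{qubot} as the terminating base case ($p \subq q$). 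At each step only finitely many candidate $x$ and self-references $f$ need to be considered, and the ``introduced after'' side condition on $\G$ is obviously decidable, so the whole procedure is decidable.

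Next, I would prove $\G \sqsubseteq \algout{\G'}$ and $\algout{\G'} \ts p <: q$ together by induction on the unification derivation. The \rulename{qubot} case gives $\algout{\G'} = \G$, hence reflexivity, together with \rulename{qsub} applied to $p \subq q$. The \rulename{quvar} case inherits the subsumption directly from the induction hypothesis; for the subqualifying conclusion, context subsumption preserves the binding of $x$, so the premise $\starred{},\qhole{} \notin r$ lets us invoke \rulename{qvar} or \rulename{qtvar} in $\algout{\G'}$ to obtain $\algout{\G'} \ts x <: r$, and combining with the IH $\algout{\G'} \ts p[r/x] <: q$ via \rulename{qcong} and \rulename{qtrans} yields $\algout{\G'} \ts p <: q$. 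In \rulename{quself}, the final step partially instantiates the hole $\qhole{f}$ by adding $\{x\}$, which is exactly a \rulename{cs-self} subsumption step on a well-scoped, star-free qualifier, so chaining with the IH gives $\G \sqsubseteq \algout{\G'}$. Because this instantiation ensures that $f$'s qualifier in $\algout{\G'}$ contains $x$, \rulename{qself} yields $\algout{\G'} \ts x <: f$, and since $f \in q$, \rulename{qsub} and \rulename{qtrans} give $\algout{\G'} \ts x <: q$. A final \rulename{qcong} step combines this with the IH $\algout{\G'} \ts p \setminus \{x\} <: q$ to conclude $\algout{\G'} \ts p <: q$.

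The main obstacle will be justifying the strict decrease of the termination measure in \rulename{quvar}: this rests on the invariant that, in a well-formed context, a variable's recorded qualifier only mentions strictly earlier variables (and contains no holes or $\starred{}$, which is already enforced by the rule's premise), so that $p[r/x]$ cannot reintroduce positions at or beyond the chosen $x$. A secondary subtlety is carrying the hole through the \rulename{quself} case: the substitution $\algout{\G'_0}[\{\qhole{f},x\}/\qhole{f}]$ must merge the new element $x$ into $f$'s qualifier while keeping the hole available for further instantiations downstream, so I would need to verify that the context-substitution operation is defined in this ``extend, do not erase'' manner, and that it commutes with the earlier derivation steps so that both the \rulename{cs-self} step and the final \rulename{qself} application remain legitimate on $\algout{\G'}$.
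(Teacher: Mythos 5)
Your proposal is correct and follows essentially the same route as the paper's (much terser) proof: termination from the well-formedness invariant that recorded qualifiers mention only earlier context entries, and soundness by induction with \rulename{quself} justified via \rulename{qself} (plus a \rulename{cs-self} instantiation step) and \rulename{quvar} via \rulename{qvar}/\rulename{qtvar}, lifted along context subsumption. Your multiset measure is just a more explicit rendering of the paper's ``each entry is visited at most once'' termination argument, and the subtleties you flag (monotonicity of subqualifying under hole instantiation) are exactly the auxiliary preservation lemmas the paper acknowledges but omits.
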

\begin{proof}
  The procedure terminates, visiting each entry in $\G$ at most once.

  Soundness follows by induction on $\G$.
  By the induction hypothesis and \Cref{thm:ctxgrow_on_hastype},
  \rulename{quself} is justified by \rulename{qself},
  and \rulename{quvar} by \rulename{qvar} or \rulename{qtvar}.
\end{proof}

\begin{theorem}[Decidability and Soundness of Qualifier Inference]
  If $\ \G\ts q$ and $\ \WF{\G}$, then it can be decided in finite steps whether there exists $\G'$
  such that $\ \G\ts p < q \dashv \algout{\G'}$, and if so, $\ \G \sqsubseteq \G'$, and $\ \G'\ts p <: q$.
\end{theorem}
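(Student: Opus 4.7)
The plan is to treat \rulename{qinfer} compositionally, combining the previously established decidability and soundness results for qualifier exposure ($\Uparrow$) and qualifier unification ($\Downarrow$). Since \rulename{qinfer} is the only rule deriving $\G\ts p < q \dashv \algout{\G'}$, both decidability and soundness reduce to reasoning about its two premises.

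For \textbf{decidability}: first apply $\Uparrow$ to $q$, which terminates in finite steps (this was the key ingredient in the decidability proof for \rulename{qcheck}, since each context entry is visited at most once by $\Uparrow_1$ and $\Uparrow_2$) and produces a unique $q'$. Then attempt unification $\G\ts p \Downarrow q' \dashv \algout{\G'}$; by the preceding lemma, it is decidable in finite steps whether a suitable $\G'$ exists. Composing these two decision procedures yields a terminating decision for $\G\ts p < q \dashv \algout{\G'}$.

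For \textbf{soundness}: suppose the judgment holds. Inverting \rulename{qinfer} gives $\G\ts q \Uparrow \algout{q'}$ and $\G\ts p \Downarrow q' \dashv \algout{\G'}$. The unification lemma directly yields $\G \sqsubseteq \G'$ and $\G'\ts p <: q'$. From the soundness argument for exposure (as in the decidability/soundness proof for \rulename{qcheck}), we obtain $\G\ts q' <: q$ in the original context. The remaining step is to transport this judgment along the context subsumption $\G\sqsubseteq\G'$, giving $\G'\ts q' <: q$; then \rulename{qtrans} combines the two derivations to yield $\G'\ts p <: q$.

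The only substantive obstacle is the transport step: we need a context-subsumption preservation lemma for subqualifying, analogous to \Cref{thm:ctxgrow_on_hastype} for typing. This is proved by straightforward induction on the subqualifying derivation; the interesting cases are \rulename{qvar}, \rulename{qtvar}, and \rulename{qself}, and for each of them the side-conditions (notably the absence of $\qhole{}$ in the relevant qualifier in the first two rules, and the monotone growth of $f$'s qualifier under hole instantiation in the third) are respected by context subsumption as defined in \Cref{fig:alggrow}. Once this lemma is in place, the composition of the two decision procedures and the three-step chain (unification soundness, exposure soundness, subsumption transport) close the proof.
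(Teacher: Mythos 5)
Your proposal is correct and follows essentially the same route as the paper, which proves this theorem exactly by combining the decidability and soundness results for qualifier exposure and unification. The transport step you flag (a context-subsumption preservation lemma for subqualifying) is indeed needed and is acknowledged by the paper as one of the ``analogous preservation lemmas for subqualifying and subtyping'' accompanying \Cref{thm:ctxgrow_on_hastype}.
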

\begin{proof}
  Follows by combining the decidability and soundness results for qualifier
  exposure and unification, established in the preceding lemmas.
\end{proof}

Since qualifiers form a preorder under subqualifying, qualifier inference cannot
guarantee completeness. 
As analyzed in \Cref{sec4:qualifier}, unification hides $p \setminus q'$ via self-references, computing set differences rather than true qualifier differences.
Consequently, our unification is incomplete.
Moreover, qualifier differences lack principal solutions, ruling out any complete unification algorithm.

\newtheorem{formalexample}[theorem]{Example}

\begin{formalexample}[No Principal Qualifier Difference]
  \label{thm:nonprincipal}
  Given a well-formed context $\G = a\!:\!(\TRef{\TBase}{})^{\qfresh},\,b\!:\!(\TRef{\TBase}{})^{\qfresh},\,c\!:\!\ty[a,b]{(\TRef{\TBase}{})}$, both $b$ and $c$ are valid qualifier differences
  between $c$ and $a$.
  That is, both $b$ and $c$ satisfy as a minimum $q$ such that $\G\ts c <: a, q$.
  Moreover, neither $\G\ts b <: c$ nor $\G\ts c <: b$ holds, so they are
  incomparable. This demonstrates the absence of a principal solution.
\end{formalexample}

Despite that qualifier difference is non-principal, $q'$ is a maximally exposed
qualifier that ensures completeness of \rulename{qcheck}, \ie, \Cref{thm:qtp_complete}.
Thus, $p \setminus q'$ still captures a minimal difference within the range of $p$,
even though other incomparable solutions may exist.

\subsection{Subtype Checking}

Subtype checking \rulename{sajoin} proceeds in two phases, and here we start by
discussing the first stage, self unpacking.

\begin{lemma}[Equivalence of Self Unpacking]\label{lemma:self-unpack-equiv}
  If $\ \G\ts \ty[q]{T}$ and $\ \WF{\G}$, $T'$ can be computed in finite steps
  such that $\ T \subt[1]^{q} \algout{T'}$.
  If so, then $\ \G\ts \ty[q]{T} \subt \ty[q]{T'}$ and $\G\ts \ty[q]{T'} \subt \ty[q]{T}$.
\end{lemma}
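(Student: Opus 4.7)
The proof splits into two parts: (i) showing $T'$ is computable in finite steps, and (ii) establishing both directions of self-aware subtyping.

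For (i), the three rules \rulename{sufun}, \rulename{suall}, \rulename{subot} are syntax-directed under the index convention and collectively exhaust all shapes of $T$: if $T$ is a function type and $\starred{} \notin q$, \rulename{sufun} applies with $T'$ computed by a single capture-avoiding substitution $[q/f]$; if $T$ is a universal type and $\starred{} \notin q$, \rulename{suall} applies symmetrically; otherwise, \rulename{subot} returns $T' = T$. Each branch terminates in a single pass over $T$, so $T'$ always exists.

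For (ii), proceed by case analysis on the rule that yields $T'$. The \rulename{subot} case is immediate from \rulename{srefl} in both directions with zero increment. For \rulename{sufun}, write $T = f(x : \ty[p]{U}) \to \ty[r]{V}$ and $T' = f(x : \ty[p\theta]{U\theta}) \to \ty[r\theta]{V\theta}$ with $\theta = [q/f]$ and $\starred{} \notin q$. We invoke \rulename{sfun} with initial reachability $o = q$ and overall increment $\delta = \qbot$, which extends the context with $f : \ty[q]{\TTop}$. The key observation is that under this extension $f$ and $q$ are mutually subqualifiers: $q <: f$ by \rulename{qself}, and $f <: q$ by \rulename{qvar} (applicable because $\starred{} \notin q$ and $q$ has no holes by well-formedness of $\ty[q]{T}$ via \rulename{cqual}). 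This interchangeability discharges the contravariant-domain premise on $U$ vs.~$U\theta$ and the covariant-codomain premise on $V$ vs.~$V\theta$ with $\delta_1 = \delta_2 = \qbot$ in both directions, as well as the optional qualifier premises on $p$ vs.~$p\theta$ and $r$ vs.~$r\theta$. The case \rulename{suall} is handled analogously via \rulename{sall}.

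The main technical obstacle is a substitution lemma: under the context extension $\G, f : \ty[q]{\TTop}$ with $\starred{} \notin q$, substituting $[q/f]$ preserves subqualifying and self-aware subtyping in both directions for any well-formed qualifier or type. The proof is by mutual induction on the derivations, commuting $[q/f]$ with the structural and contextual rules -- notably \rulename{qvar} and \rulename{qself} for subqualifying and \rulename{sfun} and \rulename{sall} for subtyping. The polarity constraints on well-formed function types from \rulename{cfun} (in particular $f \notin^+ U$ and $f \notin^- V$) ensure that $[q/f]$ stays within the positions it was designed for, so the substitution does not violate the self-reference occurrence invariant; once this lemma is available, both non-reflexive cases above close cleanly.
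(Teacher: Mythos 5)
Your proof is correct and follows essentially the same route as the paper's: a structural induction on $T$ in which the two directions are discharged by \rulename{qvar} ($f <: q$) and \rulename{qself} ($q <: f$) under the extension $f : \ty[q]{\TTop}$ --- exactly your observation that $f$ and $q$ are mutually subqualifiers when $\starred{}\notin q$ and $q$ is hole-free. The only nitpicks are that your ``substitution lemma'' is really proved by induction on the structure of the type rather than ``on the derivations,'' and that \rulename{sufun} as stated leaves the domain qualifier $p$ unsubstituted, but neither point affects the argument.
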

\begin{proof}
  By induction on $T$. The judgment $T \subt T'$ follows by the
  induction hypothesis and \rulename{qvar}, while the reverse direction
  $T' \subt T$ follows by the induction hypothesis and \rulename{qself}.
\end{proof}

As a preprocessing step,
\Cref{lemma:self-unpack-equiv} ensures soundness
and type equivalence between the unpacked type and the original,
inducing no additional incompleteness.
The decidability and soundness of subtype checking then reduce to
those of the recursive subtype checking phase.

\begin{theorem}[Decidability and Soundness of Subtype Checking]
  If $\ \G\ts \ty[q]{T}$, $\ \G\ts T'$, and $\ \WF{\G}$, then it can be decided in finite steps whether there exists such $q'$ and $\G'$ that $\ \G\ts \ty[q]{T} \subt[] \ty[q']{T'} \dashv \algout{\G'}$, and if so, $\ \G\ts q'$, $\ \qfresh\notin q'$, $\ \G \sqsubseteq \G'$, and $\G'\ts \ty[q]{T} \subt \ty[q']{T'}$.
\end{theorem}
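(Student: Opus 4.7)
The plan is to exploit the two-phase structure of subtype checking in \rulename{sajoin}: self unpacking followed by recursive checking. By \Cref{lemma:self-unpack-equiv}, the first phase terminates in finite steps and produces a type $T''$ related to $T$ by declarative subtyping in both directions with zero increment under initial reachability $q$. It therefore suffices to establish decidability and soundness for the recursive phase $\asubtyp[2]{q}{\delta}$ starting from $T''$, and then glue the two halves via \rulename{strans} on the declarative side.

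For the recursive phase, I would proceed by strong induction on the combined size of the two input types. The base cases \rulename{sabase}, \rulename{satop}, and \rulename{satrefl} discharge directly via \rulename{srefl} and \rulename{stop} with $\delta = \qbot$ and $\G' = \G$. The case \rulename{sattrans} looks up the bound of a type variable and recurses, justified by \rulename{stvar} composed with \rulename{strans}; termination is guaranteed since the context is finite and well-formed, so bound-unfolding cannot cycle. The case \rulename{saref} applies the induction hypothesis twice for invariant referent checking, interleaved with two invocations of the preceding theorem on qualifier inference; context subsumption composes transitively across the four sequential premises, and the empty increment follows from the use of \rulename{sref}.

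The main obstacle lies in \rulename{safun} (and symmetrically \rulename{saall}). Unlike the declarative \rulename{sfun}, which extends the context with $f\!:\!\ty[o,\delta]{\TTop}$ where $\delta$ is given upfront, the algorithm must extend with $f\!:\!\ty[o,\qhole{}]{\TTop}$ and only later recover $\delta$ by reading back the instantiated hole, as $\delta_0 := o' \setminus o$, together with $\delta_1$ and $\delta_2$ produced by the sub-derivations. To close the declarative premises I plan to chain \Cref{thm:ctxgrow_on_hastype} across each intermediate context arising from the domain, codomain, and subqualifying steps, lifting every sub-derivation to hold under the final $\G'$. The critical invariant to maintain is that the extracted $\delta := (\delta_0,\delta_1,\delta_2) \setminus \{\starred{},f,x\}$ is well-formed in the original $\G$, contains no $\starred{}$, and upper-bounds $\delta_1$ and $\delta_2$ as required by \rulename{sfun}; this follows from the fact that qualifier unification only instantiates holes with in-scope, non-fresh qualifiers, together with the explicit set difference taken in the algorithmic conclusion. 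A subtle point is that the substitution $[\{x,\delta_1\}/x]$ in the domain premise must line up between the algorithm and the declarative rule, which I expect to handle by a standard commutation lemma between context extension and qualifier substitution.

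Decidability is then immediate from this inductive structure: every recursive invocation is on structurally smaller types, each qualifier inference call terminates by the preceding theorem, and the self-unpacking substitution strictly decreases the number of outermost self-references without increasing type size. Combining decidability and soundness of the two phases yields the full statement, with $\starred{}\notin\delta$ enforced syntactically by the set difference and $\G \sqsubseteq \G'$ following by transitivity of context subsumption across all sub-derivations.
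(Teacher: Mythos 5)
Your overall decomposition matches the paper's: self unpacking via \Cref{lemma:self-unpack-equiv}, gluing the two phases with \rulename{strans}, and soundness of the recursive phase by induction on the checking function using the hole-manipulation lemmas (the variants of \Cref{thm:ctxgrow_on_hastype,thm:gstighten_hastype} established for subqualifying and subtyping). That part of the plan is sound and is essentially what the paper does.

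The gap is in your decidability argument. You propose ``strong induction on the combined size of the two input types'' and later assert that ``every recursive invocation is on structurally smaller types.'' This fails at \rulename{sattrans}: the rule replaces the type variable $X$ by its bound $U$ looked up in the context and recurses on $U$ against $T$, and $U$ may be structurally much larger than $X$. Your parenthetical appeal to ``bound-unfolding cannot cycle'' is true but does not repair the stated measure: acyclicity only bounds the chain of successive lookups, whereas after unfolding the algorithm descends into $U$ via \rulename{safun}/\rulename{saall}, may encounter further type variables there, and so on, so a single well-founded measure covering the whole mutual recursion is needed. The paper resolves this in the standard kernel-\Fsub{} way, by defining the size of a type variable to be determined by the size of its bound in the (well-formed, hence acyclic) context, under which the unfolding in \rulename{sattrans} does strictly decrease the measure. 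With that repair, the rest of your argument goes through as written.
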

\begin{proof}
  The result follows by applying the decidability and soundness of recursive
  subtype checking after invoking \Cref{lemma:self-unpack-equiv}.
  Soundness of combining the two steps relies on \rulename{strans}.

  Decidability is shown by measuring the size of $T$ and $T'$, with type variable size defined by its bound as a standard approach in kernel $F_{<:}$ decidability proofs (full definition omitted for brevity).

  Soundness is proven by induction on the checking function, using lemmas for
  manipulating qualifier holes, variants of \Cref{thm:ctxgrow_on_hastype,thm:gstighten_hastype}
  established for subqualifying and subtyping.
\end{proof}

As expected, subtype checking is incomplete due to the incompleteness of qualifier inference and the delta between full and kernel $F_{<:}$.
The declarative rules \rulename{srefl} and \rulename{strans} are also absent from \algolang.
While it's unclear if their omission contributes to incompleteness, we believe it does not.

\subsection{Avoidance.}

We show that the avoidance procedure preserves subtyping invariants.
We omit listing the lemmas for polarized substitution and refer readers to our artifact for details.

\begin{theorem}[Decidability and Soundness of Avoidance] \label{thm:avoid}
If $\ \G\ts T$, and $\ \G\ts q$, then
it can be decided in finite steps whether there are such $q', T'$ that
$\ \ty[q]{T} \ll_{z} \algout{\ty[q']{T'}}$ for a given variable $z$, and if so, there are
(1) $\G\ts \ty[q]{T} \subt \ty[q']{T'}$,
(2)~$q' \subs q,z$,
(3)~$z \notin T'$,
(4) $\forall x \notin T, x \notin T'$, and
(5) $\forall x \notin^- T, x \notin^- T'$.
\end{theorem}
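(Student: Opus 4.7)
The plan is structural induction on $T$ combined with a case analysis on the avoidance rules in \Cref{fig:avoidance}, which are tried sequentially in the order \rulename{avnone}, \rulename{avfun}, \rulename{avall}.

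For decidability, the side condition $z \notin T$ in \rulename{avnone} is a terminating syntactic check; if it succeeds, we return $(T, \qbot)$. Otherwise, we dispatch on the shape of $T$: if $T$ is a function or universal type, \rulename{avfun} or \rulename{avall} applies and computes the result via polarized substitution, which is itself a terminating syntactic traversal. If $T$ has any other shape and still contains $z$, no rule fires and avoidance fails. This gives a decidable procedure that is functional in $(T, z)$.

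For soundness, I take each case in turn. In \rulename{avnone} all five conclusions are immediate: $T' = T$ and $\delta = \qbot$, so (1) follows from \rulename{srefl} and (2)--(5) are trivial. In \rulename{avfun} and \rulename{avall}, conclusion (2) holds by inspection ($\delta = \{z\}$); conclusion (3) holds because polarized substitution replaces every covariant occurrence of $z$ by $f$ and strips $z$ from every contravariant qualifier; and (4)--(5) reduce to two syntactic lemmas about polarized substitution (mechanized alongside the main proof) stating that $[f/^{\pm}z]$ only removes occurrences of $z$ and at worst introduces occurrences of $f$, without adding any other free variable or any new negative occurrence.

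The crux is conclusion (1) in the \rulename{avfun}/\rulename{avall} cases. Focusing on \rulename{avfun}, I apply \rulename{sfun} with the initial reachability $o := q$ and increment $\delta := z$. Under the extended context $\G, f : \TTop^{q, z}$, rule \rulename{qself} gives $z <: f$, which is the fact that licenses replacing $z$ by $f$. Two supporting lemmas on polarized substitution then discharge the premises of \rulename{sfun}: (i) in contravariant position, $T_0[f/^-z]$ is a subtype of $T_0$ in the extended context with increment $\delta_1 = \qbot$, and (ii) in covariant position, $U_0$ is a subtype of $U_0[f/^+z]$ with increment $\delta_2 \subq \{z, f\}$. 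The qualifier obligation $(p \setminus \{z\}) \cup \delta_1 <: p$ then holds by set inclusion, and the overall bound $\delta_1, \delta_2 \subq \{\delta, f, x\} = \{z, f, x\}$ is satisfied. The case for \rulename{avall} is analogous via \rulename{sall}, with \rulename{suncond} used on type bounds.

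The main obstacle is stating and proving the two polarized substitution subtyping lemmas in sufficient generality. Their statements must thread through nested function and universal types where polarities alternate at each level, and they must carry the hypothesis that the replacement qualifier (here $f$) upper-bounds the replaced variable (here $z$) via \rulename{qself}, together with an accounting of the freshness increment produced on each covariant substitution. Once those lemmas are in place, the outer induction on $T$ is a direct case split on the three avoidance rules, and decidability, soundness, and the four side properties drop out together.
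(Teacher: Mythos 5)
Your proposal is correct and follows essentially the same route as the paper: the paper's proof likewise reduces the theorem to decidability and soundness lemmas for positive and negative polarized substitution, proven by mutual recursive induction on the type, with the outer argument being a case split over the three avoidance rules. Your additional detail---instantiating \rulename{sfun}/\rulename{sall} with $o := q$, $\delta := z$, and using \rulename{qself} under $f:\TTop^{q,z}$ to justify $z <: f$---is exactly the content the paper defers to its artifact.
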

\begin{proof}
  Follows by combining the decidability and soundness lemmas for both positive and negative substitution, which are proven by mutual recursive induction on type $T$.
\end{proof}

Our avoidance algorithm always selects the outermost self-reference.
This strategy is intuitive, and 
we strongly believe it produces minimal results without a formal proof, as illustrated below.
\begin{formalexample}[Minimality of Outermost Avoidance] \label{thm:minavoid}
  Given a well-formed context $\G = x\!:\!(\TRef{\TBase}{})^{\qfresh}$ and a well-formed type
  $T = f(\ty[\qbot]{B}) \to (g(\ty[\qbot]{B}) \to \ty[x]{(\TRef{\TBase}{})})^x$.
  To avoid $x$ inside, our avoidance algorithm uses $f$ and produces
  $T_1 = f(\ty[\qbot]{B}) \to (g(\ty[\qbot]{B}) \to \ty[f]{(\TRef{\TBase}{})})^f$.
  Alternatively, we can use $g$ to avoid the inner $x$, producing
  $T_2 = f(\ty[\qbot]{B}) \to (g(\ty[\qbot]{B}) \to \ty[g]{(\TRef{\TBase}{})})^f$,
  where $f$ is still necessary to avoid the second $x$.
  Moreover, there is $\ \G \ts T_1 <: T_2$, regardless of qualifiers.
\end{formalexample}

\subsection{Type Exposure and Bidirectional Typing.}
We establish the decidability and soundness of type exposure for
dealing with type variables in bounded quantification.

\begin{theorem}[Decidability and Soundness of Type Exposure]
  If $\ \G \ts T$, and $\ \WF{\G}$, then it can be decided in finite steps whether there exists such $T'$ that $\ \G \ts T \Uparrow \algout{T'}$, and if so, $\forall q, \G\ts \ty[q]{T} \subt \ty[q]{T'}$. 
\end{theorem}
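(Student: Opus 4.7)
The plan is to prove decidability by induction on the position in $\G$ of the head type variable of $T$ (if any), and soundness by a straightforward induction on the derivation $\G \ts T \Uparrow T'$. The two exposure rules are effectively syntax-directed: \rulename{tutvar} can only fire when $T$ is a type variable $X$, and \rulename{tubot} serves as the fallback for any other shape of $T$.

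For decidability, when $T$ is not a type variable, \rulename{tubot} applies in one step and returns $T' := T$. When $T = X$, well-formedness of $\G$ guarantees a unique binding $\ty[x]{X} <: \ty[q_0]{T_0} \in \G$, and the bound $T_0$ mentions only entries of $\G$ strictly preceding the binding of $X$. Consequently, the recursive call on $T_0$ operates under a strictly shorter prefix of $\G$, so termination follows by induction on that prefix length, reducing eventually to the base case where the head of the exposed type is no longer a type variable.

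For soundness, induct on the derivation of $\G \ts T \Uparrow T'$. In the \rulename{tubot} case we have $T' = T$, and \rulename{srefl} immediately yields $\G \ts T \subtyp{q}{\qbot} T$ for any $q$. In the \rulename{tutvar} case, $T = X$, $\ty[x]{X} <: \ty[q_0]{T_0} \in \G$, and a sub-derivation $\G \ts T_0 \Uparrow U$ gives $T' = U$. For an arbitrary $q$, rule \rulename{stvar} supplies $\G \ts X \subtyp{q}{\qbot} T_0$, and the induction hypothesis instantiated at $q$ supplies $\G \ts T_0 \subtyp{q}{\qbot} U$. Composing the two via \rulename{strans} with $\delta_1 = \delta_2 = \qbot$ produces the required $\G \ts X \subtyp{q}{\qbot} U$.

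The main subtlety is making sure \rulename{strans} composes cleanly: since $\delta_1 = \qbot$, its second premise requires $\G \ts U \subtyp{q, \qbot}{\delta_2} V$, which collapses to $\G \ts U \subtyp{q}{\delta_2} V$ as supplied by the inductive hypothesis, and the resulting increment $\delta_1, \delta_2 = \qbot$ matches the statement. I do not anticipate any significant obstacle beyond this bookkeeping; the only real appeal to well-formedness is to exclude dangling type variables in the \rulename{tutvar} case, and that is immediate from the hypothesis $\WF{\G}$.
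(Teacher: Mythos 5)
Your proposal is correct and follows essentially the same route as the paper: termination by induction on (the position of the type-variable binding in) the well-formed context $\G$, which prevents a bound from mentioning later-introduced type variables, and soundness by induction on the exposure derivation using \rulename{stvar}, \rulename{srefl}, and \rulename{strans} with empty increments. Your extra bookkeeping on the \rulename{strans} premises is accurate and just makes explicit what the paper leaves implicit.
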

\begin{proof}
  Termination follows by induction on $\G$, as well-formed contexts prevent reintroducing exposed type variables.
  Soundness follows by induction on the type exposure derivation.
\end{proof}

Finally, we show the decidability and soundness of bidirectional typing. 
Following results on qualifier inference and subtype checking,
our bidirectional typing \algolang is naturally incomplete to \maybelang.
Still, we will show that \algolang is expressive enough to check meaningful programs
in \Cref{sec:eval}.
\begin{theorem}[Decidability and Soundness of Bidirectional Typing]
  \label{thm:algtop}
  If $\ \WF{\G}$, then it can be decided in finite steps whether there exist such $\G'$, $\flt$, $T$, $q$ that $\G[\algout{\flt}] \ts t \Rightarrow \algout{\ty[q]{T}} \dashv \algout{\G'}$, and if so, $\ \G \sqsubseteq \G'$, $\ \G\ts \flt$, $\ \qfresh\notin\flt$, and $\ \cx[\flt]{\G'}\ts t : \ty[q]{T}$.
\end{theorem}
\begin{proof}
  Termination follows by induction on the term $t$, where transitions between modes are finite (from full checking  to checking, and then inference) for each $t$. 
  Soundness is proven by mutual induction across all modes, relying on the lemmas discussed in this section. 
\end{proof} 
\newpage
\bibliographystyle{ACM-Reference-Format}
\bibliography{references}

\end{document}